\numberwithin{equation}{section}
\newcommand\caA{{\mathcal A}}
\newcommand\caG{{\mathcal G}}
\newcommand\caI{{\mathcal I}}
\newcommand\caM{{\mathcal M_\theta}}
\newcommand\caS{{\mathcal S}}
\newcommand\caU{{\mathcal U}}
\newcommand\caZ{{\mathcal Z}}
\newcommand\wx{{\widetilde x}}
\newcommand\gone{{ \mathchoice {1\mskip-4mu\mathrm{l} } {1\mskip-4mu\mathrm{l} }{1\mskip-4.5mu\mathrm{l} } {1\mskip-5mu\mathrm{l}} }}
\newcommand\gR{{\mathbb R}}
\newcommand\gK{{\mathbb K}}
\newcommand\gC{{\mathbb C}}
\newcommand\gN{{\mathbb N}}
\newcommand\gZ{{\mathbb Z}}
\newcommand\Omr{\underline{\Omega}_\varepsilon}
\newcommand\algzero{{\mathsf 0}}
\newcommand\algA{{\mathbf A}}
\newcommand\algrA{{\mathbf A}^\bullet}
\newcommand\algB{{\mathbf B}}
\newcommand\algrB{{\mathbf B}^\bullet}
\newcommand\modM{{\boldsymbol M}}
\newcommand\modrM{{\boldsymbol M}^\bullet}
\newcommand\kg{{\mathfrak g}}
\newcommand\ksl{{\mathfrak{sl}}}
\newcommand\kX{{\mathfrak X}}
\newcommand\kY{{\mathfrak Y}}
\newcommand\kS{{\mathfrak S}}
\newcommand\eps{{\varepsilon}}
\newcommand\ad{{\text{\textup{ad}}}}
\newcommand{\grast}{\bullet}
\newcommand\fois{\mathord{\cdot}}
\DeclareMathOperator{\tr}{Tr} % trace
\DeclareMathOperator{\Hom}{\mathsf{Hom}}
\DeclareMathOperator{\Aut}{\mathsf{Aut}}
\newcommand\Matr{{\mathcal{M}}}
\newcommand\Der{{\text{\textup{Der}}}}
\newcommand\Int{{\text{\textup{Inn}}}}
\newcommand\Out{{\text{\textup{Out}}}}
\newcommand\dd{{\text{\textup{d}}}}
\newcommand{\omi}[1]{\buildrel { \buildrel{#1}\over{\vee} } \over .}
\newcommand\Lie{{\text{\textup{Lie}}}}
\newcommand\symes{{\mathchoice{\textstyle\mathsf{S}}{\textstyle\mathsf{S}}%
{\scriptstyle\mathsf{S}}{\scriptscriptstyle\mathsf{S}}}} % signe S en grand, sans-sérif, pour les algèbres
\newcommand\exter{{\textstyle\bigwedge}} % signe wedge en grand, pour les algèbres
\newcommand\Supp{{\text{\textup{Supp}}}}
\newtheorem{Theorem}{Theorem}[section]
\newtheorem{theorem}[Theorem]{Theorem}
\newtheorem{proposition}[Theorem]{Proposition}
\newtheorem{lemma}[Theorem]{Lemma}
\newtheorem{Corollary}[Theorem]{Corollary}
\newtheorem{corollary}[Theorem]{Corollary}
\newtheorem{example}[Theorem]{Example}
\newtheorem{remark}[Theorem]{Remark}
\newtheorem{definition}[Theorem]{Definition}
\theoremstyle{nonumberplain}
\newtheorem{proof}{Proof}
\title{Noncommutative $\eps$-graded connections\footnote{Work
supported by ANR grant NT05-3-43374 ``GenoPhy''.}}
\author{Axel de Goursac$^{a,b}$, Thierry Masson$^a$, Jean-Christophe Wallet$^a$}
\date{}
\begin{document}

\maketitle
\vspace*{-1cm}
\begin{center}
\textit{$^a$Laboratoire de Physique Th\'eorique, B\^at.\ 210\\
    Universit\'e Paris XI,  F-91405 Orsay Cedex, France\\
    e-mail: \texttt{axelmg@melix.net}, \texttt{thierry.masson@u-psud.fr},
\texttt{jean-christophe.wallet@th.u-psud.fr}}\\[1ex]
\textit{$^b$Mathematisches Institut der Westf\"alischen
  Wilhelms-Universit\"at \\Einsteinstra\ss{}e 62, D-48149 M\"unster,
  Germany}\\
\end{center}%

\vskip 2cm

\begin{abstract}
We introduce the new notion of $\eps$-graded associative algebras which takes its root into the notion of commutation factors introduced in the context of Lie algebras \cite{Scheunert:1979}. We define and study the associated notion of $\eps$-derivation-based differential calculus, which generalizes the derivation-based differential calculus on associative algebras. A corresponding notion of noncommutative connection is also defined. We illustrate these considerations with various examples of $\eps$-graded algebras, in particular some graded matrix algebras and the Moyal algebra. This last example permits also to interpret mathematically a noncommutative gauge field theory.
\end{abstract}

\vfill
\begin{flushleft}
LPT-Orsay/08-96
\end{flushleft}

\pagebreak

\section{Introduction}

Noncommutative geometry finds its origins in the fact that spaces are dual to commutative algebras. For instance, a topological (Hausdorff, locally compact) space can be equivalently described by the algebra of its continuous functions, which is a commutative $C^\ast$-algebra. The noncommutative $C^\ast$-algebras would then correspond to some ``noncommutative topological spaces''. This correspondence also takes place in measure theory and in differential geometry, which gives rise to noncommutative geometry \cite{Connes:1994, Landi:1997}. This in particular has led in physics to theories describing the standard model of particle physics \cite{Chamseddine:2006ep} coupled to gravitation in the spectral triple approach, and others as candidates for new physics beyond the standard model.

While the spectral triple approach \cite{Connes:1994} is actually a natural way to construct noncommutative extension of differential calculus and is intensively studied presently, the differential calculus based on the derivations of an associative algebra provides another (not so widely explored) way to produce noncommutative differential calculus. The differential calculus based on the derivations of an associative noncommutative algebra has been introduced in \cite{DuboisViolette:1988cr} (see also \cite{DuboisViolette:1994cr, DuboisViolette:1999cj}). Besides, further developments with possible applications to the construction of noncommutative versions of Yang-Mills-Higgs type models have been carried out in \cite{DuboisViolette:1988ir, DuboisViolette:1989vq} for the algebra of matrix-valued functions and in  \cite{DuboisViolette:1998su, Masson:1999ea}.

On another side, graded algebra has been studied for a long time \cite{Nastasescu:1982}, and its most well-known applications are the theories of supermanifolds, and of graded Lie algebras \cite{Fuks:1986}. Note also that a recent work on graded associative algebras, and in particular graded matrix algebras, has been done \cite{Bahturin:2002}. A natural generalization of $\gZ$-grading (or $\gZ_2$-grading) is the $\Gamma$-grading, where $\Gamma$ is an abelian group. However, in order to recover similar properties as for $\gZ$-graded Lie algebras, one has to define an additional structure, introduced for instance in \cite{Bourbaki:2007} for associative algebras and studied by Scheunert \cite{Scheunert:1979} in the context of Lie algebras, called the commutation factor.

One can now ask for a noncommutative geometry adapted to this setting of graded algebras. It would then correspond to some ``graded noncommutative spaces''. Note that projective varieties are precisely such objects at the level of noncommutative algebraic geometry (see \cite{Artin:1994} and references therein). In this paper, we study the notion of $\eps$-graded associative algebra in relation to the derivation-based non commutative geometry. An $\eps$-graded algebra, as it is defined here, is an associative algebra endowed with a commutation factor $\eps$, to which one can associate canonically a structure of generalized Lie algebra \cite{Rittenberg:1978mr} or $\eps$-Lie algebra \cite{Scheunert:1979}. The main contribution of this paper is to generalize the defining features of the differential calculus based on derivations of an associative algebra to the case of $\eps$-graded algebras and their $\eps$-derivations. We show in particular that the differential calculus based on $\eps$-derivations of an $\eps$-graded associative algebra is a general and natural framework including all the works mentioned above. Finally, we illustrate these constructions on various examples.

The notion of $\eps$-graded algebras as it is defined and considered here is related to some constructions introduced in \cite[Chapter~III]{Bourbaki:2007} in order to study in an unifying way some particular graded algebras: the tensor, the symmetric and the exterior algebras associated to modules. Let us mention also that our definition is closed to some definitions of color (Lie) algebras, where a commutation factor is also used under the name of bicharacter. Our terminology is deliberately chosen from the one by Scheunert \cite{Scheunert:1979} who studied $\eps$-Lie algebras. This choice is motivated by the fact that some $\eps$-Lie algebras will be naturally associated to $\eps$-graded associative algebras, so that it is very convenient to use his nomenclature. 

\medskip
The paper is organized as follows. In subsection \ref{subsec-commfact}, we recall the notion of commutation factor and give relation to the theory of Schur's multipliers. We introduce the notion of an $\eps$-graded algebra and its $\eps$-derivations in subsection \ref{subsec-epsalg}, and then we construct the differential calculus based on $\eps$-derivations in subsection \ref{subsec-diffcalc} and its theory of (noncommutative) $\eps$-connections in subsection \ref{subsec-epsconn}. As an illustration, we apply this formalism in subsection \ref{subsec-excommut} to $\eps$-graded commutative algebras, for a supermanifold. Then, we study in subsections \ref{subsec-exelem} and \ref{subsec-exfine} the matrix algebra endowed with elementary and fine gradings, which produces interesting examples of $\eps$-graded algebras. We study the general properties of these $\eps$-graded matrix algebras, their differential calculus and their space of $\eps$-connections. Subsection \ref{subsec-moyal} is related to the case of the Moyal algebra. An $\eps$-graded algebra is constructed from the Moyal space. We apply this construction to find naturally the recently constructed candidate for a renormalizable gauge action on Moyal space \cite{deGoursac:2007gq} as built from a graded curvature.

\section{\texorpdfstring{Noncommutative geometry based on $\eps$-derivations}{Noncommutative geometry based on epsilon-derivations}}
\label{sec-ncg}

\subsection{Commutation factors}
\label{subsec-commfact}

We recall here the principal features of the commutation factors and the Schur's multipliers. For commutation factors, we refer mainly to the extended study by Scheunert \cite{Scheunert:1979} for the general features, and to \cite{Bourbaki:2007} for applications to associative algebras. Let $\gK$ be a field, $\gK^\ast$ its multiplicative group, and $\Gamma$ an abelian group.

\begin{definition}
\label{def-commfactor}
A {\it commutation factor} is a map $\eps:\Gamma\times\Gamma\to\gK^\ast$ satisfying: $\forall i,j,k\in\Gamma$,
\begin{subequations}
\label{eq-defepsilon}
\begin{align}
&\eps(i,j)\eps(j,i)=1_\gK\\
&\eps(i,j+k)=\eps(i,j)\eps(i,k)\\
&\eps(i+j,k)=\eps(i,k)\eps(j,k)
\end{align}
\end{subequations}
\end{definition}

Note that as trivial consequences, one has the following useful relations, $\forall i, j\in\Gamma$:
\begin{gather*}
\eps(i,0) = \eps(0, i) = 1_{\gK}\\
\eps(i,i) \in \{1_{\gK}, -1_{\gK}\}\\
\eps(j,i) = \eps(i,-j) = \eps(i,j)^{-1}.
\end{gather*}

A very simple and well-known example of such structure concerns the case $\Gamma = \gZ$ for which $\eps(p,q) = (-1)^{pq}$ is a commutation factor. In fact, from a general result which will be mentioned later on, this is the only non trivial commutation factor on $\gZ$, the trivial one being $\eps(p,q) = 1$. We will call this non trivial commutation factor the natural one on $\gZ$.

One can define an equivalence relation on the commutation factors of an abelian group in the following way:
\begin{definition}
Two commutation factors $\eps$ and $\eps'$ are called equivalent if there exists $f\in\Aut(\Gamma)$, the group of automorphisms of $\Gamma$, such that: $\forall i,j\in\Gamma$,
\begin{equation*}
\eps'(i,j)=f^\ast\eps(i,j)=\eps(f(i),f(j)).
\end{equation*}
\end{definition}

Thanks to the axioms \eqref{eq-defepsilon}, one defines the signature function $\psi_\eps:\Gamma\to\{-1_\gK,1_\gK\}$ of $\eps$ by $\forall i\in\Gamma$, $\psi_\eps(i)=\eps(i,i)$, which satisfies: $\forall f\in\Aut(\Gamma)$, $\psi_{f^\ast\eps}=f^\ast\psi_\eps$.
\begin{proposition}
Let $\eps$ be a commutation factor. Let us define
\begin{equation*}
\Gamma_\eps^0=\{i\in\Gamma,\quad\eps(i,i)=1_\gK\},\qquad \Gamma_\eps^1=\{i\in\Gamma,\quad\eps(i,i)=-1_\gK\},
\end{equation*}
then $\eps$ is called proper if $\Gamma_\eps^0=\Gamma$. This property is compatible with the equivalence relation on the commutation factors. If $\eps$ is not proper, $\Gamma_\eps^0$ is a subgroup of $\Gamma$ of index 2, and $\Gamma_\eps^0$ and $\Gamma_\eps^1$ are its residues.
\end{proposition}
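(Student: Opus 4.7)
The plan is to reduce everything to the observation that the signature function $\psi_\eps:\Gamma\to\{-1_\gK,1_\gK\}$ is a group homomorphism, after which all three assertions become immediate.

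First, I would verify that properness is preserved under equivalence. If $\eps'=f^\ast\eps$ for $f\in\Aut(\Gamma)$, then $\psi_{\eps'}(i)=\eps(f(i),f(i))=\psi_\eps(f(i))$, which is precisely the relation $\psi_{f^\ast\eps}=f^\ast\psi_\eps$ already recorded in the text. Since $f$ is a bijection on $\Gamma$, the function $\psi_{\eps'}$ is identically $1_\gK$ if and only if $\psi_\eps$ is; hence $\eps'$ is proper iff $\eps$ is. Equivalently, this says $\Gamma_{\eps'}^0=f^{-1}(\Gamma_\eps^0)$.

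Next, the heart of the argument: I would show that $\psi_\eps:\Gamma\to\{-1_\gK,1_\gK\}$ is a group homomorphism from $\Gamma$ to the multiplicative subgroup $\{\pm 1_\gK\}$ of $\gK^\ast$. Using the two bimultiplicativity axioms and then the antisymmetry axiom,
\begin{equation*}
\psi_\eps(i+j)=\eps(i+j,i+j)=\eps(i,i+j)\eps(j,i+j)=\eps(i,i)\eps(i,j)\eps(j,i)\eps(j,j)=\psi_\eps(i)\psi_\eps(j),
\end{equation*}
since $\eps(i,j)\eps(j,i)=1_\gK$. By definition $\Gamma_\eps^0=\ker\psi_\eps$, so it is a subgroup of $\Gamma$, and $\Gamma_\eps^1=\psi_\eps^{-1}(-1_\gK)$ is its complement.

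Finally, if $\eps$ is not proper, by definition $\Gamma_\eps^1\neq\emptyset$, so the homomorphism $\psi_\eps$ is surjective onto $\{-1_\gK,1_\gK\}$. The first isomorphism theorem then forces $[\Gamma:\Gamma_\eps^0]=2$, and $\Gamma_\eps^1$ is precisely the non-trivial coset, as claimed. There is no serious obstacle here; the only subtle point is spotting that the antisymmetry $\eps(i,j)\eps(j,i)=1_\gK$ is exactly what is needed to kill the cross-terms and turn the bimultiplicativity of $\eps$ into the multiplicativity of $\psi_\eps$.
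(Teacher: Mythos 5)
Your proof is correct and follows exactly the route the paper sets up: it introduces the signature function $\psi_\eps$ immediately before this proposition precisely so that the claims reduce to $\psi_\eps$ being a group homomorphism into $\{\pm 1_\gK\}$ (with the cross-terms killed by $\eps(i,j)\eps(j,i)=1_\gK$) together with the recorded identity $\psi_{f^\ast\eps}=f^\ast\psi_\eps$. Your computation of $\psi_\eps(i+j)$ and the index-$2$ conclusion via the first isomorphism theorem are exactly the intended argument.
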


In these notations, we define the signature factor of the commutation factor $\eps$ by: $\forall i,j\in\Gamma$,
\begin{itemize}
\item $s(\eps)(i,j)=-1_\gK$ if $i\in\Gamma_\eps^1$ and $j\in\Gamma_\eps^1$.
\item $s(\eps)(i,j)=1_\gK$ if not.
\end{itemize}
$s(\eps)$ is also a commutation factor, such that $\forall f\in\Aut(\Gamma)$, $s(f^\ast\eps)=f^\ast s(\eps)$.

\begin{lemma}
\label{lem-thierry}
Let $\eps_1$ and $\eps_2$ be two commutation factors respectively on the abelian groups $\Gamma_1$ and $\Gamma_2$, over the same field $\gK$. Then, the map $\eps$ defined by: $\forall i_1,j_1\in\Gamma_1$, $\forall i_2,j_2\in\Gamma_2$,
\begin{equation}
\eps((i_1,i_2),(j_1,j_2))=\eps_1(i_1,j_1)\eps_2(i_2,j_2),\label{eq-eps-product}
\end{equation}
is a commutation factor on the abelian group $\Gamma=\Gamma_1\times\Gamma_2$, over $\gK$.
\end{lemma}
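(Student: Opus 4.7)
The plan is simply to verify the three defining axioms of Definition \ref{def-commfactor} for the product map, using that $\eps_1$ and $\eps_2$ already satisfy them. Since $\eps_1(i_1,j_1)$ and $\eps_2(i_2,j_2)$ both lie in $\gK^\ast$, which is a multiplicative group, their product lies in $\gK^\ast$, so $\eps$ is well defined as a map $\Gamma\times\Gamma\to\gK^\ast$.

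First I would check the $\eps(i,j)\eps(j,i)=1_\gK$ axiom: expanding the definition, the product factors as $\eps_1(i_1,j_1)\eps_1(j_1,i_1)\cdot\eps_2(i_2,j_2)\eps_2(j_2,i_2)$ after rearranging in the commutative group $\gK^\ast$, and each pair collapses to $1_\gK$ by axiom (a) for $\eps_1$ and $\eps_2$.

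Next I would verify bi-multiplicativity. For the second argument, given $(j_1,j_2),(k_1,k_2)\in\Gamma$, the group law on $\Gamma_1\times\Gamma_2$ is componentwise, so
\begin{equation*}
\eps\bigl((i_1,i_2),(j_1,j_2)+(k_1,k_2)\bigr)
=\eps_1(i_1,j_1+k_1)\,\eps_2(i_2,j_2+k_2),
\end{equation*}
which by axiom (b) applied to $\eps_1$ and $\eps_2$ equals $\eps_1(i_1,j_1)\eps_1(i_1,k_1)\eps_2(i_2,j_2)\eps_2(i_2,k_2)$, and this is $\eps((i_1,i_2),(j_1,j_2))\,\eps((i_1,i_2),(k_1,k_2))$ after reordering factors in the commutative group $\gK^\ast$. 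The multiplicativity in the first argument follows identically, using axiom (c) instead.

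There is no real obstacle here; the only thing worth emphasising is that we are systematically using commutativity of $\gK^\ast$ to reorder the four scalar factors, and the componentwise definition of the group law on $\Gamma_1\times\Gamma_2$. Once the three axioms are checked, the conclusion follows immediately.
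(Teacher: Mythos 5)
Your verification is correct and is exactly the direct axiom-check that the paper has in mind — indeed the paper states Lemma \ref{lem-thierry} without proof, treating it as immediate. Nothing is missing: the three axioms follow componentwise from those for $\eps_1$ and $\eps_2$ together with commutativity of $\gK^\ast$, just as you wrote.
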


\begin{proposition}
\label{prop-factcomm}
Let $\Gamma$ be a finitely generated abelian group and $\gK$ a field. Then, $\Gamma$ is the direct product of a finite number of cyclic groups, whose generators are denoted by $\{e_r\}_{r\in I}$. Any commutation factor on $\Gamma$ over $\gK$ takes the form: $\forall i=\sum_{r\in I}\lambda_re_r,j=\sum_{s\in I}\mu_se_s\in\Gamma$ ($\lambda_r,\mu_s\in\gZ$),
\begin{equation*}
\eps(i,j)=\prod_{r,s\in I}\eps(e_r,e_s)^{\lambda_r\mu_s},
\end{equation*}
with the condition: if $m_{rs}$ is the greatest common divisor of the orders $m_r\geq0$ of $e_r$ and $m_s\geq0$ of $e_s$ in $\Gamma$,
\begin{align*}
&\forall r\in I,\text{ such that } m_r\text{ is odd, }\eps(e_r,e_r)=1_\gK,\\
&\forall r\in I,\text{ such that } m_r\text{ is even, }\eps(e_r,e_r)=\pm 1_\gK,\\
&\forall r,s\in I,\quad \eps(e_r,e_s)^{m_{rs}}=1_\gK
\end{align*}
\end{proposition}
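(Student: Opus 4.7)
The plan is to combine the structure theorem for finitely generated abelian groups with the bimultiplicativity built into the axioms \eqref{eq-defepsilon}.

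First I invoke the structure theorem to write $\Gamma = \prod_{r \in I} \gZ/m_r \gZ$ as a finite direct product of cyclic groups, where $m_r \geq 0$ denotes the order of the generator $e_r$ (with $m_r = 0$ meaning $\gZ$). Every element is then uniquely expressible as $i = \sum_{r} \lambda_r e_r$ with $\lambda_r \in \gZ$ (modulo $m_r$ when $m_r>0$). Next I use axioms (b) and (c) iteratively: they say that $\eps$ is bimultiplicative, so plain induction on the number of terms yields
\begin{equation*}
\eps\Bigl(\sum_r \lambda_r e_r,\ \sum_s \mu_s e_s\Bigr) = \prod_{r,s \in I} \eps(e_r,e_s)^{\lambda_r \mu_s},
\end{equation*}
which is the announced formula. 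This shows that $\eps$ is entirely determined by its values on pairs of generators.

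It remains to check which choices of the $\eps(e_r,e_s) \in \gK^\ast$ produce a well-defined commutation factor on $\Gamma$. The torsion conditions come from the fact that when $m_r > 0$, the element $m_r e_r$ equals $0$ in $\Gamma$, so bimultiplicativity together with $\eps(0,e_s)=1_\gK$ forces $\eps(e_r,e_s)^{m_r} = 1_\gK$, and symmetrically $\eps(e_r,e_s)^{m_s} = 1_\gK$ when $m_s > 0$. Since any integer killing both $m_r$ and $m_s$ must be a multiple of $m_{rs} = \gcd(m_r,m_s)$, this is equivalent to $\eps(e_r,e_s)^{m_{rs}} = 1_\gK$ (with the convention that $\gcd$ with $0$ gives the other argument, so infinite-order generators impose no constraint). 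The diagonal condition is obtained by combining $\eps(e_r,e_r)^2 = 1_\gK$, which follows from axiom (a), with $\eps(e_r,e_r)^{m_r} = 1_\gK$: when $m_r$ is odd these two relations force $\eps(e_r,e_r)=1_\gK$, whereas when $m_r$ is even (including $m_r = 0$) only the sign is fixed, yielding $\eps(e_r,e_r) = \pm 1_\gK$.

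Conversely, given any family $\eps(e_r,e_s) \in \gK^\ast$ satisfying the listed conditions together with $\eps(e_r,e_s)\eps(e_s,e_r)=1_\gK$, the product formula above gives a well-defined map $\Gamma \times \Gamma \to \gK^\ast$: independence on the choice of representatives $\lambda_r,\mu_s$ modulo $m_r,m_s$ is precisely what the conditions $\eps(e_r,e_s)^{m_{rs}}=1_\gK$ ensure, and axioms (a)--(c) are then immediate from the symmetry and additivity of exponents. No real obstacle arises; the only point that requires some care is this well-definedness check in the presence of torsion, which is exactly where the $m_{rs}$ condition is needed and is the only nontrivial step beyond a direct reading of the axioms.
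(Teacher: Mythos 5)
Your proof is correct and is essentially the argument the paper has in mind: the paper itself only says the proof is ``straightforward and has been given in \cite{Scheunert:1979}'', and what you wrote is exactly that straightforward argument (structure theorem, bimultiplicativity from axioms (b)--(c), the torsion constraint $\eps(e_r,e_s)^{m_{rs}}=1_\gK$ from $m_re_r=0$ together with $\gcd$ being an integer combination, and the diagonal condition from combining this with $\eps(e_r,e_r)^2=1_\gK$). Your additional converse/well-definedness check is a harmless bonus beyond what the proposition literally asserts.
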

The proof is straightforward and has been given in \cite{Scheunert:1979}. This Proposition gives the explicit form of a commutation factor on a finitely generated abelian group, but this is not a classification of such factors. In general, it is not easy to obtain this classification and it is related to the theory of multipliers as we will see below, but in the following example, coming from \cite{Scheunert:1979}, things are simplifying.

\begin{example}
\label{ex-factcomm}
Let $\Gamma=\gZ_p^n$, with $p$ a prime number, $\gK$ a field of characteristic different from $p$, and $\alpha\neq 1_\gK$ a $p$th root of unity in $\gK$. Then, any commutation factor on $\Gamma$ over $\gK$ takes the unique following form: $\forall i,j\in\Gamma$,
\begin{equation*}
\eps(i,j)=\alpha^{\varphi(i,j)},
\end{equation*}
where $\varphi$ is a bilinear form on the vector space $\gZ_p^n$ over the field $\gZ_p$, which is:
\begin{itemize}
\item symmetric if $p=2$,
\item skew-symmetric if $p\geq3$. In this case, $\eps$ is proper.
\end{itemize}
To equivalent commutation factors correspond equivalent (in the sense of linear algebra) bilinear forms.
\end{example}
\begin{proof}
We recall here the proof of \cite{Scheunert:1979}. Let $\eps$ be a commutation factor on $\Gamma$ over $\gK$. Since $p$ is prime and different from the characteristic of $\gK$, there exists a $p$th root of unity in $\gK$, $\alpha\neq1_\gK$, and all the $p$th roots of unity are powers of $\alpha$. Using Proposition \ref{prop-factcomm}, if $\{e_r\}_{r\in I}$ are the canonical generators of $\Gamma=(\gZ_p)^n$, $\forall r,s\in I$, $\eps(e_r,e_s)^p=1_\gK$, so that there exists $m_{rs}\in\gZ_p$ such that $\eps(e_r,e_s)=\alpha^{m_{rs}}$ and $m_{sr}=-m_{rs}$.

Then, $\forall i=\sum_{r\in I}\lambda_re_r,j=\sum_{s\in I}\mu_se_s\in\Gamma$ ($\lambda_r,\mu_s\in\gZ$), $\eps(i,j)=\alpha^{\varphi(i,j)}$, where $\varphi(i,j)=\sum_{r,s\in I}\lambda_r\mu_sm_{rs}$ is a bilinear form. If $p\geq3$, $\forall r\in I$, $m_{rr}=0$ and $\varphi$ is skew-symmetric. If $p=2$, $\forall r,s\in I$, $m_{sr}=-m_{rs}=m_{rs}$ and $\varphi$ is symmetric.
\end{proof}

\subsubsection{Schur multipliers}
\label{subsub-schur}

We will now present the theory of multipliers of an abelian group, due to Schur (also for non-abelian groups), which is related to the classification of commutation factors. Let us now recall the standard definition of a factor set, closely related to a projective representation.
\begin{definition}
\label{def-multiplier}
Let $\Gamma$ be an abelian group, and $\gK$ a field. A {\it factor set} is an application $\sigma:\Gamma\times\Gamma\to\gK^\ast$ such that: $\forall i,j,k\in\Gamma$,
\begin{equation}
\sigma(i,j+k)\sigma(j,k)=\sigma(i,j)\sigma(i+j,k).\label{eq-eps-deffactorset}
\end{equation}
Two factor sets $\sigma$ and $\sigma'$ are said to be equivalent if there exists an application $\rho:\Gamma\to\gK^\ast$ such that: $\forall i,j\in\Gamma$,
\begin{equation*}
\sigma'(i,j)=\sigma(i,j)\rho(i+j)\rho(i)^{-1}\rho(j)^{-1}.
\end{equation*}
The quotient of the set of factor sets by this equivalence relation is an abelian group, for the product of $\gK$, and is called the multiplier group $M_\Gamma$ of $\Gamma$. Each class $[\sigma]\in M_\Gamma$ is called a {\it multiplier}.

If $\sigma$ is a factor set and $f\in\Aut(\Gamma)$, the pullback $f^\ast\sigma$ defined by: $\forall i,j\in\Gamma$,
\begin{equation*}
f^\ast\sigma(i,j)=\sigma(f(i),f(j)),
\end{equation*}
is also a factor set. Moreover, this operation is compatible with the above equivalence relation, so that the pullback can now be defined on the multipliers: $f^\ast[\sigma]=[f^\ast\sigma]$. This defines an equivalence relation on the multipliers, which is not compatible with the product.

A more refined equivalence relation involves also subgroups of $\Gamma$: if $[\sigma]$ and $[\sigma']$ are multipliers and $\Gamma_0$ and $\Gamma_1$ are subgroups of $\Gamma$, $([\sigma],\Gamma_0)$ and $([\sigma'],\Gamma_1)$ are called equivalent if there exists $f\in\Aut(\Gamma)$ such that $f(\Gamma_1)=\Gamma_0$ and $[\sigma']=f^\ast[\sigma]$.
\end{definition}

Note that the equation \eqref{eq-eps-deffactorset} is related to the definition of cocycles of the group $\Gamma$, while the equivalence of factor sets can be reexpressed in terms of coboundaries, so that multipliers are in fact exponentials of the cohomology classes $H^{2}(\Gamma,\gK)$ of the group $\Gamma$.

\subsubsection{Relation between commutation factors and multipliers}
\label{subsub-schurcommfact}

If $\eps$ is a commutation factor on $\Gamma$ over $\gK$, notice that, due to Definition \ref{def-commfactor}, it is a factor set of $\Gamma$. But there is a deeper relation between commutation factors and factor sets, given by the following theorem \cite{Scheunert:1979}:
\begin{theorem}
\label{thm-multcomm}
Let $\Gamma$ be an abelian group, and $\gK$ a field.
\begin{itemize}
\item Any multiplier $[\sigma]$ defines a unique proper commutation factor $\eps_\sigma$ on $\Gamma$ by: $\forall i,j\in\Gamma$,
\begin{equation}
\eps_\sigma(i,j)=\sigma(i,j)\sigma(j,i)^{-1}.\label{eq-multcomm}
\end{equation}
\item If $\Gamma$ is finitely generated, any proper commutation factor $\eps$ on $\Gamma$ can be constructed from a multiplier $[\sigma]$ by \eqref{eq-multcomm}.
\item If, in addition, $\gK$ is algebraically closed, then $\eps$ is constructed from a unique multiplier $[\sigma]$.
\item For $\Gamma$ finitely generated and $\gK$ algebraically closed, if two proper commutation factors $\eps_\sigma$ and $\eps_{\sigma'}$ are equivalent, then $[\sigma']$ is a pullback of $[\sigma]$.
\end{itemize}
\end{theorem}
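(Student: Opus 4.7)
The plan treats the four items of the theorem in sequence, with increasing difficulty. Item 1 requires only direct verification. First I would check that the formula $\eps_\sigma(i,j)=\sigma(i,j)\sigma(j,i)^{-1}$ depends only on the multiplier class: if $\sigma'(i,j)=\sigma(i,j)\rho(i+j)\rho(i)^{-1}\rho(j)^{-1}$, the coboundary factors cancel in $\sigma'(i,j)\sigma'(j,i)^{-1}$ thanks to $\rho(i+j)=\rho(j+i)$. Antisymmetry $\eps_\sigma(i,j)\eps_\sigma(j,i)=1_\gK$ and properness $\eps_\sigma(i,i)=1_\gK$ are then immediate from the formula, and bimultiplicativity $\eps_\sigma(i,j+k)=\eps_\sigma(i,j)\eps_\sigma(i,k)$ follows by applying the cocycle identity \eqref{eq-eps-deffactorset} once to $\sigma(i,j+k)$ and once to $\sigma(j+k,i)$ and simplifying.

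For item 2, I would appeal to Proposition \ref{prop-factcomm} to decompose $\Gamma=\bigoplus_{r\in I}\gZ/m_r\gZ$ with canonical generators $\{e_r\}$, fix a total order on $I$, and define a candidate $\sigma$ on generators by
\[
\sigma(e_r, e_s) = \begin{cases} \eps(e_r, e_s) & \text{if } r < s, \\ 1_\gK & \text{if } r \geq s, \end{cases}
\]
extended bimultiplicatively. The cocycle identity \eqref{eq-eps-deffactorset} is automatic for a bimultiplicative $\sigma$. The identity $\sigma(i,j)\sigma(j,i)^{-1}=\eps(i,j)$ then follows from the explicit product formula for $\eps$ in Proposition \ref{prop-factcomm}, by pairing off $r<s$ and $r>s$ contributions; diagonal terms drop out precisely because $\eps$ is proper, so $\eps(e_r,e_r)=1_\gK$. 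Well-definedness of the exponents modulo $m_r$ in each cyclic factor is guaranteed by the constraint $\eps(e_r,e_s)^{m_{rs}}=1_\gK$ recorded in Proposition \ref{prop-factcomm}.

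I expect item 3 to be the main obstacle, since the other items reduce to explicit bookkeeping whereas this one is a genuine structural statement. It asserts that the map $[\sigma]\mapsto\eps_\sigma$ is injective on multipliers when $\gK$ is algebraically closed; equivalently, any symmetric factor set (one with $\sigma(i,j)=\sigma(j,i)$, i.e.\ with $\eps_\sigma\equiv 1_\gK$) must be a coboundary. Cohomologically, symmetric factor sets classify abelian central extensions of $\Gamma$ by $\gK^\ast$, which are parameterized by $\operatorname{Ext}^1_\gZ(\Gamma,\gK^\ast)$. When $\gK$ is algebraically closed, every element of $\gK^\ast$ admits $n$-th roots, so $\gK^\ast$ is divisible, hence injective as a $\gZ$-module; combined with $\Gamma$ being finitely generated this forces $\operatorname{Ext}^1(\Gamma,\gK^\ast)=0$, and so every symmetric factor set splits. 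Item 4 is then an immediate corollary: given $f\in\Aut(\Gamma)$ with $\eps_{\sigma'}=f^\ast\eps_\sigma$, the defining formula \eqref{eq-multcomm} trivially yields $f^\ast\eps_\sigma=\eps_{f^\ast\sigma}$, so $\eps_{\sigma'}=\eps_{f^\ast\sigma}$, and the uniqueness from item 3 forces $[\sigma']=f^\ast[\sigma]$.
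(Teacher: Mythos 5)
Your proposal is correct and follows essentially the same route as the paper's proof: the same coboundary/cocycle checks for item 1, the same explicit factor set $\sigma(i,j)=\prod_{r<s}\eps(e_r,e_s)^{\lambda_r\mu_s}$ for item 2, the same reduction of item 3 to the triviality of the symmetric factor set $\sigma\sigma'^{-1}$, and the same pullback computation for item 4. The only substantive difference is in item 3, where the paper merely asserts that a symmetric factor set over an algebraically closed field is equivalent to one, whereas you actually prove it via $\operatorname{Ext}^1_{\gZ}(\Gamma,\gK^\ast)=0$ (divisibility, hence injectivity, of $\gK^\ast$) --- a welcome completion of the argument; likewise your observation that $\eps(e_r,e_s)^{m_{rs}}=1_\gK$ makes the exponents in $\sigma$ well defined fills a detail the paper passes over.
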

\begin{proof}
This theorem has been partly proved in \cite{Scheunert:1979}. We recall and complete here the proof.
\begin{itemize}
\item Let $\sigma$ be a factor set, and $\eps_\sigma(i,j)=\sigma(i,j)\sigma(j,i)^{-1}$, for $i,j\in\Gamma$.\\
Then, $\eps_\sigma(i,j)\eps_\sigma(j,i)=1_\gK$, and $\forall i,j,k\in\Gamma$,
\begin{multline*}
\eps_\sigma(i,j+k)\eps_\sigma(i,j)^{-1}\eps_\sigma(i,k)^{-1}=\\
\sigma(i,j+k)\sigma(i,j)^{-1}\sigma(i,k)^{-1} (\sigma(j+k,i)\sigma(j,i)^{-1}\sigma(k,i)^{-1})^{-1}.
\end{multline*}
By using three times the property \eqref{eq-eps-deffactorset}, one obtains
\begin{equation*}
\sigma(j+k,i)=\sigma(i,j+k)\sigma(j,i)\sigma(k,i)\sigma(i,j)^{-1}\sigma(i,k)^{-1},
\end{equation*}
which proves that $\eps_\sigma(i,j+k)=\eps_\sigma(i,j)\eps_\sigma(i,k)$. The third axiom of Definition \ref{def-commfactor} can be proved in a similar way. If $\sigma'(i,j)=\sigma(i,j)\rho(i+j)\rho(i)^{-1}\rho(j)^{-1}$, then $\sigma'(i,j)\sigma'(j,i)^{-1}=\sigma(i,j)\sigma(j,i)^{-1}$.
\item Let us suppose that $\Gamma$ is finitely generated, and $\{e_r\}_{r\in I}$ a system of generators, where $I$ is an ordered (finite) set. Let $\eps$ be a proper commutation factor on $\Gamma$. Define $\sigma:\Gamma\times\Gamma\to\gK^\ast$ by $\forall i=\sum_{r\in I}\lambda_re_r,j=\sum_{s\in I}\mu_se_s\in\Gamma$ ($\lambda_r,\mu_s\in\gZ$),
\begin{equation*}
\sigma(i,j)=\prod_{r<s}\eps(e_r,e_s)^{\lambda_r\mu_s}.
\end{equation*}
Since $\forall r,s\in I$, $\eps(e_r,e_r)=1_\gK$ and $\eps(e_r,e_s)=\eps(e_s,e_r)^{-1}$, one has\\
$\eps(i,j)=\sigma(i,j)\sigma(j,i)^{-1}$. And $\forall k=\sum_{r\in I}\nu_re_r\in\Gamma$ ($\nu_r\in\gZ$),
\begin{equation*}
\sigma(i,j+k)\sigma(j,k)=\prod_{r<s}\eps(e_r,e_s)^{\lambda_r\mu_s+\lambda_r\nu_s+\mu_r\nu_s}=\sigma(i,j)\sigma(i+j,k).
\end{equation*}
\item Let $\Gamma$ be a finitely generated abelian group, $\gK$ an algebraically closed field, and $\eps$ a commutation factor on $\Gamma$ over $\gK$. Suppose that $\eps$ is constructed through \eqref{eq-multcomm} from two factor sets $\sigma$ and $\sigma'$. Then $\forall i,j\in\Gamma$, $\sigma(i,j)\sigma'(i,j)^{-1}=\sigma(j,i)\sigma'(j,i)^{-1}$, which means that $\sigma\sigma'^{-1}$ is a symmetric factor set. Since $\gK$ is algebraically closed, $\sigma\sigma'^{-1}$ is equivalent to one, and $\sigma$ and $\sigma'$ are in the same multiplier.
\item If $\Gamma$ is finitely generated and $\gK$ algebraically closed, consider two equivalent commutation factors $\eps$ and $\eps'$: $\eps'=f^\ast\eps$, with $f\in\Aut(\Gamma)$. Then, there exists a unique multiplier $[\sigma]$ such that $\eps=\eps_\sigma$. $\forall i,j\in I$,
\begin{equation*}
\eps'(i,j)=\sigma(f(i),f(j))\sigma(f(j),f(i))^{-1}=\eps_{f^\ast\sigma}(i,j).
\end{equation*}
By unicity of the associated multiplier of $\eps'$, we obtain the result.
\end{itemize}
\end{proof}

\begin{Corollary}
Let $\Gamma$ be a finitely generated abelian group, and $\gK$ an algebraically closed field. Then,
\begin{itemize}
\item the proper commutation factors on $\Gamma$ are classified by the equivalence classes (by pullback) of the multipliers of $\Gamma$.
\item the non-proper commutation factors on $\Gamma$ are classified by the equivalence classes of multipliers and subgroups of index 2 of $\Gamma$.
\end{itemize}
\end{Corollary}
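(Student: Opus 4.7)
The plan is to obtain both statements as consequences of Theorem \ref{thm-multcomm}: the proper case reads off from it directly, and the non-proper case is reduced to the proper one via a canonical decomposition using the signature factor $s(\eps)$.

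For the proper case, I would simply assemble the four bullets of Theorem \ref{thm-multcomm} into a single bijection. Bullet (i) gives a well-defined map $\Phi : M_\Gamma \to \{\text{proper commutation factors}\}$, $[\sigma] \mapsto \eps_\sigma$. Bullet (ii), valid because $\Gamma$ is finitely generated, says $\Phi$ is surjective, and bullet (iii), valid because $\gK$ is algebraically closed, says $\Phi$ is injective. Bullet (iv) says that if $\eps_\sigma$ and $\eps_{\sigma'}$ are equivalent via $f \in \Aut(\Gamma)$, then $[\sigma'] = f^*[\sigma]$; the converse is immediate from \eqref{eq-multcomm}, since $\eps_{f^*\sigma}(i,j) = \sigma(f(i),f(j))\sigma(f(j),f(i))^{-1} = (f^*\eps_\sigma)(i,j)$. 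Thus $\Phi$ descends to a bijection between $\Aut(\Gamma)$-orbits of multipliers and equivalence classes of proper commutation factors.

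For the non-proper case, the key step is a decomposition lemma: \emph{every} commutation factor $\eps$ can be written uniquely as $\eps = \eps^\flat \cdot s(\eps)$ with $\eps^\flat$ proper. Indeed, define $\eps^\flat(i,j) := \eps(i,j) s(\eps)(i,j)$. Since $s(\eps)$ and $\eps$ are both commutation factors and the axioms \eqref{eq-defepsilon} are multiplicative, $\eps^\flat$ is again a commutation factor. A case-by-case check confirms $\eps^\flat(i,i) = 1_\gK$: on $\Gamma_\eps^0$ both factors equal $1_\gK$, while on $\Gamma_\eps^1$ both equal $-1_\gK$. Moreover $s(\eps)$ depends only on the subgroup $\Gamma_\eps^0$ of index 2 (with $s(\eps)(i,j) = -1_\gK$ iff $i,j$ lie outside $\Gamma_\eps^0$), so assigning $\eps \mapsto (\eps^\flat, \Gamma_\eps^0)$ gives a bijection between non-proper commutation factors and pairs (proper commutation factor, subgroup of index 2). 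Composing with the proper case yields a bijection with pairs $([\sigma], \Gamma_0)$.

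It remains to match the equivalence relations. This rests on two functoriality checks: $s(f^*\eps) = f^* s(\eps)$ (already noted in the text just before Lemma \ref{lem-thierry}) and hence $(f^*\eps)^\flat = f^*(\eps^\flat)$, together with $\Gamma_{f^*\eps}^0 = f^{-1}(\Gamma_\eps^0)$. Given these, $f^*\eps_1 = \eps_2$ translates precisely into $f^*(\eps_1^\flat) = \eps_2^\flat$ and $f^{-1}(\Gamma_{\eps_1}^0) = \Gamma_{\eps_2}^0$, which, via the proper case, becomes $f^*[\sigma_1] = [\sigma_2]$ and $f(\Gamma_{\eps_2}^0) = \Gamma_{\eps_1}^0$—exactly the relation in Definition \ref{def-multiplier}. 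The main technical obstacle is the bookkeeping of these compatibilities; the underlying algebra is routine. Once this is verified, the classification of non-proper commutation factors by equivalence classes of pairs (multiplier, index-2 subgroup) follows.
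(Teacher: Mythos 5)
Your proof is correct and follows essentially the same route as the paper's: the proper case is read off directly from Theorem~\ref{thm-multcomm}, and the non-proper case is reduced to it via the decomposition $\eps = s(\eps)\,\widetilde\eps$ with $\widetilde\eps$ proper (your $\eps^\flat$ is exactly the paper's $\widetilde\eps$), together with the functoriality $s(f^\ast\eps)=f^\ast s(\eps)$. You merely spell out a few more of the routine verifications (that $\eps^\flat$ is a proper commutation factor, and the explicit bijection with pairs) that the paper leaves implicit.
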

\begin{proof}
The first point is a direct consequence of Theorem~\ref{thm-multcomm}.

For the second point, if $\eps_1$ and $\eps_2$ are non-proper equivalent commutation factors on $\Gamma$, then there exists $f\in\Aut(\Gamma)$ such that $\eps_2=f^\ast\eps_1$, and $f(\Gamma_{\eps_2}^0)=\Gamma_{\eps_1}^0$. $\forall \alpha=1,2$, we decompose $\eps_\alpha=s(\eps_\alpha)\widetilde\eps_\alpha$, with $\widetilde\eps_\alpha$ proper commutation factors. Using now Theorem~\ref{thm-multcomm}, there exist unique multipliers $[\sigma_\alpha]$ such that $\widetilde\eps_\alpha=\eps_{\sigma_\alpha}$, and they satisfy $[\sigma_2]=f^\ast[\sigma_1]$. Then, $([\sigma_1],\Gamma_{\eps_1}^0)$ and $([\sigma_2],\Gamma_{\eps_2}^0)$ are equivalent.

Conversely, if $\forall\alpha=1,2$, $\eps_\alpha=s(\eps_\alpha)\eps_{\sigma_\alpha}$, with $[\sigma_\alpha]$ multipliers such that $([\sigma_1],\Gamma_{\eps_1}^0)$ and $([\sigma_2],\Gamma_{\eps_2}^0)$ are equivalent, then there exists $f\in\Aut(\Gamma)$ such that $f(\Gamma_{\eps_2}^0)=\Gamma_{\eps_1^0}$ and $[\sigma_2]=f^\ast[\sigma_1]$. It means that $s(\eps_2)=f^\ast s(\eps_1)$ and $\eps_{\sigma_2}=f^\ast\eps_{\sigma_1}$, so that $\eps_2=f^\ast\eps_1$.
\end{proof}

Thus, by studying the equivalence classes of multipliers and of multipliers and subgroups of index 2 of a given abelian group $\Gamma$, one can characterize all the commutation factors on $\Gamma$ up to equivalence.

\subsection{\texorpdfstring{$\eps$-graded associative algebras}{epsilon-graded associative algebras}}
\label{subsec-epsalg}

Let $\gK$ be a field, $\gK^\ast$ its multiplicative group, and $\Gamma$ an abelian group. The latter notion of commutation factors has been introduced in the context of graded Lie algebras by Scheunert and gives rise to the following definition:
\begin{definition}[$\eps$-Lie algebra, \cite{Scheunert:1979}]
Let $\kg^\bullet$ be a $\Gamma$-graded vector space, $\eps$ a commutation factor on $\Gamma$, and $[-,-]_\eps:\kg^\bullet\times\kg^\bullet\to\kg^\bullet$ a bilinear product homogeneous of degree 0 satisfying
\begin{align}
[a,b]_\eps &=-\eps(|a|,|b|)[b,a]_\eps\\
[a,[b,c]_\eps]_\eps &=[[a,b]_\eps,c]_\eps+\eps(|a|,|b|)[b,[a,c]_\eps]_\eps,\label{epsjacobi}
\end{align}
$\forall a,b,c\in\kg^\bullet$ homogeneous, where $|a|\in\Gamma$ is the degree of $a\in\kg^\bullet$. The couple $(\kg^\bullet, [-,-]_\eps)$ is called an $\eps$-Lie algebra.

An $\eps$-Lie algebra for which the product $[-,-]_\eps$ is always $0$ will be called an abelian $\eps$-Lie algebra.
\end{definition}

Notice that the notion of $\eps$-Lie algebras has been generalized \cite{Larsson:2005} in quasi-hom-Lie algebras.

\bigskip
In this paper, following \cite{Bourbaki:2007}, we introduce the following structure (the terminology is chosen to be close to the one by Scheunert):
\begin{definition}[$\eps$-graded associative algebra]
Let $\algrA$ be an associative unital $\Gamma$-graded $\gK$-algebra, endowed with a commutation factor $\eps$ on $\Gamma$, then $(\algrA,\eps)$ will be called an $\eps$-graded (associative) algebra.
\end{definition}

Notice that the $\eps$-structure is only related to the algebra $\algrA$ through the grading abelian group $\Gamma$. In particular, the product in the algebra is not connected to this structure. In the following, such an $\eps$-graded (associative) algebra will be denoted simply by $\algrA_\eps$ or even $\algrA$ if no confusion arises.

Any $\gZ$-graded associative algebra is an $\eps$-graded associative algebra for the natural commutation factor on $\gZ$, so that the theory described below can be applied to any $\gZ$-graded (associative) algebra.

\begin{remark}
\label{remark-bigrading}
Using Lemma~\ref{lem-thierry}, if $\algA^{\bullet, \bullet}$ is an associative unital $\Gamma_1\times\Gamma_2$-bigraded $\gK$-algebra equipped with two commutation factors $\eps_1$ and $\eps_2$ for the two gradings separately, then it is also a $\eps$-graded algebra for the product grading $\Gamma_1\times\Gamma_2$ with $\eps$ defined by \eqref{eq-eps-product}.
\end{remark}

If $\algrA$ is an $\eps$-graded algebra, one can construct its underlying $\eps$-Lie algebra using the bracket defined by
\begin{equation}
[a,b]_\eps=a\fois b-\eps(|a|,|b|)\ b\fois a.
\end{equation}
$\forall a,b\in\algrA$ homogeneous. We will denote by $\algrA_{\Lie, \eps}$ this structure.

\begin{definition}[$\eps$-graded commutative algebra]
$\algrA_\eps$ is called an $\eps$-graded commutative algebra if $\algrA_{\Lie, \eps}$ is an abelian $\eps$-Lie algebra.
\end{definition}

For the case of $\gZ$-graded algebras, depending on the commutation factor, one obtains as $\eps$-graded commutative algebras either commutative and graded algebras (for the trivial commutation factor) or graded commutative algebras (for the natural commutation factor).

\begin{definition}[$\eps$-center]
Let $\algrA_\eps$ be an $\eps$-graded algebra. The $\eps$-center of $\algrA_\eps$ is the $\eps$-graded commutative algebra
\begin{equation}
\caZ^\bullet_\eps(\algA)=\{a\in\algrA,\ \forall b\in\algrA\ [a,b]_\eps=0\}.
\end{equation}
\end{definition}

Depending on the choice of the $\eps$-structure on a $\Gamma$-graded algebra, this $\eps$-center can be very different. Nevertheless, one has: $|\gone|=0$ and $\gone\in\caZ^\bullet_\eps(\algA)$.

Let us now mention some elementary constructions using $\eps$-graded algebras. Let $\algrA$ and $\algrB$ be two $\eps$-graded algebras with the same commutation factor $\eps$.

A morphism of $\eps$-graded algebras is defined to be a morphism of associative unital $\Gamma$-graded algebras $\chi:\algrA\to\algrB$. As a consequence, $\chi$ is also a morphism of $\eps$-Lie algebras between $\algrA_{\Lie, \eps}$ and $\algrB_{\Lie, \eps}$.

As in \cite{Bourbaki:2007}, the $\eps$-graded tensor product of two $\eps$-graded algebras $\algrA$ and $\algrB$ is the $\eps$-graded algebra defined  as the $\Gamma$-graded vector space $(\algA \otimes \algB)^\bullet$ for the total grading, equipped with the product given by
\begin{equation}
(a\otimes b)\fois(c\otimes d)=\eps(|b|,|c|)(a\fois c)\otimes(b\fois d).
\end{equation}
$\forall a,c\in\algrA$ and $\forall b,d\in\algrB$ homogeneous.

\begin{lemma}
\label{lem-eps-gradedtensorproduct}
Let $\algrA_\eps$ and $\algrB_\eps$ be two $\eps$-graded commutative algebra. Then their $\eps$-graded tensor product is a $\eps$-graded commutative algebra.
\end{lemma}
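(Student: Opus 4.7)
The plan is a direct computation on homogeneous elements, reducing the claim to the defining axioms of a commutation factor. The only subtlety is bookkeeping of the $\eps$-factors; there is no conceptual obstruction once the product on $\algA \otimes \algB$ is unpacked.

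First, I would take homogeneous $a,c\in\algrA$ and $b,d\in\algrB$, and note that $a\otimes b$ and $c\otimes d$ are homogeneous in the total $\Gamma$-grading with $|a\otimes b|=|a|+|b|$ and $|c\otimes d|=|c|+|d|$. Using the product rule for the $\eps$-graded tensor product twice, together with the $\eps$-commutativity of $\algrA$ and $\algrB$ separately, I compute
\begin{equation*}
(a\otimes b)\fois(c\otimes d)
=\eps(|b|,|c|)\,(a\fois c)\otimes(b\fois d)
=\eps(|b|,|c|)\eps(|a|,|c|)\eps(|b|,|d|)\,(c\fois a)\otimes(d\fois b),
\end{equation*}
and in parallel
\begin{equation*}
(c\otimes d)\fois(a\otimes b)=\eps(|d|,|a|)\,(c\fois a)\otimes(d\fois b).
\end{equation*}

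Second, I would show these two expressions differ by the factor $\eps(|a|+|b|,|c|+|d|)$, which amounts to the identity
\begin{equation*}
\eps(|b|,|c|)\eps(|a|,|c|)\eps(|b|,|d|)=\eps(|a|+|b|,|c|+|d|)\,\eps(|d|,|a|).
\end{equation*}
Expanding the right-hand side using bilinearity \eqref{eq-defepsilon}, and using $\eps(|d|,|a|)=\eps(|a|,|d|)^{-1}$, the factor $\eps(|a|,|d|)$ cancels and the identity collapses exactly to the left-hand side.

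This yields $(a\otimes b)\fois(c\otimes d)=\eps(|a\otimes b|,|c\otimes d|)\,(c\otimes d)\fois(a\otimes b)$ on homogeneous elements, and hence on all of $\algrA\otimes\algrB$ by bilinear extension, which is precisely the $\eps$-graded commutativity of the tensor product algebra. The main (and only) hurdle is keeping the four $\eps$-factors in order; the argument itself is completely formal.
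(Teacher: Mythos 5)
Your computation is correct and is exactly the direct verification the paper leaves implicit (the lemma is stated without proof there): the four $\eps$-factors combine via biadditivity and $\eps(|d|,|a|)=\eps(|a|,|d|)^{-1}$ to give the required total factor $\eps(|a|+|b|,|c|+|d|)$. Nothing is missing.
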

When no confusion will arise, we will make reference to this $\eps$-graded tensor product as the tensor product of $\eps$-graded algebras.

An $\eps$-trace on $\algrA$ is a linear map $T:\algA\to\gK$, which satisfies
\begin{equation}
T(a\fois b)=\eps(|a|,|b|)T(b\fois a).\label{deftrace}
\end{equation}
$\forall a,b\in\algrA$ homogeneous.

\bigskip
The structure of module compatible with an $\eps$-graded algebra $\algrA$ is simply the structure of $\Gamma$-graded module. $\modrM$ is a $\Gamma$-graded module on $\algrA$ if it is a $\Gamma$-graded vector space and a module on $\algrA$ such that $\modM^i\algA^j\subset \modM^{i+j}$ (for right modules) $\forall i,j\in\Gamma$. The space of homomorphisms of $\modrM$ is an $\eps$-graded algebra and will be denoted by $\Hom^\bullet_\algA(\modM,\modM)$.

\bigskip
Let us now introduce the key object which will permit us to introduce a differential calculus adapted to this situation.

\begin{definition}[$\eps$-derivations]
An $\eps$-derivation on the $\eps$-graded algebra $\algrA$ is a linear map $\kX: \algrA \to \algrA$ such that, if $\kX$ is homogeneous (as map of graded spaces) of degree $|\kX|\in\Gamma$,
\begin{equation}
\kX(a\fois b)=\kX(a)\fois b+\eps(|\kX|,|a|)\ a\fois\kX(b),\label{defderiv}
\end{equation}
$\forall a,b\in\algA$ with $a$ homogeneous.

We denote by $\Der^\bullet_\eps(\algA)$ the $\Gamma$-graded space of $\eps$-derivation on the $\eps$-graded algebra $\algrA$.
\end{definition}

Notice that this definition makes explicit reference to the $\eps$-structure, so that $\Der^\bullet_\eps(\algA)$ really depends on it. Moreover, the $\eps$-derivations are a particular case of the notion of $(\sigma,\tau)$-derivations \cite{Hartwig:2006}.

\begin{proposition}[Structure of $\Der^\bullet_\eps(\algA)$]
The space $\Der^\bullet_\eps(\algA)$ is an $\eps$-Lie algebra for the bracket
\begin{equation*}
[\kX,\kY]_\eps = \kX\kY - \eps(|\kX|,|\kY|) \kY \kX.
\end{equation*}

It is also a $\caZ^\bullet_\eps(\algA)$-bimodule for the product structure
\begin{equation}
(z\fois \kX)(a)=\eps(|z|,|\kX|)(\kX\fois z)(a)=z\fois (\kX(a)),\label{defmodder}
\end{equation}
$\forall\kX\in\Der^\bullet_\eps(\algA)$, $\forall z\in\caZ^\bullet_\eps(\algA)$ and $\forall a\in\algrA$ homogeneous.
\end{proposition}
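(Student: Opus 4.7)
The plan is to verify the two claims separately, in each case reducing to direct computations that exploit the defining identities of the commutation factor $\eps$.

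First I address the $\eps$-Lie algebra structure. The key point is to show that for homogeneous $\eps$-derivations $\kX,\kY$ of degrees $|\kX|,|\kY|$, the bracket $[\kX,\kY]_\eps = \kX\kY - \eps(|\kX|,|\kY|)\kY\kX$ is again an $\eps$-derivation of degree $|\kX|+|\kY|$. The grading is obvious. To verify the $\eps$-Leibniz rule \eqref{defderiv} for $[\kX,\kY]_\eps$ applied to a product $a\fois b$ with $a$ homogeneous, I expand $\kX(\kY(a\fois b))$ and $\kY(\kX(a\fois b))$ by applying \eqref{defderiv} twice. Each expansion produces four terms; the two "second-order" terms of the form $\kX(a)\fois\kY(b)$ and $\kY(a)\fois\kX(b)$ get multiplied by commutation factors. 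The choice of the prefactor $\eps(|\kX|,|\kY|)$ in the bracket is precisely what makes these mixed terms cancel after subtracting, using $\eps(|\kX|,|\kY|)\eps(|\kY|,|\kX|)=1_\gK$ and the bimultiplicativity of $\eps$. What remains reassembles exactly into $[\kX,\kY]_\eps(a)\fois b + \eps(|\kX|+|\kY|,|a|)\,a\fois [\kX,\kY]_\eps(b)$.

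Next, $\eps$-antisymmetry of the bracket is immediate from $\eps(|\kX|,|\kY|)\eps(|\kY|,|\kX|)=1_\gK$. For the $\eps$-Jacobi identity \eqref{epsjacobi}, I expand $[\kX,[\kY,\kZ]_\eps]_\eps$, $[[\kX,\kY]_\eps,\kZ]_\eps$ and $[\kY,[\kX,\kZ]_\eps]_\eps$ as alternating sums of six compositions each. Using the bimultiplicativity of $\eps$ (so that $\eps(|\kX|,|\kY|+|\kZ|)=\eps(|\kX|,|\kY|)\eps(|\kX|,|\kZ|)$, etc.), each composition appears with a matching coefficient on both sides, and the identity follows term by term. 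This is the step that demands the most careful bookkeeping and is the main obstacle, but it is purely combinatorial.

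Finally I address the $\caZ^\bullet_\eps(\algA)$-bimodule structure. For $z\in\caZ^\bullet_\eps(\algA)$ homogeneous and $\kX\in\Der^\bullet_\eps(\algA)$ homogeneous, I define $(z\fois\kX)(a) = z\fois \kX(a)$ and check two things. First, this is an $\eps$-derivation of degree $|z|+|\kX|$: applying it to $a\fois b$ and using $\eps$-Leibniz for $\kX$ gives $z\fois\kX(a)\fois b + \eps(|\kX|,|a|)\,z\fois a\fois\kX(b)$; in the second term I commute $z$ past $a$ using $z\in\caZ^\bullet_\eps(\algA)$, which produces the factor $\eps(|z|,|a|)$, so that combined with $\eps(|\kX|,|a|)$ I obtain $\eps(|z|+|\kX|,|a|)\,a\fois z\fois\kX(b)$, as required. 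Second, the equality $(z\fois\kX)(a)=\eps(|z|,|\kX|)(\kX\fois z)(a)$ in \eqref{defmodder} is consistent: defining the right action by $(\kX\fois z)(a) = \eps(|\kX|,|z|)\, z\fois\kX(a)$ makes the two prescriptions agree, and the bimodule axiom $(z\fois\kX)\fois z' = z\fois(\kX\fois z')$ then reduces, after shuffling commutation factors, to the associativity of the product in $\caZ^\bullet_\eps(\algA)$ together with $z,z'$ being $\eps$-central. This completes the verification.
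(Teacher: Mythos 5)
Your proof is correct, and it is exactly the standard direct verification: closure of the bracket via the cancellation of the mixed terms, the $\eps$-Jacobi identity by termwise matching of the six compositions (which in fact holds automatically since the bracket is the one induced on the $\eps$-graded associative algebra of endomorphisms), and the module structure via $\eps$-centrality of $z$. The paper states this proposition without proof, so your computation simply supplies the routine verification the authors omitted.
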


Notice that the left and right module structures are equivalent modulo the factor $\eps(|z|,|\kX|)$. So that we will mention it as a module structure, not as a bimodule one. In order to take into account this extra factor, it would be convenient to introduce the notion of $\eps$-central bimodule, as a straightforward adaptation of the notion of central bimodule defined in \cite{DuboisViolette:1994cr} and \cite{DuboisViolette:1995tz}. We will not go further in this direction here.

As usual, an inner $\eps$-derivation on $\algrA$ is an $\eps$-derivation which can be written as
\begin{equation*}
b \mapsto \ad_a(b) = [a,b]_\eps,
\end{equation*}
for an $a \in \algrA$. We denote by
\begin{equation*}
\Int^\bullet_\eps(\algA)=\{\ad_a,\ a\in\algrA\}
\end{equation*}
the space of inner $\eps$-derivations on $\algrA$.

\begin{proposition}
$\Int^\bullet_\eps(\algA)$ is an $\eps$-Lie ideal and a $\caZ^\bullet_\eps(\algA)$-module.

This permits one to define the quotient
\begin{equation*}
\Out^\bullet_\eps(\algA)=\Der^\bullet_\eps(\algA)/\Int^\bullet_\eps(\algA)
\end{equation*}
as an $\eps$-Lie algebra and a $\caZ^\bullet_\eps(\algA)$-module. This is the space of outer $\eps$-derivations on $\algrA$.
\end{proposition}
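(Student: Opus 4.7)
The plan is to verify three things in sequence: first, that $\Int^\bullet_\eps(\algA)$ really sits inside $\Der^\bullet_\eps(\algA)$; second, that it is stable under $[\kX,-]_\eps$ for every $\kX\in\Der^\bullet_\eps(\algA)$ (the ideal property); third, that it is stable under the $\caZ^\bullet_\eps(\algA)$-action given by \eqref{defmodder}. Once these hold, the quotient $\Out^\bullet_\eps(\algA)$ automatically inherits the $\eps$-Lie bracket and the $\caZ^\bullet_\eps(\algA)$-module structure, which is the final assertion.

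For the first point, for $a\in\algrA$ homogeneous one expands $\ad_a(b\fois c)=a\fois b\fois c-\eps(|a|,|b|+|c|)\, b\fois c\fois a$ and uses the factorization $\eps(|a|,|b|+|c|)=\eps(|a|,|b|)\eps(|a|,|c|)$ to rewrite the right-hand side as $\ad_a(b)\fois c+\eps(|a|,|b|)\, b\fois\ad_a(c)$, matching \eqref{defderiv} with $|\ad_a|=|a|$. This shows $\ad_a\in\Der^\bullet_\eps(\algA)$. The key computational step for the ideal property is the identity
\begin{equation*}
[\kX,\ad_a]_\eps=\ad_{\kX(a)},\qquad \kX\in\Der^\bullet_\eps(\algA),\ a\in\algrA\text{ homogeneous},
\end{equation*}
which I would prove by evaluating both sides on a homogeneous $b$, applying \eqref{defderiv} to expand $\kX(a\fois b)$ and $\kX(b\fois a)$, and collecting the terms that involve $\kX(b)$. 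Those ``cross terms'' should cancel thanks to the consistency relations $\eps(|\kX|,|a|)\eps(|a|,|\kX|)=1$ and $\eps(|\kX|+|a|,|b|)=\eps(|\kX|,|b|)\eps(|a|,|b|)$, leaving exactly $\ad_{\kX(a)}(b)$. This is the main technical step, and the careful bookkeeping of the commutation factors is where one must be most attentive.

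For the module property, I would show that $z\fois\ad_a=\ad_{z\fois a}$ for $z\in\caZ^\bullet_\eps(\algA)$ homogeneous. Here the computation is short: evaluating $\ad_{z\fois a}(b)=(z\fois a)\fois b-\eps(|z|+|a|,|b|)\,b\fois(z\fois a)$, one uses $b\fois z=\eps(|b|,|z|)\,z\fois b$ (valid because $z$ lies in the $\eps$-center) together with $\eps(|z|,|b|)\eps(|b|,|z|)=1$ to reduce the right-hand side to $z\fois(a\fois b)-\eps(|a|,|b|)\,z\fois(b\fois a)=z\fois\ad_a(b)$, which by \eqref{defmodder} is exactly $(z\fois\ad_a)(b)$.

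Finally, the two previous steps together state that $\Int^\bullet_\eps(\algA)$ is stable under both $[\kX,-]_\eps$ and the $\caZ^\bullet_\eps(\algA)$-action; combined with the fact that it is a $\Gamma$-graded subspace of $\Der^\bullet_\eps(\algA)$ (which is clear since $\ad_a$ is homogeneous of degree $|a|$), we conclude that it is an $\eps$-graded $\eps$-Lie ideal and a graded submodule. The quotient $\Out^\bullet_\eps(\algA)$ therefore carries a well-defined bracket and $\caZ^\bullet_\eps(\algA)$-action; both the $\eps$-antisymmetry and the $\eps$-Jacobi identity \eqref{epsjacobi} descend, being purely algebraic identities verified already in $\Der^\bullet_\eps(\algA)$. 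The anticipated obstacle is purely bookkeeping: tracking the commutation factors in $[\kX,\ad_a]_\eps$ so that the identity $[\kX,\ad_a]_\eps=\ad_{\kX(a)}$ emerges cleanly, as everything else then falls into place.
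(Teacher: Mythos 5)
Your proof is correct; the paper itself states this proposition without proof, and your argument supplies exactly the standard verification the authors leave implicit. The two key identities $[\kX,\ad_a]_\eps=\ad_{\kX(a)}$ and $z\fois\ad_a=\ad_{z\fois a}$ are the right ones, your bookkeeping of the commutation factors (in particular the cancellation via $\eps(|\kX|,|a|)\eps(|a|,|\kX|)=1_\gK$ and the biadditivity of $\eps$) checks out, and the passage to the quotient is then routine.
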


From these considerations, one then gets the two short exact sequences of $\eps$-Lie algebras and $\caZ_\eps^\bullet(\algA)$-modules:
\begin{gather*}
\xymatrix@1@C=25pt{{\algzero} \ar[r] & {\caZ^\bullet_\eps(\algA)} \ar[r] & {\algrA} \ar[r]^-{\ad} & {\Int^\bullet_\eps(\algA)} \ar[r] & {\algzero}}\\[3mm]
\xymatrix@1@C=25pt{{\algzero} \ar[r] & {\Int^\bullet_\eps(\algA)} \ar[r] & {\Der^\bullet_\eps(\algA)} \ar[r] & {\Out^\bullet_\eps(\algA)} \ar[r] & {\algzero}}
\end{gather*}

It is possible to define the notion of $\eps$-derivations on $\algrA$ taking values in a bimodule. We will not make use of it in the following.

\bigskip
In the rest of this subsection, all the linear structures will take place over the field $\gK=\gC$.

Let us now take a look at the notion of involution in this framework. In order to do that, one has to further constrain the notion of commutation factors, which we require to satisfy the hermitean condition
\begin{equation}
\overline{\eps(i,j)}=\eps(j,i)\label{eq-herm-factcom}
\end{equation}
$\forall i,j\in\Gamma$. Equivalently, $|\eps(i,j)| = 1$, $\forall i,j\in\Gamma$

\begin{definition}[Involution, unitarity and reality]
An involution on an $\eps$-graded algebra $\algrA$ (over the field $\gK=\gC$) is an antilinear map $\algA^i\to\algA^{-i}$, for any $i\in\Gamma$, denoted by $a\mapsto a^\ast$, such that
\begin{equation}
\label{eq-involution}
(a^\ast)^\ast=a,\qquad (a\fois b)^\ast=b^\ast\fois a^\ast.
\end{equation}
$\forall a,b\in\algrA$.

The unitary group of $\algrA$ associated to $\ast$ is defined and denoted by
\begin{equation*}
\caU(\algA)=\{g\in\algrA,\ g^\ast\fois g=\gone\}.
\end{equation*}

An $\eps$-derivation decomposed in its homogeneous elements as $\kX=\sum_{k\in\Gamma}\kX_k$, where $|\kX_k|=k$, is called real if, $\forall a\in\algrA$ homogeneous, $\forall k\in\Gamma$, $(\kX_k(a))^\ast=\eps(|a|,|\kX_k|)\kX_{-k}(a^\ast)$.

An $\eps$-trace $T$ on $\algrA$ is real if, $\forall a\in\algrA$, $T(a^\ast)=\overline{T(a)}$.
\end{definition}

Finally, a hermitean structure on a right $\algrA$-module $\modrM$ is a sesquilinear form $\langle-,-\rangle:\modM^i\times\modM^j\to\algA^{j-i}$, for $i,j\in\Gamma$, such that $\forall m,n\in\modrM$, $\forall a,b\in\algrA$,
\begin{equation}
\label{eq-hermiteanstructure}
\langle m,n\rangle^\ast=\langle n,m\rangle,\qquad \langle ma,nb\rangle=a^\ast\langle m,n\rangle b.
\end{equation}

\begin{remark}[Involution and $\eps$-structure]
Notice that the definition of the involution as it is given by \eqref{eq-involution} does not make direct reference to the $\eps$-structure, as well as the hermitean structure defined by \eqref{eq-hermiteanstructure}. In Section~\ref{sec-ex}, some examples of $\eps$-graded matrix algebras are exposed, for which the usual involution is compatible with the present definition.

An other natural definition of the involution can be considered, where an explicit reference to $\eps$ appears in the second relation of \eqref{eq-involution} and in the two relations of \eqref{eq-hermiteanstructure}.
\end{remark}

\subsection{\texorpdfstring{The differential calculus based on $\eps$-derivations}{The differential calculus based on epsilon-derivations}}
\label{subsec-diffcalc}

The derivation-based differential calculus has been introduced in \cite{DuboisViolette:1988cr,DuboisViolette:1994cr,DuboisViolette:1995tz}. It has been studied for various algebras in \cite{DuboisViolette:1988ir, DuboisViolette:1989vq, Masson:1996, Masson:1995ph, DuboisViolette:1998su, Masson:1999ea, Wallet:2007em, Cagnache:2008tz, Wallet:2008bq} (see also \cite{DuboisViolette:1991,DuboisViolette:1999cj,Masson:2007jb,Masson:2008uq} for reviews) and some propositions to generalize this construction to graded algebras have been presented for instance in \cite{Lecomte:1992, Grosse:1999}.

\begin{remark}
For an associative algebra $\algA$, we recall that the derivation-based differential calculus is given by
\begin{equation}
\underline\Omega_{\Der}^n(\algA)=\{\omega:\Der(\algA)^n\to\algA,\ \caZ(\algA)\text{-multilinear and antisymmetric}\}.
\end{equation}
Thus, two important notions enter in this definition: the notions of derivations and of center (related to the bracket of the underlying Lie algebra of $\algA$). We have seen that the center and the derivations can be generalized into the center of quasi-hom-Lie algebras \cite{Larsson:2005} and the $(\sigma,\tau)$-derivations \cite{Hartwig:2006}, but in this paper, we will only consider the notions of $\eps$-center and $\eps$-derivations, since they are compatible together. Indeed, the same commutation factor is used for the bracket and the Leibniz relation, and therefore the framework of $\eps$-graded algebras seems to be well adapted for the definition of a derivation-based differential calculus.
\end{remark}
In the notations of the last subsection, let $\algrA$ be an $\eps$-graded algebra and $\modrM$ a right $\algrA$-module.

\begin{definition}
For $n\in\gN$ and $k\in\Gamma$, let $\Omr^{n,k}(\algA,\modM)$ be the space of $n$-linear maps $\omega$ from $(\Der^\bullet_\eps(\algA))^n$ to $\modrM$, such that $\forall\kX_1,\dots,\kX_n\in\Der^\bullet_\eps(\algA)$, $\forall z\in\caZ^\bullet_\eps(\algA)$ homogeneous,
\begin{align}
\omega(\kX_1,\dots,\kX_n) &\in\modM^{k+|\kX_1|+\dots+|\kX_n|},\nonumber\\
\omega(\kX_1,\dots,\kX_n z)&=\omega(\kX_1,\dots,\kX_n)z,\nonumber\\
\omega(\kX_1,\dots,\kX_i,\kX_{i+1},\dots,\kX_n)&=-\eps(|\kX_i|,|\kX_{i+1}|)\omega(\kX_1,\dots,\kX_{i+1},\kX_i,\dots,\kX_n),\label{defcalcdiff}
\end{align}
and $\Omr^{0,k}(\algA,\modM)=\modM^k$, where $\kX\fois z$ was defined by \eqref{defmodder}.
\end{definition}

From this definition, it follows that the vector space $\Omr^{\bullet,\bullet}(\algA,\modM)$ is $\gN\times\Gamma$-graded, and $\Omr^{n,\bullet}(\algA,\modM)$ is a right module on $\algrA$.

In the case $\modrM=\algrA$, we will use the notation $\Omr^{\bullet,\bullet}(\algA) = \Omr^{\bullet,\bullet}(\algA,\algA)$.

\begin{proposition}
\label{propproddiff}
Endowed with the following product and differential, $\Omr^{\bullet,\bullet}(\algA)$ is a differential algebra: $\forall\omega\in\Omr^{p,|\omega|}(\algA)$, $\forall\eta\in\Omr^{q,|\eta|}(\algA)$, and $\forall\kX_1,\dots,\kX_{p+q}\in\Der^\bullet_\eps(\algA)$ homogeneous, the product is
\begin{equation}
(\omega\fois\eta)(\kX_1,\dots,\kX_{p+q}) =\frac{1}{p!q!}\sum_{\sigma\in\kS_{p+q}}(-1)^{|\sigma|}f_1\ \omega(\kX_{\sigma(1)},\dots,\kX_{\sigma(p)})\fois\eta(\kX_{\sigma(p+1)},\dots,\kX_{\sigma(p+q)}),\label{defprod}
\end{equation}
and the differential is
\begin{multline}
\dd\omega(\kX_1,\dots,\kX_{p+1})=\sum_{m=1}^{p+1} (-1)^{m+1}f_2\ \kX_m\omega(\kX_1,\dots \omi{m} \dots,\kX_{p+1})\\
+\sum_{1\leq m<n\leq p+1}(-1)^{m+n}f_3\ \omega([\kX_m,\kX_n]_\eps,\dots \omi{m} \dots \omi{n} \dots,\kX_{p+1}),\label{defdiff}
\end{multline}
where the factors $f_i$ are given by
\begin{align}
f_1&=\prod_{m<n,\sigma(m)>\sigma(n)}\eps(|\kX_{\sigma(n)}|,|\kX_{\sigma(m)}|)\prod_{m\leq p} \eps(|\eta|,|\kX_{\sigma(m)}|)\nonumber\\
f_2&=\eps(|\omega|,|\kX_m|)\prod_{a=1}^{m-1}\eps(|\kX_a|,|\kX_m|)\nonumber\\
f_3&=\eps(|\kX_n|,|\kX_m|)\prod_{a=1}^{m-1}\eps(|\kX_a|,|\kX_m|)\prod_{a=1}^{n-1}\eps(|\kX_a|,|\kX_n|).\nonumber
\end{align}

$\Omr^{\bullet,\bullet}(\algA)$ is an $\widetilde{\eps}$-graded algebra for the abelian group $\widetilde{\Gamma}=\gZ\times\Gamma$ and the commutation factor $\widetilde{\eps}((p,i),(q,j))=(-1)^{pq}\eps(i,j)$. Furthermore, $\dd$ is an $\widetilde{\eps}$-derivation of $\Omr^{\bullet,\bullet}(\algA)$ of degree $(1,0)$ satisfying $\dd^2=0$.

For any right $\algrA$-module $\modrM$, $\Omr^{\bullet,\bullet}(\algA,\modM)$ is a right module on $\Omr^{\bullet,\bullet}(\algA)$ for the following action:
$\forall\omega\in\Omr^{p,|\omega|}(\algA,\modM)$, $\forall\eta\in\Omr^{q,|\eta|}(\algA)$, and $\forall\kX_1,\dots,\kX_{p+q}\in\Der^\bullet_\eps(\algA)$ homogeneous,
\begin{equation}
(\omega\eta)(\kX_1,\dots,\kX_{p+q}) =\frac{1}{p!q!}\sum_{\sigma\in\kS_{p+q}}(-1)^{|\sigma|}f_1\ \omega(\kX_{\sigma(1)},\dots,\kX_{\sigma(p)})\eta(\kX_{\sigma(p+1)},\dots,\kX_{\sigma(p+q)}),\label{defmodomega}
\end{equation}
where $f_1$ is still the one given above.
\end{proposition}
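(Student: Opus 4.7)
The plan is to mimic the classical proof for derivation-based differential calculus of Dubois-Violette et al., systematically replacing the usual signs $(-1)^{\cdots}$ by the appropriate $\eps$-factors. First, I would verify that the formulas \eqref{defprod} and \eqref{defdiff} actually produce elements of the correct $(\gN\times\Gamma)$-graded component of $\Omr^{\bullet,\bullet}(\algA)$. The degree condition in \eqref{defcalcdiff} is immediate. The $\caZ^\bullet_\eps(\algA)$-multilinearity comes from pushing $z$ through the arguments using \eqref{defmodder}: the factors $f_1$, $f_2$, $f_3$ are designed precisely to absorb the $\eps$-signs generated when $z$ commutes past the preceding derivations. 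For the $\eps$-antisymmetry, I would reindex the sum over $\kS_{p+q}$ by composing $\sigma$ with the transposition $(i,i+1)$; the ordinary signature $(-1)^{|\sigma|}$ flips, while the single pair of inversions that appears or disappears in $f_1$ contributes precisely $\eps(|\kX_i|,|\kX_{i+1}|)$, yielding the desired rule.

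Next I would establish associativity. Writing $(\omega\fois\eta)\fois\chi$ and $\omega\fois(\eta\fois\chi)$ using \eqref{defprod}, both become sums over $\kS_{p+q+r}$ after absorbing the inner summations via the standard shuffle-product expansion. Associativity of the product in $\algrA$ handles the value part, while the sign part reduces to showing that the composite $f_1$-factor (a product of $\eps$'s over the inversions of the reshuffled permutation) is multiplicative under composition of shuffles. This is a purely combinatorial identity for bilinear commutation factors and presents no difficulty. The $\widetilde\Gamma$-grading with $\widetilde\eps((p,i),(q,j))=(-1)^{pq}\eps(i,j)$ is then just the observation that bidegrees add under the product and that the three conditions \eqref{defcalcdiff} transfer correctly to $\widetilde\eps$.

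For the differential, I would first check that $\dd$ sends $\Omr^{p,|\omega|}$ into $\Omr^{p+1,|\omega|}$ and is $\eps$-antisymmetric, again by a transposition argument exploiting the form of $f_2$ and $f_3$ (the extra $[\kX_m,\kX_n]_\eps$ piece behaves consistently thanks to the antisymmetry of the $\eps$-bracket). The Leibniz rule $\dd(\omega\fois\eta)=\dd\omega\fois\eta+(-1)^p\,\omega\fois\dd\eta$, which identifies $\dd$ as an $\widetilde\eps$-derivation of bidegree $(1,0)$ since $\widetilde\eps((1,0),(p,|\omega|))=(-1)^p$, is proved by expanding both sides on a tuple $(\kX_1,\dots,\kX_{p+q+1})$ and matching term by term: the $\kX_m(\cdot)$ piece of $\dd$ distributes over the product via \eqref{defderiv}, producing exactly the required $\eps$-sign, while the bracket piece splits according to whether $\kX_m,\kX_n$ both act on $\omega$-arguments, on $\eta$-arguments, or are mixed; the mixed terms recombine into $\kX_m$-pieces using once more the $\eps$-derivation property. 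Finally, $\dd^2=0$ reduces to the usual triple-sum cancellation, in which the pure derivation-derivation terms combine with the bracket-derivation terms via $[\kX_m,\kX_n]_\eps=\kX_m\kX_n-\eps(|\kX_m|,|\kX_n|)\kX_n\kX_m$, and the bracket-bracket terms cancel pairwise by the $\eps$-Jacobi identity \eqref{epsjacobi}.

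The main technical obstacle is the sign bookkeeping for the Leibniz rule: one has to track, for every $m$ and every $\sigma\in\kS_{p+q+1}$, which $\kX_{\sigma(a)}$ lie among $\omega$'s arguments versus $\eta$'s, and then reorganize the resulting product of $\eps$-factors into the clean form of $f_1$ for $\dd\omega\fois\eta$ and for $\omega\fois\dd\eta$. The right module structure of $\Omr^{\bullet,\bullet}(\algA,\modM)$ over $\Omr^{\bullet,\bullet}(\algA)$ then follows from exactly the same associativity computation applied to \eqref{defmodomega}, since that computation never uses that $\omega$ takes values in $\algrA$, only that its values are acted upon on the right by $\algrA$.
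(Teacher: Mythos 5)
Your proposal is correct and follows essentially the same route as the paper, which simply declares the verification a straightforward (if lengthy) computation and invokes Lemma~\ref{lem-thierry} and Remark~\ref{remark-bigrading} for the $\widetilde{\eps}$-graded structure on $\widetilde{\Gamma}=\gZ\times\Gamma$. Your outline is just a more explicit itemization of that direct check: well-definedness and $\eps$-antisymmetry by reindexing over transpositions, associativity via shuffles, the Leibniz rule by term matching using \eqref{defderiv}, and $\dd^2=0$ via the $\eps$-Jacobi identity.
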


\begin{proof}
It is a straightforward calculation to check that \eqref{defprod} and \eqref{defdiff} are compatible and turning $\Omr^{\bullet,\bullet}(\algA)$ into a $\widetilde{\Gamma}$-graded differential algebra, in the same way that \eqref{defmodomega} turns $\Omr^{\bullet,\bullet}(\algA,\modM)$ into a right $\Omr^{\bullet,\bullet}(\algA)$-module. Then, Lemma~\ref{lem-thierry} and Remark~\ref{remark-bigrading}  give the end of the proposition.
\end{proof}

\begin{remark}[Notion of $\widetilde{\Gamma}$-graded differential algebra]
In the previous Proposition, we made use of the terminology ``$\widetilde{\Gamma}$-graded differential algebra'', or more simply ``differential algebra'', without defining precisely these notions. In this case, the meaning is obviously to denote the differential $\dd$ in the $\gZ$-direction on the $\widetilde{\Gamma}$-graded algebra, which satisfies the following properties: it is an $\widetilde{\eps}$-derivation of degree $(1,0)$ such that $\dd^2=0$.

It could be convenient and desirable to define a good notion of $\eps$-graded differential algebra in general.  In order to do that, one starts with an $\eps$-graded associative algebra $\algrA$ and an associated differential $\dd$ which is an $\eps$-derivation whose square is zero. But the degree of this differential is difficult to constrain as ``minimal'' in some ``direction'' in the associated grading group $\Gamma$, except for special situations, as the ones where this group is freely generated by a finite number of generators. In these cases, a differential could be required to have as degree one of these generators. Then, there could be as many differentials as there are generators, a situation very similar to the one for usual bigraded algebras ($\Gamma = \gZ \times \gZ$).
\end{remark}

In low degrees (in the $\gZ$ part), the expression of the differential takes the following form, $\forall a\in\algrA$, $\forall\omega\in\Omr^{1,\bullet}(\algA)$, $\forall\kX,\kY\in\Der^\bullet_\eps(\algA)$ homogeneous,
\begin{align}
\dd a(\kX)&=\eps(|a|,|\kX|)\kX(a),\nonumber\\
\dd\omega(\kX,\kY)&=\eps(|\omega|,|\kX|)\kX(\omega(\kY))-\eps(|\omega|+|\kX|,|\kY|)\kY(\omega(\kX))-\omega([\kX,\kY]_\eps).
\end{align}

One has to be aware of the fact that even if the $\eps$-Lie algebra of derivations is finite dimensional, the vector space $\Omr^{\bullet,\bullet}(\algA)$ can be infinite dimensional. For instance, this is indeed the case for Example~\ref{ex-supermatrix}, where the symmetric part of this differential calculus amounts to an infinite dimensional part.

\begin{proposition}
If $\algrA$ is an $\eps$-graded commutative algebra then $\Omr^{\bullet,\bullet}(\algA)$ is an $\widetilde{\eps}$-graded commutative algebra.
\end{proposition}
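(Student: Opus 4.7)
\medskip\noindent\textbf{Proof plan.} Fix homogeneous $\omega\in\Omr^{p,|\omega|}(\algA)$ and $\eta\in\Omr^{q,|\eta|}(\algA)$ together with homogeneous $\eps$-derivations $\kX_1,\dots,\kX_{p+q}$. By the previous proposition, $\Omr^{\bullet,\bullet}(\algA)$ is already $\widetilde{\eps}$-graded with $\widetilde{\eps}((p,i),(q,j))=(-1)^{pq}\eps(i,j)$, so it suffices to establish the pointwise identity
\begin{equation*}
(\omega\fois\eta)(\kX_1,\dots,\kX_{p+q}) \;=\; (-1)^{pq}\,\eps(|\omega|,|\eta|)\,(\eta\fois\omega)(\kX_1,\dots,\kX_{p+q}).
\end{equation*}
My strategy is to start from the expansion of $(\eta\fois\omega)$ given by \eqref{defprod} and reindex the sum over $\tau\in\kS_{p+q}$ via $\sigma:=\tau\circ c$, where $c\in\kS_{p+q}$ is the block-swap cycle sending $(1,\dots,q,q{+}1,\dots,q{+}p)$ to $(p{+}1,\dots,p{+}q,1,\dots,p)$. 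Since $c$ is a product of $pq$ transpositions, $(-1)^{|\tau|}=(-1)^{pq}(-1)^{|\sigma|}$, and the $q$ arguments fed to $\eta$ in the $\tau$-term become the last $q$ arguments in the $\sigma$-term (and likewise for $\omega$).

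After this reindexing, each term contains a product $\eta(\kX_{\sigma(p+1)},\dots)\fois\omega(\kX_{\sigma(1)},\dots)$ in $\algrA$. Since $\algrA$ is $\eps$-graded commutative, I can swap these two factors, picking up the $\eps$-factor $\eps\bigl(|\eta|+\sum_{m>p}|\kX_{\sigma(m)}|,\ |\omega|+\sum_{m\le p}|\kX_{\sigma(m)}|\bigr)$. By bimultiplicativity \eqref{eq-defepsilon}, this factor breaks into four pieces: the pure form-degree piece $\eps(|\eta|,|\omega|)=\eps(|\omega|,|\eta|)^{-1}$, two mixed pieces coupling a form-degree with a sum of derivation-degrees, and a pure derivation-degree piece.

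The heart of the proof is then matching what remains to the expansion of $(\omega\fois\eta)$ via \eqref{defprod}. The two mixed pieces combine with the factor $\prod_{m\le q}\eps(|\omega|,|\kX_{\tau(m)}|)$ from $f_1(\tau)$ to reproduce exactly the factor $\prod_{m\le p}\eps(|\eta|,|\kX_{\sigma(m)}|)$ required by $f_1(\sigma)$. For the inversion part of $f_1$, the inversions of $\tau$ split into three classes depending on whether both indices lie in the first $q$ slots, both in the last $p$ slots, or one in each; the first two classes are in bijection with the corresponding inversion classes of $\sigma$, and the third class is precisely compensated by the pure derivation-degree piece from the commutativity swap. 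After all cancellations, the overall factor is exactly $(-1)^{pq}\eps(|\omega|,|\eta|)$, which gives the claimed identity; bilinearity over $\gK$ then upgrades it to arbitrary (not necessarily homogeneous) forms and derivations.

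\medskip\noindent\textbf{Main obstacle.} No conceptual difficulty arises; the challenge is purely bookkeeping, namely verifying that the three sources of $\eps$-factors (the inversion contributions in $f_1$, the form/derivation mixed contributions in $f_1$, and the swap factor from $\eps$-graded commutativity of $\algrA$) interlock so as to leave exactly the global prefactor $(-1)^{pq}\eps(|\omega|,|\eta|)$. The design of the factor $f_1$ in formula \eqref{defprod} is tailored precisely to make this cancellation work.
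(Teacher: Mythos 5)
Your argument is correct and follows the same route as the paper, which simply declares the $\widetilde{\eps}$-graded commutativity of $\Omr^{\bullet,\bullet}(\algA)$ to be ``a straightforward computation''; you have filled in exactly that computation (block-swap reindexing $\sigma=\tau\circ c$ with sign $(-1)^{pq}$, the swap factor from $\eps$-graded commutativity of $\algrA$ split by bimultiplicativity, and the matching of the inversion and mixed pieces of $f_1$). The bookkeeping as you describe it checks out, so nothing further is needed.
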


\begin{proof}
This is a straightforward computation.
\end{proof}

\begin{definition}[Restricted differential calculus]
Let $\kg^\bullet$ be an $\eps$-Lie subalgebra of $\Der^\bullet_\eps(\algA)$ and a module on $\caZ^\bullet_\eps(\algA)$. The restricted differential calculus $\Omr^{\grast,\grast}(\algA|\kg)$ associated to $\kg^\bullet$ is defined as the space of $n$-linear maps $\omega$ from $(\kg^\bullet)^n$ to $\algrA$ satisfying the axioms \eqref{defcalcdiff}, with the above product \eqref{defprod} and differential \eqref{defdiff}. It is also a $\widetilde\Gamma$-graded differential algebra.
\end{definition}

\begin{proposition}[Cartan operation]
Let $\kg^\bullet$ be an $\eps$-Lie subalgebra of $\Der_\eps^\bullet(\algA)$. $\kg^\bullet$ defines canonically a Cartan operation on $(\Omr^{\bullet,\bullet}(\algA),\dd)$ in the following way.

For each $\kX\in\Der_\eps^\bullet(\algA)$, the inner product with $\kX$ is the map $i_\kX:\Omr^{n,k}(\algA)\to\Omr^{n-1,k+|\kX|}(\algA)$ such that, $\forall\omega\in\Omr^{n,|\omega|}(\algA)$ and $\forall\kX,\kX_1,\dots,\kX_{n-1}\in\kg^\bullet$ homogeneous,
\begin{equation}
i_\kX\omega(\kX_1,\dots,\kX_{n-1})=\eps(|\kX|,|\omega|)\omega(\kX,\kX_1,\dots,\kX_{n-1}),\label{definnerprod}
\end{equation}
and $i_\kX\Omr^{0,\grast}(\algA)=\algzero$. $i_\kX$ is then an $\widetilde{\eps}$-derivation of the algebra $\Omr^{\bullet,\bullet}(\algA)$ of degree $(-1,|\kX|)$.

The associated Lie derivative $L_\kX$ to $i_\kX$ is
\begin{equation}
L_\kX=[i_\kX,\dd]=i_\kX\dd+\dd i_\kX:\Omr^{n,k}(\algA)\to\Omr^{n,k+|\kX|}(\algA),\label{deflieder}
\end{equation}
which makes it into an $\widetilde{\eps}$-derivation of $\Omr^{\bullet,\bullet}(\algA)$ of degree $(0,|\kX|)$, where the bracket in \eqref{deflieder} comes from the commutation factor $\widetilde{\eps}$ of $\Omr^{\bullet,\bullet}(\algA)$.

Then, the following properties are satisfied, $\forall\kX,\kY\in\kg^\bullet$ homogeneous,
\begin{align}
[i_\kX,i_\kY]&=i_\kX i_\kY+\eps(|\kX|,|\kY|)i_\kY i_\kX=0, &[L_\kX,i_\kY]&=L_\kX i_\kY-\eps(|\kX|,|\kY|)i_\kY L_\kX=i_{[\kX,\kY]_\eps},\nonumber\\
[L_\kX,\dd]&=L_\kX\dd-\dd L_\kX= 0, &[L_\kX,L_\kY]&=L_\kX L_\kY-\eps(|\kX|,|\kY|)L_\kY L_\kX=L_{[\kX,\kY]_\eps}.
\end{align}
\end{proposition}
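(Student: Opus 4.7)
The plan is to first establish that $i_\kX$ is a well-defined $\widetilde{\eps}$-derivation of bidegree $(-1,|\kX|)$, and then derive everything else by purely formal manipulations in the $\widetilde{\eps}$-graded differential algebra $\Omr^{\bullet,\bullet}(\algA)$ together with the graded Jacobi identity.

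First I would check that $i_\kX\omega$ lies in $\Omr^{n-1,k+|\kX|}(\algA)$: the total $\Gamma$-degree is $k+|\kX|+|\kX_1|+\dots+|\kX_{n-1}|$ by \eqref{defcalcdiff} together with the sign $\eps(|\kX|,|\omega|)$ in \eqref{definnerprod}. The $\caZ^\bullet_\eps(\algA)$-multilinearity and the $\eps$-antisymmetry in the remaining $\kX_i$ are inherited directly from $\omega$. Then I would verify the $\widetilde{\eps}$-Leibniz rule for $i_\kX$ on a product $\omega\fois\eta\in\Omr^{p,|\omega|}\fois\Omr^{q,|\eta|}$: expanding $(\omega\fois\eta)(\kX,\kX_1,\dots,\kX_{p+q-1})$ via \eqref{defprod} and splitting the sum over $\kS_{p+q}$ according to whether the position of $\kX$ falls in the first $p$ slots (contributing $i_\kX\omega\fois\eta$) or in the last $q$ slots (contributing $\omega\fois i_\kX\eta$), the factor $f_1$ reorganizes into the expected signs $\widetilde{\eps}((-1,|\kX|),(p,|\omega|))=(-1)^p\eps(|\kX|,|\omega|)$ appearing in the second term. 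This sign bookkeeping is the main obstacle; one must check that the $\eps$-factors from moving $\kX$ past the $\omega$-arguments exactly match the $\widetilde{\eps}$-derivation prescription.

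Once $i_\kX$ is known to be an $\widetilde{\eps}$-derivation of bidegree $(-1,|\kX|)$ and $\dd$ is an $\widetilde{\eps}$-derivation of bidegree $(1,0)$ with $\dd^2=0$, I would invoke the general principle that in any $\widetilde{\Gamma}$-graded associative algebra the $\widetilde{\eps}$-graded commutator of two $\widetilde{\eps}$-derivations is again an $\widetilde{\eps}$-derivation, whose bidegree is the sum of the two. This immediately gives that $L_\kX=[i_\kX,\dd]$ is an $\widetilde{\eps}$-derivation of bidegree $(0,|\kX|)$, proving the second assertion without recomputation.

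The four Cartan identities then follow from the graded Jacobi identity for $[-,-]$ associated to $\widetilde{\eps}$. Explicitly: $[i_\kX,i_\kY]=0$ is equivalent to checking on $\omega$ directly from the $\eps$-antisymmetry \eqref{defcalcdiff}, since both sides permute the first two arguments of $\omega$ and the relevant $\widetilde{\eps}$-sign $-\eps(|\kX|,|\kY|)$ reproduces precisely the antisymmetry factor. The relation $[L_\kX,\dd]=0$ is the formal identity $[[i_\kX,\dd],\dd]=[i_\kX,[\dd,\dd]]/\text{sign}=0$, using $\dd^2=0$ and Jacobi. For $[L_\kX,i_\kY]=i_{[\kX,\kY]_\eps}$, I would apply Jacobi to the triple $(i_\kX,\dd,i_\kY)$, obtaining $[[i_\kX,\dd],i_\kY]=[i_\kX,[\dd,i_\kY]]\pm[\dd,[i_\kX,i_\kY]]$; the second term vanishes by the first identity, so the claim reduces to $[i_\kX,L_\kY]=i_{[\kX,\kY]_\eps}$, which can be verified by evaluating both sides on a generator $a\in\algrA$ and on a $1$-form $\dd a$ using the low-degree formulas, then extended by the derivation property. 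Finally, $[L_\kX,L_\kY]=L_{[\kX,\kY]_\eps}$ is obtained by applying Jacobi to $(L_\kX,i_\kY,\dd)$ and using the already-established relations $[L_\kX,i_\kY]=i_{[\kX,\kY]_\eps}$ and $[L_\kX,\dd]=0$, yielding $[L_\kX,L_\kY]=[i_{[\kX,\kY]_\eps},\dd]=L_{[\kX,\kY]_\eps}$.
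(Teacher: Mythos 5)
Your overall strategy coincides with the paper's: establish the $\widetilde{\eps}$-Leibniz rule for $i_\kX$ by splitting the permutation sum in the product formula \eqref{defprod}, get $[i_\kX,i_\kY]=0$ from the $\eps$-antisymmetry axiom in \eqref{defcalcdiff}, get $[L_\kX,\dd]=0$ from $L_\kX\dd=\dd i_\kX\dd=\dd L_\kX$, and deduce $[L_\kX,L_\kY]=L_{[\kX,\kY]_\eps}$ from the two identities $[L_\kX,i_\kY]=i_{[\kX,\kY]_\eps}$ and $[L_\kX,\dd]=0$. The formal Jacobi manipulations you invoke are legitimate, since the homogeneous $\widetilde{\eps}$-derivations of $\Omr^{\bullet,\bullet}(\algA)$ form an $\widetilde{\eps}$-Lie algebra for the commutation factor $\widetilde{\eps}$, and your reduction of $[L_\kX,i_\kY]$ to $[i_\kX,L_\kY]$ via $[i_\kX,i_\kY]=0$ is correct.

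The gap is in your treatment of $[i_\kX,L_\kY]=i_{[\kX,\kY]_\eps}$. You propose to verify it on $a\in\algrA$ and on $\dd a$ and then ``extend by the derivation property''. Both sides are indeed $\widetilde{\eps}$-derivations of the same bidegree $(-1,|\kX|+|\kY|)$, so this extension is valid only on the subalgebra of $\Omr^{\bullet,\bullet}(\algA)$ generated by $\algA$ and $\dd\algA$. But $\Omr^{n,k}(\algA)$ is defined as the space of \emph{all} $\caZ^\bullet_\eps(\algA)$-multilinear $\eps$-antisymmetric maps $(\Der^\bullet_\eps(\algA))^n\to\algrA$, i.e.\ the maximal $\eps$-derivation-based calculus, and in this generality it need not be generated by $\algA$ and $\dd\algA$ (already in the non-graded case the differential subalgebra generated by $\algA$ inside $\underline\Omega_{\Der}(\algA)$ is in general proper). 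Agreement on degree $0$ and on the image of $\dd$ therefore does not yield the identity on all of $\Omr^{\bullet,\bullet}(\algA)$. The paper closes precisely this point by a direct (``long but straightforward'') evaluation of $i_\kX\dd i_\kY+\dd i_\kX i_\kY-\eps(|\kX|,|\kY|)\,(i_\kY i_\kX\dd+i_\kY\dd i_\kX)$ on an arbitrary $n$-form, using the explicit formulas \eqref{defdiff} and \eqref{definnerprod}; you need to carry out that computation on arbitrary forms rather than only on generators.
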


\begin{proof}
$\forall\omega\in\Omr^{n,|\omega|}(\algA)$, $\forall\eta\in\Omr^{p,|\eta|}(\algA)$ and $\forall\kX,\kY,\kX_1,\dots,\kX_{n-2}\in\kg^\bullet$ homogeneous, one has:
\begin{equation}
i_\kX(\omega\fois \eta)=(i_\kX \omega)\fois\eta+(-1)^n\eps(|\kX|,|\omega|)\omega(i_\kX\eta),
\end{equation}
due to the definition of $i_\kX$ \eqref{definnerprod} and of the product \eqref{defprod}. Then, the third axiom of \eqref{defcalcdiff} implies:
\begin{align}
i_\kX i_\kY\omega(\kX_1,\dots,\kX_{n-2})&=\eps(|\kX|+|\kY|,|\omega|)\eps(|\kX|,|\kY|)\omega(\kY,\kX,\kX_1,\dots,\kX_{n-2})\nonumber\\
&=-\eps(|\kX|,|\kY|)i_\kY i_\kX\omega(\kX_1,\dots,\kX_{n-2}),
\end{align}
so that $[i_\kX,i_\kY]=0$. Furthermore, $L_\kX\dd=\dd i_\kX\dd=\dd L_\kX$.
\begin{equation}
[L_\kX,i_\kY]=i_\kX\dd i_\kY+\dd i_\kX i_\kY-\eps(|\kX|,|\kY|)(i_\kY i_\kX\dd+i_\kY\dd i_\kX).
\end{equation}
By a long but straightforward calculation, using \eqref{defdiff} and \eqref{definnerprod}, one finds $[L_\kX,i_\kY]=i_{[\kX,\kY]_\eps}$. Finally, the above results permit one to show that
\begin{align}
[L_\kX,L_\kY]&=L_\kX(i_\kY\dd+\dd i_\kY)-\eps(|\kX|,|\kY|)(i_\kY\dd+\dd i_\kY)L_\kX\nonumber\\
&=i_{[\kX,\kY]_\eps}\dd+\dd i_{[\kX,\kY]_\eps}=L_{[\kX,\kY]_\eps}.
\end{align}
\end{proof}

\subsection{\texorpdfstring{$\eps$-connections}{epsilon-connections}}
\label{subsec-epsconn}

Let us now generalize the notion of noncommutative connections \cite{DuboisViolette:1988ir, DuboisViolette:1989vq,DuboisViolette:1998su,DuboisViolette:1999cj, Masson:2007jb} and gauge theory to this framework of $\eps$-graded algebras. Let $\modrM$ be a right module on the $\eps$-graded algebra $\algrA$.

\begin{definition}[$\eps$-connections]
An homogeneous linear map of degree 0,
\begin{equation*}
\nabla:\modrM\to\Omr^{1,\bullet}(\algA,\modM),
\end{equation*}
is called an $\eps$-connection if $\forall a\in\algrA$, $\forall m\in\modrM$,
\begin{equation}
\nabla(ma)=\nabla(m)a+m\dd a.\label{defconn}
\end{equation}
\end{definition}

\begin{proposition}
Let $\nabla$ be an $\eps$-connection on $\modrM$. Then, it can be extended as a linear map
\begin{equation*}
\nabla:\Omr^{p,\bullet}(\algA,\modM)\to\Omr^{p+1,\bullet}(\algA,\modM)
\end{equation*}
using the relation $\forall\omega\in\Omr^{p,|\omega|}(\algA,\modM)$ and $\forall\kX_1,\dots,\kX_{p+1}\in\Der^\bullet_\eps(\algA)$ homogeneous,
\begin{multline}
\nabla(\omega)(\kX_1,\dots,\kX_{p+1})=\sum_{m=1}^{p+1} (-1)^{m+1}f_4\ \nabla(\omega(\kX_1,\dots \omi{m} \dots,\kX_{p+1}))(\kX_m)\label{defnabla}\\
+\sum_{1\leq m<n\leq p+1}(-1)^{m+n}f_5\ \omega([\kX_m,\kX_n]_\eps,\dots \omi{m} \dots \omi{n} \dots,\kX_{p+1}),
\end{multline}
where the factors $f_i$ are given by:
\begin{align}
f_4&=\prod_{a=m+1}^{p+1}\eps(|\kX_m|,|\kX_a|),\nonumber\\
f_5&=\eps(|\kX_n|,|\kX_m|)\prod_{a=1}^{m-1}\eps(|\kX_a|,|\kX_m|)\prod_{a=1}^{n-1}\eps(|\kX_a|,|\kX_n|).\nonumber
\end{align}

Then $\nabla$ satisfies the following relation, $\forall\omega\in\Omr^{p,|\omega|}(\algA,\modM)$, $\forall\eta\in\Omr^{q,|\eta|}(\algA)$ homogeneous,
\begin{equation}
\nabla(\omega\eta)=\nabla(\omega)\eta+(-1)^p\omega\dd\eta.\label{propconn}
\end{equation}

The obstruction for $\nabla$ to be an homomorphism of right $\algrA$-module is measured by its curvature $R=\nabla^2$, homogeneous linear map of degree 0, which takes the following form, $\forall m\in\modrM$, $\forall\kX,\kY\in\Der^\bullet_\eps(\algA)$ homogeneous,
\begin{equation}
R(m)(\kX,\kY)=\eps(|\kX|,|\kY|)\nabla(\nabla(m)(\kY))(\kX)-\nabla(\nabla(m)(\kX))(\kY)-\nabla(m)([\kX,\kY]_\eps).\label{defcurv}
\end{equation}
\end{proposition}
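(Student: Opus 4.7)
The approach will mimic the classical derivation-based construction (as used in Proposition~\ref{propproddiff} for the differential $\dd$), extending $\nabla$ by a Koszul-type formula and carefully tracking how the commutation factor $\eps$ intervenes under each reordering of derivations. The only real work is sign bookkeeping; no fundamentally new idea is required beyond those already used to construct $\dd$. The hard part will be verifying the $\eps$-antisymmetry axiom of \eqref{defcalcdiff} for the extended $\nabla(\omega)$.

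\textbf{Well-definedness of the extension.} First I would check that the right-hand side of \eqref{defnabla} actually lies in $\Omr^{p+1,\bullet}(\algA,\modM)$. Homogeneity of degree zero is clear term by term: each summand has total module-degree $|\omega|+|\kX_1|+\dots+|\kX_{p+1}|$. Linearity over $\caZ^\bullet_\eps(\algA)$ in the last argument is inherited from $\omega$, from the fact that $\nabla$ commutes with right multiplication by a central element up to the $\eps$-factor dictated by \eqref{defmodder}, and from the multiplicativity of $f_4, f_5$ in the degrees of the $\kX_a$'s. The $\eps$-antisymmetry is the combinatorially delicate piece: $f_4$ and $f_5$ are designed so that the swap $\kX_i \leftrightarrow \kX_{i+1}$ produces a global factor $-\eps(|\kX_i|,|\kX_{i+1}|)$. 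This is essentially the same verification as for $\dd$; the difference between $f_4$ and the corresponding $f_2$ (namely the absence of an $\eps(|\omega|,|\kX_m|)$ factor) reflects the fact that the $\nabla$-evaluation $\nabla(\omega(\cdots))(\kX_m)$ places $\kX_m$ on the right of the module-valued form rather than on the left.

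\textbf{Leibniz identity.} For \eqref{propconn} I would compute $\nabla(\omega\eta)$ directly from \eqref{defnabla} after substituting the product \eqref{defmodomega}, and then regroup. For each outer-sum term, $\nabla$ applied to a shuffle of $\omega$- and $\eta$-values splits, via the defining relation \eqref{defconn}, into two pieces. The pieces where $\nabla$ hits the $\omega$-factor, together with bracket terms $[\kX_m,\kX_n]_\eps$ whose two indices both fall in the $\omega$-block, reassemble into $\nabla(\omega)\fois\eta$ by the standard shuffle decomposition of $\kS_{p+q+1}$. The pieces where $\nabla$ hits the $\eta$-factor, combined with bracket terms internal to the $\eta$-block, reassemble into $\omega\fois\dd\eta$, and the global factor $(-1)^p$ appears when the resulting $\dd\eta$-term is moved past the $p$ arguments of $\omega$ according to the sign conventions of \eqref{defprod}. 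The mixed bracket contributions (index $m$ in the $\omega$-block, index $n$ in the $\eta$-block, or vice versa) cancel in pairs thanks to the antisymmetry of $[-,-]_\eps$ and the shuffle signs. The base case $p=q=0$ is just \eqref{defconn}.

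\textbf{Curvature.} That $R=\nabla^2$ is a right $\algrA$-module homomorphism is then immediate: viewing $a\in\algrA$ as an element of $\Omr^{0,\bullet}(\algA)$ and using \eqref{propconn} twice,
\begin{equation*}
\nabla^2(ma)=\nabla\bigl(\nabla(m)\fois a + m\fois\dd a\bigr)=\nabla^2(m)\fois a - \nabla(m)\fois\dd a + \nabla(m)\fois\dd a + m\fois\dd^2 a = \nabla^2(m)\fois a,
\end{equation*}
where the sign $(-1)^1=-1$ comes from $\nabla(m)\in\Omr^{1,\bullet}(\algA,\modM)$ and $\dd^2=0$ eliminates the last term. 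Finally, the explicit formula \eqref{defcurv} is obtained by specializing \eqref{defnabla} at $p=1$ with argument $\nabla(m)$: the first-line contributions $m=1$ and $m=2$ yield the two $\nabla\nabla$ terms, with $f_4=\eps(|\kX|,|\kY|)$ in the first case and $f_4=1$ in the second, while the single second-line term produces $-\nabla(m)([\kX,\kY]_\eps)$ after $f_5$ trivializes in this low degree.
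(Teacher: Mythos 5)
Your proposal is correct and follows essentially the same route as the paper, which simply observes that \eqref{defnabla} is modeled on \eqref{defdiff} and that proving \eqref{propconn} is ``like proving that $\dd$ is an $\widetilde\eps$-derivation\ldots a long but straightforward computation''; your shuffle decomposition and the specialization of \eqref{defnabla} at $p=1$ (with $f_4=\eps(|\kX|,|\kY|)$, $f_4=1$, and $f_5=\eps(|\kY|,|\kX|)\eps(|\kX|,|\kY|)=1$) are exactly the intended bookkeeping, spelled out in somewhat more detail than the paper itself provides. Your module-homomorphism computation for $R$ also coincides with the paper's proof of the subsequent proposition.
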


\begin{proof}
The formula \eqref{defnabla} is inspired by the formula \eqref{defdiff} in order that the definition of this extension is well defined. Proving \eqref{propconn} is therefore like proving that $\dd$ is an $\widetilde\eps$-derivation of degree $(1,0)$ in the proposition \ref{propproddiff}, it is a long but straightforward computation.
\end{proof}

\begin{proposition}
The space of all $\eps$-connections on $\modrM$ is an affine space modeled on the vector space $\Hom^0_\algA(\modrM,\Omr^{1,\bullet}(\algA,\modM))$. Furthermore, the curvature $R$ associated to an $\eps$-connection $\nabla$ is a homomorphism of right $\algrA$-modules.
\end{proposition}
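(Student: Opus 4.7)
The plan splits naturally along the two assertions.

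For the affine structure, I would begin by taking two $\eps$-connections $\nabla_1,\nabla_2:\modrM\to\Omr^{1,\bullet}(\algA,\modM)$ and forming their difference $\alpha=\nabla_1-\nabla_2$. Since both $\nabla_i$ are homogeneous of degree $0$ in the $\Gamma$-grading, so is $\alpha$. Applying \eqref{defconn} to each term, the inhomogeneous piece $m\,\dd a$ cancels, leaving $\alpha(ma)=\alpha(m)a$ for all $a\in\algrA$, $m\in\modrM$. Hence $\alpha\in\Hom^{0}_\algA(\modrM,\Omr^{1,\bullet}(\algA,\modM))$. Conversely, if $\nabla$ is any fixed $\eps$-connection and $\alpha$ is a degree-$0$ right $\algrA$-module homomorphism into $\Omr^{1,\bullet}(\algA,\modM)$, then $\nabla+\alpha$ inherits homogeneity, and the defining relation \eqref{defconn} holds because $\alpha(ma)=\alpha(m)a$. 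This exhibits the space of $\eps$-connections as a torsor over $\Hom^{0}_\algA(\modrM,\Omr^{1,\bullet}(\algA,\modM))$.

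For the curvature statement, I would invoke the extension relation \eqref{propconn} twice. Starting from $\nabla(ma)=\nabla(m)a+m\,\dd a$ and applying $\nabla$ once more gives
\begin{equation*}
R(ma)=\nabla\bigl(\nabla(m)\cdot a\bigr)+\nabla\bigl(m\cdot\dd a\bigr).
\end{equation*}
In the first term, $\nabla(m)\in\Omr^{1,\bullet}(\algA,\modM)$ has $\gZ$-degree $1$, so \eqref{propconn} yields $\nabla(\nabla(m)\cdot a)=R(m)\,a-\nabla(m)\,\dd a$. In the second term, $m$ has $\gZ$-degree $0$, so \eqref{propconn} gives $\nabla(m\cdot\dd a)=\nabla(m)\,\dd a+m\,\dd^{2}a=\nabla(m)\,\dd a$ because $\dd^{2}=0$ by Proposition~\ref{propproddiff}. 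The two unwanted terms $\mp\nabla(m)\,\dd a$ cancel, leaving $R(ma)=R(m)\,a$, which is precisely the right $\algrA$-linearity of $R$.

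No step is really an obstacle; the only delicate point is that the signs in \eqref{propconn} need to be tracked with the correct $\gZ$-degrees of $\nabla(m)$ and $m$, and one must be sure that the extension $\nabla:\Omr^{p,\bullet}\to\Omr^{p+1,\bullet}$ satisfies \eqref{propconn} on mixed products with elements of $\algrA\subset\Omr^{0,\bullet}(\algA)$; both have already been established. A small side-remark could acknowledge that the affine statement is meaningful only when the space of $\eps$-connections is nonempty, an existence question one typically settles by exhibiting a canonical connection on a free module.
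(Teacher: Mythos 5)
Your proof is correct and follows essentially the same route as the paper: the difference of two connections kills the inhomogeneous term $m\,\dd a$, and the curvature computation uses \eqref{propconn} twice so that the cross terms $\mp\nabla(m)\,\dd a$ cancel and $m\,\dd^2 a$ vanishes. The only additions beyond the paper's argument are the (welcome) converse check that $\nabla+\alpha$ is again an $\eps$-connection and the explicit sign bookkeeping, neither of which changes the approach.
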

\begin{proof}
Let $\nabla$ and $\nabla'$ be two $\eps$-connections, and $\Psi=\nabla-\nabla'$. Then, $\forall a\in\algrA$ and $\forall m\in\modrM$,
\begin{equation}
\Psi(ma)=\nabla(m)a-\nabla'(m)a=\Psi(m)a.
\end{equation}
Therefore, $\Psi\in\Hom^0_\algA(\modrM,\Omr^{1,\bullet}(\algA,\modM))$.

In the same way, from $R=\nabla^2$, using \eqref{defconn} and \eqref{propconn}, one obtains:
\begin{equation}
R(ma)=\nabla^2(m)a-\nabla(m)\dd a+\nabla(m)\dd a+m\dd^2 a=R(m)a.
\end{equation}
\end{proof}

\begin{definition}[Gauge group]
The gauge group of $\modrM$ is defined as the group of automorphisms of degree 0 of $\modrM$ as a right $\algrA$-module, $\Aut_\algA^0(\modM,\modM)$. Its elements are called gauge transformations.

Each gauge transformation $\Phi$ is extended to an automorphism of degree $(0,0)$ of $\Omr^{\bullet,\bullet}(\algA,\modM)$, considered as a right $\Omr^{\bullet,\bullet}(\algA)$-module, in the following way, $\forall\omega\in\Omr^{p,|\omega|}(\algA,\modM)$ and $\forall\kX_1,\dots,\kX_{p}\in\Der^\bullet_\eps(\algA)$,
\begin{equation}
\Phi(\omega)(\kX_1,\dots,\kX_p)=\Phi(\omega(\kX_1,\dots,\kX_p)).
\end{equation}
\end{definition}

\begin{proposition}
The gauge group of $\modrM$ acts on the space of its $\eps$-connections in the following way: for $\Phi\in\Aut_\algA^0(\modM,\modM)$ and $\nabla$ an $\eps$-connection,
\begin{equation}
\nabla^\Phi=\Phi\circ\nabla\circ\Phi^{-1}
\end{equation}
is again an $\eps$-connection. The induced action of $\Phi$ on the associated curvature is given by
\begin{equation}
R^\Phi=\Phi\circ R\circ\Phi^{-1}.
\end{equation}
\end{proposition}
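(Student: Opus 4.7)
The plan is to verify the Leibniz rule \eqref{defconn} for $\nabla^\Phi$ by direct computation, using that both $\Phi$ and $\Phi^{-1}$ are degree-zero right $\algrA$-module automorphisms whose extensions to $\Omr^{\bullet,\bullet}(\algA,\modM)$ are right $\Omr^{\bullet,\bullet}(\algA)$-module automorphisms, and then to deduce the curvature formula by squaring the identity $\nabla^\Phi=\Phi\circ\nabla\circ\Phi^{-1}$ after having promoted it to an identity of operators on the whole differential calculus.

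For the first claim, given $m\in\modrM$ and $a\in\algrA$, I compute
\begin{align*}
\nabla^\Phi(m\fois a) &= \Phi\bigl(\nabla(\Phi^{-1}(m)\fois a)\bigr) \\
 &= \Phi\bigl(\nabla(\Phi^{-1}(m))\fois a + \Phi^{-1}(m)\fois \dd a\bigr) \\
 &= \Phi(\nabla(\Phi^{-1}(m)))\fois a + m\fois \dd a \\
 &= \nabla^\Phi(m)\fois a + m\fois \dd a,
\end{align*}
where the three nontrivial steps use, respectively, right $\algrA$-linearity of $\Phi^{-1}$, the Leibniz rule \eqref{defconn} for $\nabla$, and right $\Omr^{\bullet,\bullet}(\algA)$-linearity of (the extension of) $\Phi$ together with $\Phi\circ\Phi^{-1}=\text{id}$. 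Homogeneity of $\nabla^\Phi$ in degree $0$ is immediate from the corresponding property of the three factors, so $\nabla^\Phi$ is indeed an $\eps$-connection.

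For the curvature formula, I would first establish the operator identity
\[
\nabla^\Phi \;=\; \Phi\circ\nabla\circ\Phi^{-1}\ :\ \Omr^{p,\bullet}(\algA,\modM)\longrightarrow\Omr^{p+1,\bullet}(\algA,\modM)
\]
for every $p\geq 0$, where on the left $\nabla^\Phi$ denotes its extension to forms via \eqref{defnabla}. This is obtained by comparing the two sides term by term on \eqref{defnabla}: the coefficients $f_4$ and $f_5$ depend only on the $\Gamma$-degrees of the $\kX_i$, which are unaffected by $\Phi$; the extension of $\Phi$ commutes with evaluation on derivations by definition; and $\Phi$ is right $\algrA$-linear, so it passes through the values $\omega(\kX_1,\dots\omi{m}\dots,\kX_{p+1})$ and through the expression $\nabla((\Phi^{-1}\omega)(\kX_1,\dots\omi{m}\dots))(\kX_m)$. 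Squaring then yields
\[
R^\Phi \;=\; (\nabla^\Phi)^2 \;=\; \Phi\circ\nabla\circ\Phi^{-1}\circ\Phi\circ\nabla\circ\Phi^{-1} \;=\; \Phi\circ\nabla^2\circ\Phi^{-1} \;=\; \Phi\circ R\circ\Phi^{-1}.
\]

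The one point requiring a little care is the functoriality claim just used: one must check that the extension to forms of the $\eps$-connection $\Phi\circ\nabla\circ\Phi^{-1}$ on $\modrM$, built via \eqref{defnabla}, genuinely coincides with the conjugation by $\Phi$ of the already-extended $\nabla$. As sketched above, this is a short verification rather than a genuine difficulty, and no new ideas beyond the definitions are required.
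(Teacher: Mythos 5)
Your proof is correct and follows essentially the same route as the paper: verify the Leibniz rule \eqref{defconn} directly using the right $\algrA$-linearity of $\Phi^{\pm 1}$, note degree-zero homogeneity, and obtain the curvature formula by squaring $\Phi\circ\nabla\circ\Phi^{-1}$. The only difference is that the paper dismisses the curvature step as trivial, whereas you rightly flag (and sketch) the small compatibility check that the extension of $\nabla^\Phi$ to forms via \eqref{defnabla} agrees with conjugating the extended $\nabla$ — a welcome precision, not a divergence in method.
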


\begin{proof}
The axioms \eqref{defconn} are satisfied for $\nabla^\Phi$: $\forall a\in\algrA$ and $\forall m\in\modrM$,
\begin{align}
\nabla^\Phi(ma)&=\Phi\circ\nabla(\Phi^{-1}(m)a)\nonumber\\
&=(\nabla^\Phi(m)a)+m\dd a.
\end{align}
And since $|\Phi|=|\Phi^{-1}|=0$, $\nabla^\Phi$ is homogeneous of degree 0. The proof is trivial for $R^\Phi=\nabla^\Phi\circ\nabla^\Phi$.
\end{proof}

\bigskip
Let $\algrA$ be an $\eps$-graded involutive ($\gC$-)algebra, and $\modrM$ a right $\algrA$-module with a hermitean structure.

A gauge transformation $\Phi$ is called unitary if $\forall m,n\in\modrM$,
\begin{equation}
\langle\Phi(m),\Phi(n)\rangle=\langle m,n\rangle.
\end{equation}

\begin{proposition}
An $\eps$-connection $\nabla$ is called hermitean if $\forall m,n\in\modrM$ homogeneous and $\forall\kX=\sum_{k\in\Gamma}\kX_k$ a real $\eps$-derivation decomposed in homogeneous components,
\begin{equation}
\sum_{k\in\Gamma}\eps(-|m|+|n|,k)\langle\nabla(m)(\kX_k),n\rangle+\langle m,\nabla(n)(\kX)\rangle=\dd(\langle m,n\rangle)(\kX),
\end{equation}
where $|\kX_k|=k$.

Then, the space of hermitean $\eps$-connections on $\modrM$ is stable under the group of unitary gauge transformations.
\end{proposition}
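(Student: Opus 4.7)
The plan is a direct substitution argument: write out the hermiteanity condition for $\nabla^\Phi = \Phi\circ\nabla\circ\Phi^{-1}$ and reduce it to the hermiteanity condition for $\nabla$ applied to $\Phi^{-1}(m), \Phi^{-1}(n)$, using only two ingredients, the degree-preserving property of $\Phi$ and the unitarity of $\Phi$.

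First I would record that, because $\Phi \in \Aut_\algA^0(\modM,\modM)$, both $\Phi$ and $\Phi^{-1}$ are homogeneous of degree $0$, so $|\Phi^{\pm1}(m)|=|m|$ for any homogeneous $m\in\modrM$. In particular all $\eps$-factors of the form $\eps(-|m|+|n|,k)$ that appear in the hermiteanity condition are unchanged if $m,n$ are replaced by $\Phi^{-1}(m),\Phi^{-1}(n)$. Next I would use unitarity: for any $m,n\in\modrM$,
\begin{equation*}
\langle \Phi(x), y\rangle = \langle \Phi(x),\Phi(\Phi^{-1}(y))\rangle = \langle x,\Phi^{-1}(y)\rangle,
\end{equation*}
and similarly $\langle x,\Phi(y)\rangle = \langle \Phi^{-1}(x),y\rangle$, as well as $\langle m,n\rangle = \langle \Phi^{-1}(m),\Phi^{-1}(n)\rangle$.

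Then I would evaluate each piece of the hermiteanity condition for $\nabla^\Phi$. By definition $\nabla^\Phi(m)(\kX_k)=\Phi\bigl(\nabla(\Phi^{-1}(m))(\kX_k)\bigr)$, so the above identities give
\begin{align*}
\langle \nabla^\Phi(m)(\kX_k),n\rangle &= \langle \nabla(\Phi^{-1}(m))(\kX_k),\Phi^{-1}(n)\rangle,\\
\langle m,\nabla^\Phi(n)(\kX)\rangle &= \langle \Phi^{-1}(m),\nabla(\Phi^{-1}(n))(\kX)\rangle,\\
\dd(\langle m,n\rangle)(\kX) &= \dd(\langle \Phi^{-1}(m),\Phi^{-1}(n)\rangle)(\kX).
\end{align*}
Combined with the degree invariance of the $\eps$-factors, the hermiteanity condition for $\nabla^\Phi$ at $(m,n,\kX)$ is literally the hermiteanity condition for $\nabla$ at $(\Phi^{-1}(m),\Phi^{-1}(n),\kX)$, which holds by hypothesis on $\nabla$ and on the real $\eps$-derivation $\kX$. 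This gives the claim.

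There is essentially no obstacle here beyond bookkeeping; the only subtlety worth checking is that the three displayed identities really require unitarity on \emph{both} arguments (for the first two, one uses $\Phi(\Phi^{-1}(n))=n$ or $\Phi(\Phi^{-1}(m))=m$ to insert a $\Phi$ on the passive slot before applying unitarity), and that the grading condition $|\Phi^{\pm 1}|=0$ is what lets one transport homogeneous decompositions of $\kX$ through the argument unchanged.
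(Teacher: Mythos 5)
Your proof is correct and follows essentially the same route as the paper's: both reduce the hermiteanity condition for $\nabla^\Phi$ at $(m,n,\kX)$ to that for $\nabla$ at $(\Phi^{-1}(m),\Phi^{-1}(n),\kX)$ via unitarity of $\Phi$, and then use $\langle\Phi^{-1}(m),\Phi^{-1}(n)\rangle=\langle m,n\rangle$ to recover $\dd(\langle m,n\rangle)(\kX)$. Your write-up merely makes explicit the degree-$0$ bookkeeping for the $\eps$-factors, which the paper leaves implicit.
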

\begin{proof}
Let $\Phi$ be a unitary gauge transformation, $\nabla$ an hermitean $\eps$-connection, $\kX=\sum_{k\in\Gamma}\kX_k$ a real $\eps$-derivation, and $m,n\in\modrM$. Then,
\begin{align*}
\sum_{k\in\Gamma}&\eps(-|m|+|n|,k)\langle\nabla^\Phi(m)(\kX_k),n\rangle+\langle m,\nabla^\Phi(n)(\kX)\rangle\\
&=\sum_{k\in\Gamma}\eps(-|m|+|n|,k)\langle\nabla\circ\Phi^{-1}(m)(\kX_k),\Phi^{-1}(n)\rangle+\langle\Phi^{-1}(m),\nabla\circ\Phi^{-1}(n)(\kX)\rangle\\
&=\dd(\langle\Phi^{-1}(m),\Phi^{-1}(n)\rangle)(\kX)\\
&=\dd(\langle m,n\rangle)(\kX).
\end{align*}
\end{proof}

\section{\texorpdfstring{Applications to various examples of $\eps$-graded algebras}{Applications to various examples of epsilon-graded algebras}}
\label{sec-ex}

The aim of this section is to illustrate the previous definitions with some results for particular $\eps$-graded algebras. Four typical examples have been chosen here. Firstly, we consider examples of $\eps$-graded commutative algebras and the case of a particular supermanifold. Then, we study two cases of noncommutative $\eps$-graded algebras: matrix algebras with elementary grading, for which the same properties as in the non-graded case \cite{DuboisViolette:1988ir} occur, and matrix algebras with fine grading, very different from the previous case. Finally, we consider a generalization of matrix algebras, the Moyal algebra, from which one can construct a $\gZ_2$-graded algebra, and we give some mathematical explanations about a particular gauge theory on the Moyal space thanks to this superalgebra.

\subsection{\texorpdfstring{$\eps$-graded commutative algebras}{epsilon-graded commutative algebras}}
\label{subsec-excommut}

Firstly, a non-graded associative algebra can be seen as a $\eps$-graded algebra, so that the differential calculus based on the derivations of an associative algebra, presented in \cite{DuboisViolette:1988cr} for instance, is a particular case of the formalism of this chapter. Consequently, for the commutative non-graded algebra $\algA=C^\infty(M)$ of the functions of a smooth compact manifold $M$, the differential calculus of subsection \ref{subsec-diffcalc} is the de Rham complex for the manifold $M$.
\medskip

Let us consider the case of the finite dimensional $\gZ$-graded commutative algebra $\algrA = \exter^\bullet V$ with its usual grading, where $V$ is a vector space of dimension $q$. The commutation factor is taken to be $\eps(p,q) = (-1)^{pq}$ with $p,q \in \gZ$. Denote by $\{\theta_i\}_{i=1,\ldots,q}$ a basis of $V$. Then $\algrA = \exter^\bullet (\theta_1,\dots,\theta_q)$ where in this algebra the $\theta_i$'s are anticommuting variables of degree $1$. The $\eps$-center of this algebra is the whole algebra: $\caZ^\bullet_\eps(\algA) = \algrA$.

Using \eqref{defderiv}, any $\eps$-derivation $\kX$ on $\algrA$ homogeneous of degree $r$ is completely determined by its values on the generators $\theta_i$. Because we are dealing with a graded commutative algebra, one can easily verify that any values $\kX_i = \kX(\theta_i) \in \algA^{r+1}$ are acceptable. Notice the shift in the degrees between the one of $\kX$ as a graded derivation and the degrees of the $\kX_i$'s as elements in $\algrA$. Denote by $\{\alpha^j\}_{j=1,\ldots,q}$ the dual basis of $\{\theta_i\}_{i=1,\ldots,q}$. Then $\alpha^j$ defines a derivation of degree $-1$: $\theta_i \mapsto \delta^j_i \gone$. As a module over the $\eps$-center, one as $\Der^\bullet_\eps(\algA) = \algA^{\bullet + 1} \otimes V^\ast$ where the module structure is the one on $\algrA$ and $V^\ast$ is the dual vector space of $V$. One can explicitly write $\kX = \kX_i \otimes \alpha^i$ with the previous notations.

The structure of $\eps$-Lie algebra of $\Der^\bullet_\eps(\algA)$ can be described as follows. The $\eps$-Lie bracket on $\gone \otimes V^\ast$ is zero and for any $\eps$-derivations $\kX$ and $\kY$ of degrees $r$ and $s$, one has $[\kX, \kY]_\eps = \kX_i \alpha^i(\kY_j) \otimes \alpha^j - (-1)^{rs} \kY_j \alpha^j(\kX_i) \otimes \alpha^i$ with obvious notations. This $\eps$-Lie bracket is a Nijenhuis-Richardson type bracket.

Using the structure of $\Der^\bullet_\eps(\algA)$, any $n$-form $\omega \in \Omr^{n,k}(\algA)$ is completely given by its values on the derivations in $\gone \otimes V^\ast$, \textsl{i.e.} on the $\alpha^i$'s. For instance, $1$-forms of degree $k$ are elements in $\algA^{k} \otimes V$ where we identify $V^{\ast \ast} = V$. For $n$-forms, notice that one has $\omega(\alpha^{i_1}, \ldots, \alpha^{i_p}, \alpha^{i_{p+1}}, \ldots, \alpha^{i_n}) = \omega(\alpha^{i_1}, \ldots, \alpha^{i_{p+1}}, \alpha^{i_{p}}, \ldots, \alpha^{i_n})$ because the $\alpha^i$'s are of odd degree as derivations. The vector space of $n$-forms of degree $k$ is then $\exter^{(k-n)} V \otimes \symes^n V$ where $\symes^\bullet V$ is the symmetric algebra defined on $V$.

In order to be precise, we define $\theta_{i_1} \vee \cdots \vee \theta_{i_n} \in \symes^n V$ as the $n$-form of degree $n$ 
\begin{equation*}
(\theta_{i_1} \vee \cdots \vee \theta_{i_n})(\alpha^{j_1}, \ldots, \alpha^{j_n}) =
(-1)^{\frac{n(n-1)}{2}} \sum_{\sigma \in \kS_n} \theta_{i_1}(\alpha^{j_{\sigma(1)}}) \cdots \theta_{i_n}(\alpha^{j_{\sigma(n)}})
\end{equation*}
With this definition, the product of $2$ forms $\omega = \omega^a \otimes P_a \in \exter^{k} V \otimes \symes^m V$ and $\eta = \eta^b \otimes Q_b \in \exter^{\ell} V \otimes \symes^n V$ is just the product in the graded commutative algebra $\exter^{\bullet} V \otimes \symes^\bullet V$: $\omega \eta = (-1)^{m\ell} \omega^a \wedge \eta^b \otimes P_a \vee Q_b$, so that as a graded commutative algebra $\Omr^{\bullet,\bullet}(\algA) = \exter^{\bullet} V \otimes \symes^\bullet V$.

Applied to the $\alpha^i$'s, the definition of the differential, \eqref{defdiff}, simplifies because the second sum is zero. Its explicit expression on forms as elements of the algebra $\exter^{\bullet} V \otimes \symes^\bullet V$ is then
\begin{multline*}
(\theta_{i_1} \wedge \cdots \wedge \theta_{i_k}) \otimes (\theta_{j_1} \vee \cdots \vee \theta_{j_n}) \mapsto \\
(-1)^n \sum_{\ell = 1}^{k} (-1)^{k-\ell} (\theta_{i_1} \wedge \cdots \wedge \theta_{i_{\ell-1}} \wedge \theta_{i_{\ell+1}} \wedge \cdots \wedge \theta_{i_k}) \otimes (\theta_{i_\ell} \vee \theta_{j_1} \vee \cdots \vee \theta_{j_n})
\end{multline*}
One just ``transfers'' a $\theta_i$ from the antisymmetric part to the symmetric one.

\medskip
Consider now the $\gZ$-graded commutative algebra obtained as a tensor product
\begin{equation*}
\algrA = C^\infty(\gR^{p|q}) = C^\infty(\gR^p) \otimes \exter^\bullet V
\end{equation*}
where as before $V$ is a vector space of dimension $q$. This algebra describes a particular case of supermanifold \cite{Leites:1980}. As in the previous example $\caZ^\bullet_\eps(\algA) = \algrA$.

Denote by $\{\theta_i\}_{i=1,\ldots,q}$ a basis of $V$. Then any element $f \in \algrA$ can be decomposed as
\begin{equation*}
f(x) = \sum_{I\subset\{1,\dots,q\}} f_I(x_1,\dots,x_p) \theta^I
\end{equation*}
where $(x_i)$ is the canonical coordinate system of $\gR^p$, $I$ are ordered subsets of $\{1,\dots,q\}$, and $\theta^I=\wedge_{i\in I}\theta_i$.

One can then show that any $\eps$-derivation on $\algrA$ can be decomposed into two parts: one part acts as a derivation on $C^\infty(\gR^p)$ with values in $C^\infty(\gR^p) \otimes \exter^\bullet V$, and an other part is a smooth function with values in the $\eps$-derivations on $\exter^\bullet V$:
\begin{equation*}
\Der^\bullet_\eps(\algA) = \left[\Gamma(\gR^p) \otimes \exter^\bullet V \right] \oplus \left[ C^\infty(\gR^p) \otimes \exter^{\bullet+1} V \otimes V^\ast \right]
\end{equation*}
where $\Gamma(\gR^p)$ is the usual Lie algebra of vector fields on $\gR^p$ and where we have explicitly used the structure of the space of $\eps$-derivations on $\exter^\bullet V$.

As a $\caZ^\bullet_\eps(\algA)$-module, $\Der^\bullet_\eps(\algA)$ is generated by the two disjoint ($\eps$-)Lie algebras $\Gamma(\gR^p)$ and $V^\ast$, so that
\begin{equation*}
\underline\Omega_\eps^{\bullet,\bullet}(\algA) = \Omega^\bullet_{\text{dR}}(\gR^p) \otimes \Omega^{\bullet,\bullet}(\exter^\bullet V)
\end{equation*}
as graded commutative differential algebras.

\subsection{\texorpdfstring{$\eps$-graded matrix algebras with elementary grading}{epsilon-graded matrix algebras with elementary grading}}
\label{subsec-exelem}

Let $\Gamma$ be an abelian group and $\algA=\Matr_D$ the algebra of $D\times D$ complex matrices, such that $\algA$ is a $\Gamma$-graded algebra:
\begin{equation}
\algrA=\bigoplus_{\alpha\in\Gamma}\algA^\alpha.\label{gradingmatrix}
\end{equation}
Let $(E_{ij})_{1 \leq i, j \leq D}$ be the canonical basis of $\algrA$, whose product is as usual
\begin{equation}
E_{ij}\fois E_{kl}=\delta_{jk}E_{il}\label{prodmat}.
\end{equation}
\begin{definition}[Elementary grading \cite{Bahturin:2002}]
The grading \eqref{gradingmatrix} is called elementary if there exists a map $\varphi:\{1,\dots,D\}\to\Gamma$ such that $\forall i,j\in\{1,\dots,D\}$, $E_{ij}$ is homogeneous of degree $|E_{ij}|=\varphi(i)-\varphi(j)$.
\end{definition}

From now on, we suppose that the grading \eqref{gradingmatrix} is elementary. Then, one can check that the usual conjugation is an involution for $\algrA$. Furthermore, $\algrA$ can be characterized by the following result:
\begin{proposition}[\cite{Bahturin:2002}]
The matrix algebra $\algrA=\Matr_D$, with an elementary $\Gamma$-grading, is isomorphic, as a $\Gamma$-graded algebra, to the endomorphism algebra of some $D$-dimensional $\Gamma$-graded vector space.
\end{proposition}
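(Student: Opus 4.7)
The plan is to produce the candidate graded vector space by hand and then verify that the standard matrix--endomorphism identification is automatically grading-preserving. Concretely, let $V = \gK^D$ with canonical basis $\{e_i\}_{i=1,\dots,D}$, and declare $e_i$ to be homogeneous of degree $\varphi(i) \in \Gamma$. This turns $V$ into a $\Gamma$-graded vector space of dimension $D$, with $V^\alpha = \bigoplus_{\varphi(i)=\alpha} \gK\, e_i$.

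Next, I would equip $\Hom(V,V)$ with its natural $\Gamma$-grading, where $f \in \Hom^\alpha(V,V)$ iff $f(V^\beta) \subseteq V^{\alpha + \beta}$ for every $\beta \in \Gamma$. Since $V$ is finite-dimensional and its homogeneous components span, one has $\Hom(V,V) = \bigoplus_\alpha \Hom^\alpha(V,V)$, and composition respects the grading by construction, so $\Hom(V,V)$ is a $\Gamma$-graded associative algebra.

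Then I would define the isomorphism $\Phi : \Matr_D \to \Hom(V,V)$ by the standard assignment $\Phi(E_{ij})(e_k) = \delta_{jk}\, e_i$, extended linearly. This is the usual matrix/endomorphism identification, and it is a unital algebra isomorphism thanks to \eqref{prodmat}. To check it is a morphism of $\Gamma$-graded algebras, I only need to test it on the homogeneous basis: $\Phi(E_{ij})$ sends $V^{\varphi(j)}$ into $V^{\varphi(i)}$ and kills $V^\beta$ for $\beta \neq \varphi(j)$, so $\Phi(E_{ij}) \in \Hom^{\varphi(i)-\varphi(j)}(V,V)$, which matches $|E_{ij}| = \varphi(i)-\varphi(j)$ by the definition of elementary grading. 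Hence $\Phi(\algA^\alpha) \subseteq \Hom^\alpha(V,V)$ for each $\alpha$, and since both sides have the same total dimension $D^2$ and $\Phi$ is an isomorphism of ungraded algebras, the inclusion of each homogeneous component is an equality.

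There is no real obstacle here; the only point that requires a small check is the well-definedness of the grading on $\Hom(V,V)$ (which uses finite-dimensionality) and the compatibility of $\Phi$ with involutions if one wants to track that structure too. The construction also makes the dependence on $\varphi$ transparent: two different maps $\varphi, \varphi'$ differing by a global shift in $\Gamma$ give isomorphic graded vector spaces and hence the same graded endomorphism algebra, matching the expected ambiguity in the choice of $\varphi$.
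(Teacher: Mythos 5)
Your argument is correct and complete: grading $V=\gK^D$ by $|e_i|=\varphi(i)$, grading $\Hom(V,V)$ in the standard way, and checking on the basis $E_{ij}$ that the canonical identification is degree-preserving is exactly the expected proof. The paper itself gives no proof of this proposition --- it is quoted from \cite{Bahturin:2002} --- so there is nothing to compare against, but your construction is the standard one and the dimension-count step showing $\Phi(\algA^\alpha)=\Hom^\alpha(V,V)$ closes the argument properly.
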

More general gradings on $\algA$ have been classified in \cite{Bahturin:2002}.

\subsubsection{Properties}
\label{subsub-propelem}

Let $\eps:\Gamma\times\Gamma\to\gC^\ast$ be a commutation factor, which turns the elementary $\Gamma$-graded involutive algebra $\algrA$ into an $\eps$-graded algebra.

\begin{proposition}
\label{prop-centermatrix}
The $\eps$-center of $\algrA$ is trivial:
\begin{equation}
\caZ_\eps^\bullet(\algA)=\caZ_\eps^0(\algA)=\gC\gone.
\end{equation}
\end{proposition}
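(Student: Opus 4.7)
My plan is to decompose the problem along the grading and treat each graded piece separately.

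First, I would observe that the $\eps$-center is a $\Gamma$-graded subspace: if $a = \sum_\alpha a_\alpha$ lies in the $\eps$-center, then testing against an arbitrary homogeneous $b$ of degree $\beta$ gives $[a,b]_\eps = \sum_\alpha (a_\alpha b - \eps(\alpha,\beta) b a_\alpha)$, which decomposes in $\Gamma$, so each component $a_\alpha$ is itself $\eps$-central. Hence it suffices to determine $\caZ_\eps^\alpha(\algA)$ for each $\alpha \in \Gamma$ separately.

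For the degree $\alpha = 0$ piece, I would use the basic identity $\eps(0,\beta)=1_\gK$ recalled just after Definition~\ref{def-commfactor}. This makes the $\eps$-bracket of a degree $0$ element with any homogeneous element coincide with the ordinary commutator. Since the bracket determines the $\eps$-center, $\caZ_\eps^0(\algA)$ equals the intersection of $\algA^0$ with the usual center $\gC\gone$ of $\Matr_D$; and since $|\gone|=0$, this intersection is exactly $\gC\gone$.

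For the degree $\alpha\neq 0$ piece, I would show $\caZ_\eps^\alpha(\algA)=0$ by a direct computation on matrix units. Write an $a \in \algA^\alpha$ as $a=\sum a_{ij} E_{ij}$ with the sum running over the pairs satisfying $\varphi(i)-\varphi(j)=\alpha$. Since $\alpha\neq 0$, every contributing pair has $i\neq j$. Suppose some $a_{i_0 j_0}\neq 0$. Test $\eps$-centrality against the diagonal idempotent $E_{j_0 j_0}$, which is homogeneous of degree $0$: using \eqref{prodmat} and $\eps(\alpha,0)=1_\gK$, one computes
\begin{equation*}
[a,E_{j_0 j_0}]_\eps = \sum_{i} a_{i j_0}\, E_{i j_0} - \sum_{j} a_{j_0 j}\, E_{j_0 j},
\end{equation*}
and the matrix units $E_{i j_0}$ with $i\neq j_0$ and $E_{j_0 j}$ with $j\neq j_0$ are linearly independent from one another and from $E_{j_0 j_0}$. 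Thus every $a_{i j_0}$ with $i\neq j_0$ and every $a_{j_0 j}$ with $j\neq j_0$ must vanish; since $i_0\neq j_0$, this contradicts $a_{i_0 j_0}\neq 0$. Combining the two cases gives the claim $\caZ_\eps^\bullet(\algA)=\gC\gone$.

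I do not anticipate a substantive obstacle: the main point is simply that the commutation factor becomes trivial in degree $0$, and that for nonzero degree the off-diagonal nature forced by the elementary grading allows an easy separation-of-variables argument against a diagonal matrix unit. The only mild care needed is the reduction to homogeneous components at the very start.
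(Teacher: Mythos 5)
Your proof is correct, and it reaches the same conclusion by an elementary matrix-unit computation, but it is organized differently from the paper's. The paper runs a single uniform calculation: it writes a general $A=\sum a_{ij}E_{ij}$, imposes $[A,E_{kl}]_\eps=0$ for \emph{all} $k,l$, and reads off from the coefficient identity $a_{ik}\delta_{jl}=\eps(\varphi(l)-\varphi(j),\varphi(k)-\varphi(l))a_{lj}\delta_{ik}$ that the off-diagonal entries vanish and the diagonal ones coincide. You instead first reduce to homogeneous components (a step the paper leaves implicit in the notation $\caZ^\bullet_\eps$ but never spells out), then observe that in degree $0$ the $\eps$-bracket degenerates to the ordinary commutator so that $\caZ^0_\eps(\algA)$ is just the classical center of $\Matr_D$, and finally kill the components of nonzero degree by testing only against the diagonal idempotents $E_{j_0j_0}$. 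Your route has the virtue of making transparent \emph{why} the answer does not depend on the choice of $\eps$: the only values of the commutation factor you ever evaluate are $\eps(0,\cdot)=\eps(\cdot,0)=1_\gK$, and you need only the $D$ diagonal units rather than all $D^2$ matrix units as test elements. The paper's version is more compact and treats all cases in one stroke, at the cost of carrying a nontrivial $\eps$-factor through the computation before it cancels. Both arguments are complete; the one small step you should make sure to justify explicitly (and you do) is that distinct degrees $\alpha+\beta$ separate the homogeneous components of $[a,b]_\eps$, which legitimizes the initial reduction.
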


\begin{proof}
Let $A\in\caZ_\eps^\bullet(\algrA)$ written as $A=\sum_{i,j=1}^Da_{ij}E_{ij}$. Then, due to \eqref{prodmat}, we immediately get, $\forall k,l\in\{1,\dots,D\}$,
\begin{equation}
0=[A,E_{kl}]_\eps=\sum_{i,j=1}^D(a_{ik}\delta_{jl}-\eps(\varphi(l)-\varphi(j),\varphi(k)-\varphi(l))a_{lj}\delta_{ik})E_{ij}.
\end{equation}
Therefore, $\forall i,j,k,l$,
\begin{equation}
a_{ik}\delta_{jl}=\eps(\varphi(l)-\varphi(j),\varphi(k)-\varphi(l))a_{lj}\delta_{ik}.
\end{equation}
For $i\neq k$ and $j=l$, we get $a_{ik}=0$, and for $i=k$ and $j=l$, $a_{ii}=a_{jj}$. This means that $A\in\gC\gone$.
\end{proof}

\begin{proposition}[$\eps$-traces]
\label{prop-epstracematrixalgebra}
For any $A=(a_{ij})\in\algrA$, the expression
\begin{equation}
\tr_\eps(A)=\sum_{i=1}^D\eps(\varphi(i),\varphi(i))a_{ii}
\end{equation}
defines a real $\eps$-trace on $\algrA$. Moreover, the space of $\eps$-traces on $\algrA$ is one-dimensional.
\end{proposition}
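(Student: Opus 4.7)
\begin{Proof}[Proof sketch]
The plan is to verify the three required properties (linearity, the $\eps$-cyclicity, reality) and then the one-dimensionality by testing everything on the canonical basis $(E_{ij})$, which is homogeneous because of the elementary grading.

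Linearity of $\tr_\eps$ is immediate from the formula. For the $\eps$-trace identity $T(AB)=\eps(|A|,|B|)T(BA)$ it suffices, by bilinearity, to check it on homogeneous basis elements $E_{ij}$ of degree $\varphi(i)-\varphi(j)$ and $E_{kl}$ of degree $\varphi(k)-\varphi(l)$. Using \eqref{prodmat}, both $\tr_\eps(E_{ij}E_{kl})$ and $\tr_\eps(E_{kl}E_{ij})$ vanish unless simultaneously $j=k$ and $l=i$, in which case one must compare
\begin{equation*}
\eps(\varphi(i),\varphi(i))\quad\text{with}\quad \eps(\varphi(i)-\varphi(j),\varphi(j)-\varphi(i))\,\eps(\varphi(j),\varphi(j)).
\end{equation*}
Expanding the middle factor by bilinearity and using $\eps(a,b)\eps(b,a)=1_\gC$ together with $\eps(-a,b)=\eps(a,b)^{-1}$ yields $\eps(\varphi(i)-\varphi(j),\varphi(j)-\varphi(i))=\eps(\varphi(i),\varphi(i))^{-1}\eps(\varphi(j),\varphi(j))^{-1}$. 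Since each signature $\eps(\varphi(k),\varphi(k))$ belongs to $\{\pm 1_\gC\}$ and therefore equals its own inverse and its own square root of $1_\gC$, the two sides coincide. This is the main computational step; everything else is an immediate consequence.

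Reality then follows at once: $A^*=\sum \overline{a_{ij}}E_{ji}$, so $\tr_\eps(A^*)=\sum_i\eps(\varphi(i),\varphi(i))\overline{a_{ii}}=\overline{\tr_\eps(A)}$ because the coefficients $\eps(\varphi(i),\varphi(i))\in\{\pm 1\}$ are real.

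For uniqueness, let $T$ be any $\eps$-trace on $\algrA$. Testing the defining identity on the pair $(E_{ii},E_{ij})$ for $i\neq j$ gives $T(E_{ij})=\eps(0,|E_{ij}|)T(E_{ij}E_{ii})=0$ since $E_{ij}E_{ii}=0$. Testing it on $(E_{ij},E_{ji})$ gives
\begin{equation*}
T(E_{ii})=\eps(\varphi(i)-\varphi(j),\varphi(j)-\varphi(i))\,T(E_{jj}),
\end{equation*}
and by the same bilinearity computation as above this reads $T(E_{ii})\,\eps(\varphi(i),\varphi(i))=T(E_{jj})\,\eps(\varphi(j),\varphi(j))$ after multiplying by $\eps(\varphi(i),\varphi(i))\eps(\varphi(j),\varphi(j))$ and using that these signatures square to $1_\gC$. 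Hence the quantity $c:=T(E_{ii})\,\eps(\varphi(i),\varphi(i))$ is independent of $i$, so $T=c\,\tr_\eps$, proving that the space of $\eps$-traces is one-dimensional.
\end{Proof}
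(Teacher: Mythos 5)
Your proof is correct and follows essentially the same route as the paper: both test the $\eps$-trace identity on the canonical basis $(E_{ij})$, use $\eps(\varphi(i)-\varphi(j),\varphi(j)-\varphi(i))=\eps(\varphi(i),\varphi(i))\eps(\varphi(j),\varphi(j))$ (via bilinearity, antisymmetry, and $\eps(\alpha,\alpha)^2=1_\gK$), deduce that any $\eps$-trace vanishes off the diagonal and is determined by a single constant, and conclude one-dimensionality. Your version is slightly more explicit than the paper's in verifying that $\tr_\eps$ itself satisfies the identity and in checking reality, but the substance is identical.
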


\begin{proof}
Let $T$ be an $\eps$-trace on $\algrA$. From \eqref{deftrace}, we get, $\forall i,j,k,l$, $T(E_{ij}\fois E_{kl})=\eps(\varphi(i)-\varphi(j),\varphi(k)-\varphi(l))T(E_{kl}\fois E_{ij})$. Using \eqref{prodmat}, one can obtain
\begin{equation}
\delta_{jk}T(E_{il})=\eps(\varphi(i)-\varphi(j),\varphi(k)-\varphi(l))\delta_{il}T(E_{kj}).
\end{equation}
Then, with $i\neq l$ and $j=k$, one gets $T(E_{il})=0$. On the other hand, with $i=l$ and $j=k$, one has $T(E_{ii})=\eps(\varphi(i)-\varphi(j),\varphi(j)-\varphi(i))T(E_{jj}) =\eps(\varphi(i),\varphi(i))\eps(\varphi(j),\varphi(j))T(E_{jj})$. So $\tr_\eps$ is an $\eps$-trace on $\algrA$ and there exists $\lambda\in\gC$ such that $T=\lambda\tr_\eps$.
\end{proof}

In the following, we will denote by $\ksl_\eps^\bullet(D)$ the $\eps$-Lie subalgebra of $\algrA$ of $\eps$-traceless elements.

\begin{proposition}
\label{prop-derivmatrix}
All the $\eps$-derivations of $\algrA$ are inner:
\begin{equation}
\Out_\eps^\bullet(\algA)=\algzero.
\end{equation}
Moreover, the space of real $\eps$-derivations of $\algrA$ is isomorphic to the space of antihermitean matrices.
\end{proposition}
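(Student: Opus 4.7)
The plan is to show that every homogeneous $\eps$-derivation of $\algrA = \Matr_D$ is inner by exhibiting an explicit preimage under $\ad$, and then to read off the reality condition on that preimage.

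First, let $\kX \in \Der^\bullet_\eps(\algA)$ be homogeneous of degree $r\in\Gamma$ and set
\begin{equation*}
M = \sum_{i=1}^{D} \kX(E_{i1}) \fois E_{1i}.
\end{equation*}
Each summand has degree $(r+\varphi(i)-\varphi(1))+(\varphi(1)-\varphi(i)) = r$, so $M$ is homogeneous of degree $r$. I would then compute $\ad_M(E_{kl})$ on the matrix units. On the one hand $M\fois E_{kl} = \kX(E_{k1}) \fois E_{1l}$. On the other hand, applying the $\eps$-Leibniz rule \eqref{defderiv} to the identity $E_{kl}\fois E_{i1} = \delta_{li}E_{k1}$ gives
\begin{equation*}
E_{kl}\fois \kX(E_{i1}) = \eps(r,\varphi(k)-\varphi(l))^{-1}\bigl[\delta_{li}\kX(E_{k1}) - \kX(E_{kl})\fois E_{i1}\bigr],
\end{equation*}
and, after multiplication on the right by $E_{1i}$ and summation over $i$ using $\sum_i E_{i1}\fois E_{1i} = \gone$, one obtains $E_{kl}\fois M$ in closed form. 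The two pieces collapse to $\ad_M(E_{kl}) = \kX(E_{kl})$. Since the $E_{kl}$ form a basis of $\algrA$, we conclude $\kX = \ad_M$, so $\Der^\bullet_\eps(\algA) = \Int^\bullet_\eps(\algA)$ and $\Out^\bullet_\eps(\algA) = \algzero$.

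For the second assertion, I would write $\kX = \ad_M$, decompose $M = \sum_k M_k$ into homogeneous components (so $\kX_k = \ad_{M_k}$), and exploit the hermiticity $\overline{\eps(i,j)} = \eps(j,i)$ of the commutation factor, the antilinear anti-multiplicativity of $*$, and the identities $|a^*| = -|a|$, $\eps(-k,-|a|)=\eps(k,|a|)$. A direct computation yields
\begin{equation*}
(\kX_k(a))^* - \eps(|a|,k)\,\kX_{-k}(a^*) = [a^*,\; M_k^* + M_{-k}]_\eps
\end{equation*}
for all homogeneous $a$. Reality of $\kX$ is therefore equivalent to $N_k := M_k^* + M_{-k}$ lying in the $\eps$-center for every $k$. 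By Proposition~\ref{prop-centermatrix} the $\eps$-center is $\gC\gone$, concentrated in degree $0$, so $N_k = 0$ for $k\neq 0$ and $N_0 \in \gR\gone$ (using $N_0^* = N_0$). The residual scalar contribution in the degree-$0$ piece lies in $\ker\ad$ and can be absorbed by shifting $M_0$ by a suitable real multiple of $\gone$, which restores $M^* = -M$ without changing $\kX$. Hence real $\eps$-derivations are in bijection with antihermitean matrices modulo the one-dimensional subspace $i\gR\gone$.

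The sole obstacle is the bookkeeping of commutation factors in the two central computations (the evaluation of $E_{kl}\fois M$ and the reality identity); everything else reduces to the matrix unit relations, the triviality of the $\eps$-center established in Proposition~\ref{prop-centermatrix}, and the hermiticity assumption on $\eps$. A minor cosmetic point is that the announced isomorphism with antihermitean matrices is strict only after this quotient (equivalently, after restricting to $\eps$-traceless antihermitean representatives via Proposition~\ref{prop-epstracematrixalgebra}), and this should be flagged in the statement.
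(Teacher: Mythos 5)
Your proof is correct and follows essentially the same route as the paper: the candidate inner generator $M=\sum_i\kX(E_{i1})\fois E_{1i}$ is exactly the paper's $M_\kX$ (with $a=1$), and your computation of $E_{kl}\fois M$ via the $\eps$-Leibniz rule is just a coordinate-free rendering of the paper's index manipulation. The reality discussion, which the paper dismisses as "easily checked", is worked out correctly here, and your observation that the isomorphism with antihermitean matrices only holds after quotienting by $i\gR\gone$ (equivalently, after passing to $\eps$-traceless representatives) is a legitimate and worthwhile precision of the statement.
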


\begin{proof}
Let $\kX\in\Der_\eps^\bullet(\algA)$ homogeneous, decomposed on the basis as $\kX(E_{ij})=\sum_{k,l=1}^D\kX_{ij}^{kl}E_{kl}$, with $\kX_{ij}^{kl}=0$ if $\varphi(k)-\varphi(l)-\varphi(i)+\varphi(j)\neq|\kX|$ where $|\kX|$ is the degree of $\kX$. Then, due to \eqref{defderiv}, we have, $\forall i,j,k,l$,
\begin{equation}
\kX(E_{ij}\fois E_{kl})=\kX(E_{ij})\fois E_{kl}+\eps(|\kX|,\varphi(i)-\varphi(j))E_{ij}\fois \kX(E_{kl}).\label{deriv1}
\end{equation}
Using \eqref{prodmat}, this can be written as
\begin{equation}
\sum_{a,b=1}^D\delta_{jk}\kX_{il}^{ab}E_{ab}=\sum_{a,b=1}^D(\delta_l^b\kX_{ij}^{ak}+\eps(|\kX|,\varphi(i)-\varphi(j))\delta_i^a\kX_{kl}^{jb})E_{ab}.
\end{equation}
For $b=l$, and by changing some indices, we obtain, $\forall i,j,k,l,a$,
\begin{equation}
\kX_{ij}^{kl}=\delta_{jl}\kX_{ia}^{ka}-\eps(|\kX|,\varphi(i)-\varphi(j))\delta_i^k\kX_{la}^{ja}.\label{deriv2}
\end{equation}

On the other hand, let us define $M_\kX=\sum_{k=1}^D\kX(E_{ka})\fois E_{ak}=\sum_{k,l=1}^D\kX_{ka}^{la}E_{lk} \in \algrA$ for an arbitrary $a$. Then, using \eqref{prodmat}, we find
\begin{equation}
[M_\kX,E_{ij}]_\eps=\sum_{k,l=1}^D(\delta_j^l\kX_{ia}^{ka}-\eps(\varphi(j)-\varphi(l),\varphi(i)-\varphi(j))\delta_i^k\kX_{la}^{ja})E_{kl}.
\end{equation}
Since $\kX_{la}^{ja}=0$ for $\varphi(j)-\varphi(l)\neq|\kX|$, \eqref{deriv2} implies that $[M_\kX,E_{ij}]_\eps=\sum_{k,l=1}^D\kX_{ij}^{kl}E_{kl}=\kX(E_{ij})$, and $\kX$ is an inner derivation generated by $M_\kX$.

The statement about reality can be checked easily.
\end{proof}

\subsubsection{\texorpdfstring{Differential calculus and $\eps$-connections}{Differential calculus and epsilon-connections}}
\label{subsub-diffcalcelem}

In this subsection, we describe the differential calculus based on $\eps$-derivations for a certain class of $\eps$-graded matrix algebras with elementary grading. In order to do that, we introduce the following algebra, which is the equivalent of the exterior algebra in the present framework.

\begin{definition}[$\eps$-exterior algebra]
Let $V^\bullet$ be a $\Gamma$-graded vector space and $\eps$ a commutation factor on $\Gamma$. One defines the $\eps$-exterior algebra on $V^\bullet$, denoted by $\exter_{\eps}^\bullet V^\bullet$, as the tensor algebra of $V^\bullet$ quotiented by the ideal generated by
\begin{equation*}
\{\kX\otimes\kY+\eps(|\kX|,|\kY|)\kY\otimes\kX,\,\,\kX,\kY\in V^\bullet\,\text{homogeneous}\}.
\end{equation*}
\end{definition}

\begin{proposition}[Structure of $\exter_{\eps}^\bullet V^\bullet$]
\label{prop-structureexterioralgebra}
The $(\gZ\times\Gamma)$-graded algebra $\exter_{\eps}^\bullet V^\bullet$ is $\widetilde{\eps}$-graded commutative for the commutation factor $\widetilde{\eps}((m, i), (n, j)) = (-1)^{mn} \eps(i,j)$.

One has the following factor decomposition as tensor products of $\widetilde{\eps}$-graded commutative algebras:
\begin{equation}
\exter_{{}_\eps}^\bullet V^\bullet= \Big(\bigotimes_{\substack{\alpha\in\Gamma\\ \eps(\alpha,\alpha)=1}}\exter^\bullet V^\alpha\Big) \otimes \Big(\bigotimes_{\substack{\alpha\in\Gamma\\ \eps(\alpha,\alpha)=-1}}\symes^\bullet V^\alpha\Big),
\end{equation}
where $\exter^\bullet V^\alpha$ is the (usual) exterior algebra of the vector space $V^\alpha$ and $\symes^\bullet V^\alpha$ is the (usual) symmetric algebra.
\end{proposition}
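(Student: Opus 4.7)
I would treat the two assertions in turn, reducing in each case to the generating subspace $V^\bullet$ and extending by multiplicativity.

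For $\widetilde{\eps}$-graded commutativity of $\exter_{\eps}^\bullet V^\bullet$, observe that by construction of the quotient, any two homogeneous generators $X, Y \in V^\bullet$ (of $\gZ$-degree $1$) satisfy
\begin{equation*}
X \fois Y = -\eps(|X|,|Y|)\, Y \fois X = \widetilde{\eps}\bigl((1,|X|),(1,|Y|)\bigr)\, Y \fois X,
\end{equation*}
since $\widetilde{\eps}((1,i),(1,j)) = -\eps(i,j)$. I would then proceed by induction on the total $\gZ$-degree: if $a$ is a product of $p$ generators and $b$ of $q$ generators, bimultiplicativity of $\widetilde{\eps}$ together with $pq$ applications of the previous relation gives $a \fois b = \widetilde{\eps}(|a|_{\widetilde\Gamma},|b|_{\widetilde\Gamma})\, b \fois a$, exactly the desired graded commutativity.

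For the factor decomposition, I first note that each $\exter^\bullet V^\alpha$ (when $\eps(\alpha,\alpha)=1$) and each $\symes^\bullet V^\alpha$ (when $\eps(\alpha,\alpha)=-1$) is $\widetilde{\eps}$-graded commutative if $V^\alpha$ is assigned bidegree $(1,\alpha)$: indeed $\widetilde{\eps}((1,\alpha),(1,\alpha)) = -\eps(\alpha,\alpha)$ equals $-1$ in the first case (matching $XY = -YX$ in $\exter^\bullet V^\alpha$) and $+1$ in the second (matching $XY = YX$ in $\symes^\bullet V^\alpha$). Denote by $\algB$ the $\widetilde{\eps}$-graded tensor product (in the sense of Lemma~\ref{lem-eps-gradedtensorproduct}) of all these algebras; $\algB$ is then itself $\widetilde{\eps}$-graded commutative. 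I would construct two mutually inverse algebra morphisms between $\exter_{\eps}^\bullet V^\bullet$ and $\algB$. First, the inclusions $V^\alpha \hookrightarrow \algB$ give a graded linear map $V^\bullet \to \algB$, which extends by the universal property of the tensor algebra to a morphism $T(V^\bullet) \to \algB$; the defining relations $X\otimes Y + \eps(|X|,|Y|)\, Y\otimes X$ of $\exter_{\eps}^\bullet V^\bullet$ are sent to zero, which is immediate when $X, Y$ lie in the same factor $V^\alpha$, and when they lie in distinct factors $V^\alpha$, $V^\beta$ follows from the multiplication rule of the $\widetilde{\eps}$-graded tensor product, which produces exactly the sign $\widetilde{\eps}((1,\beta),(1,\alpha)) = -\eps(\beta,\alpha)$. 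Conversely, since $\exter_{\eps}^\bullet V^\bullet$ already satisfies the defining relations of each $\exter^\bullet V^\alpha$ and $\symes^\bullet V^\alpha$ internally, the universal properties of the exterior and symmetric algebras provide morphisms of each factor into $\exter_{\eps}^\bullet V^\bullet$, which combine into a morphism $\algB \to \exter_{\eps}^\bullet V^\bullet$ using the $\widetilde{\eps}$-graded commutativity established in the first part to control the cross-factor signs. Both morphisms are the identity on generators, hence mutually inverse.

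\textbf{Main obstacle.} The essential bookkeeping is in checking that the morphism out of $\algB$ is well defined: the $\widetilde{\eps}$-factors produced by the tensor product multiplication rule when permuting elements across different factors must agree precisely with the commutation relations of $\exter_{\eps}^\bullet V^\bullet$ obtained in the first part. A minor point, if $\Gamma$ is infinite and the support $\{\alpha : V^\alpha \neq 0\}$ is infinite, is to interpret $\algB$ as the restricted tensor product generated by finite products of generators, which is harmless since any element of $\exter_{\eps}^\bullet V^\bullet$ involves only finitely many generators.
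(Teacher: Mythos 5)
Your proposal is correct and is essentially the argument the paper has in mind: the paper's own justification is the single remark that ``the proof is just a straightforward adaptation of the one for the usual exterior algebra,'' and what you write out (graded commutativity checked on generators and propagated by bimultiplicativity of $\widetilde{\eps}$, then mutually inverse morphisms built from the universal properties of the tensor, exterior and symmetric algebras, with the cross-factor signs matching the $\widetilde{\eps}$-graded tensor product) is precisely that standard adaptation, including the correct identification of which factors are exterior and which are symmetric via $\widetilde{\eps}((1,\alpha),(1,\alpha))=-\eps(\alpha,\alpha)$. No gaps; your version simply makes explicit the sign bookkeeping the paper leaves to the reader.
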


The proof is just a straighforward adaptation of the one for the usual exterior algebra. Nevertheless, notice that the factors are of two types: an exterior algebra or a symmetric algebra.

\bigskip
Let $\eps$ be a commutation factor on an abelian group $\Gamma$, $\kg^\bullet$ an $\eps$-Lie algebra for this commutation factor and $V^\bullet$ a $\Gamma$-graded vector space of representation of $\kg^\bullet$. Let us introduce the $(\gZ\times\Gamma)$-graded vector space $\Omega_\eps^{\bullet,\bullet}(\kg,V)=V^\bullet\otimes \exter_{\eps}^\bullet(\kg^\ast)^\bullet$.

\begin{proposition}
\label{prop-gv}
$\Omega_\eps^{\bullet,\bullet}(\kg,V)$ is a $(\gZ\times\Gamma)$-graded differential complex for the differential of degree $(1,0)$ defined by
\begin{multline}
\label{defdiff2}
\dd\omega(\kX_1,\dots,\kX_{p+1})=\sum_{m=1}^{p+1} (-1)^{m+1}g_1\ \kX_m\omega(\kX_1,\dots \omi{m} \dots,\kX_{p+1})\\
+\sum_{1\leq m<n\leq p+1}(-1)^{m+n}g_2\ \omega([\kX_m,\kX_n]_\eps,\dots \omi{m} \dots \omi{n} \dots,\kX_{p+1}),\end{multline}
$\forall\omega\in\Omega_\eps^{p,|\omega|}(\kg,V)$ and $\forall\kX_1,\dots,\kX_{p+1}\in\kg^\bullet$ homogeneous, where
\begin{align*}
g_1&=\eps(|\omega|,|\kX_m|)\prod_{a=1}^{m-1}\eps(|\kX_a|,|\kX_m|)\\
g_2&=\eps(|\kX_n|,|\kX_m|)\prod_{a=1}^{m-1}\eps(|\kX_a|,|\kX_m|)\prod_{a=1}^{n-1}\eps(|\kX_a|,|\kX_n|).
\end{align*}

Moreover, in the case where $V^\bullet$ is an $\eps$-graded algebra and $\kg^\bullet = V^\bullet_{\Lie,\eps}$ is its associated $\eps$-Lie algebra, acting by the adjoint representation on $V^\bullet$, $\Omega_\eps^{\bullet,\bullet}(\kg,V)$ is a $(\gZ\times\Gamma)$-graded differential algebra for the product,
\begin{equation}
\label{defprod2}
(\omega\fois\eta)(\kX_1,\dots,\kX_{p+q}) =\frac{1}{p!q!}\sum_{\sigma\in\kS_{p+q}}(-1)^{|\sigma|}g_3\ \omega(\kX_{\sigma(1)},\dots,\kX_{\sigma(p)})\fois\eta(\kX_{\sigma(p+1)},\dots,\kX_{\sigma(p+q)}),
\end{equation}
$\forall\omega\in\Omega_\eps^{p,|\omega|}(\kg,V)$, $\forall\eta\in\Omega_\eps^{q,|\eta|}(\kg,V)$, and $\forall\kX_1,\dots,\kX_{p+q}\in\kg^\bullet$ homogeneous, where
\begin{equation*}
g_3 = \prod_{m<n,\sigma(m)>\sigma(n)}\eps(|\kX_{\sigma(n)}|,|\kX_{\sigma(m)}|)\prod_{m\leq p}\eps(|\eta|,|\kX_{\sigma(m)}|).
\end{equation*}
\end{proposition}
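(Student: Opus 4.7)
The plan is to adapt the proof strategy of Proposition~\ref{propproddiff}, of which this statement is essentially the abstract counterpart: in \eqref{defdiff2} the role previously played by the derivation action $\kX_m\omega(\dots)$ is now taken by the action of $\kX_m$ on $V^\bullet$ through the given representation of $\kg^\bullet$, and in the algebra case the adjoint representation recovers exactly the setting used for $\Omr^{\bullet,\bullet}(\algA)$. Accordingly, the bulk of the work is a careful bookkeeping of the sign factors $g_1, g_2, g_3$ coming from $\eps$.

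First I would check that $\dd$ as defined by \eqref{defdiff2} genuinely takes values in $\Omega_\eps^{p+1,\bullet}(\kg,V)$. The $(\gZ\times\Gamma)$-degree is read off term by term using the bilinearity axioms of $\eps$. The more delicate point is the $\widetilde{\eps}$-antisymmetry, i.e.\ that $\dd\omega$ picks up a factor $-\eps(|\kX_m|,|\kX_{m+1}|)$ when its arguments $\kX_m$ and $\kX_{m+1}$ are swapped. Both sums in \eqref{defdiff2} must be examined separately, using the relations
\begin{equation*}
\eps(i+j,k)=\eps(i,k)\eps(j,k),\qquad \eps(i,j)\eps(j,i)=1_\gK,
\end{equation*}
to absorb the swap sign into $g_1$ and $g_2$.

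The main obstacle is $\dd^2=0$. I would apply $\dd$ twice and sort the resulting terms by combinatorial type: (a)~terms with two successive representation actions $\kX_m\fois(\kX_n\fois\omega(\dots))$, (b)~terms mixing a representation action with a bracket of arguments, (c)~terms involving two brackets. The (a)-terms pair up under $m\leftrightarrow n$ and collapse thanks to the defining relation of a representation of an $\eps$-Lie algebra,
\begin{equation*}
\kX_m\fois(\kX_n\fois v) - \eps(|\kX_m|,|\kX_n|)\,\kX_n\fois(\kX_m\fois v) = [\kX_m,\kX_n]_\eps\fois v,
\end{equation*}
and cancel exactly against the (b)-terms. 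The (c)-terms cancel after invoking the $\eps$-Jacobi identity \eqref{epsjacobi}; this is where the precise shape of the coefficients $g_2$ is crucial. The sign bookkeeping here is the heart of the whole argument.

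For the algebra part, one first verifies that \eqref{defprod2} is associative; since $\exter_{\eps}^\bullet(\kg^\ast)^\bullet$ is $\widetilde{\eps}$-graded commutative by Proposition~\ref{prop-structureexterioralgebra}, associativity reduces to a cocycle-type identity satisfied by the factors $g_3$ under composition of shuffles, itself a direct consequence of the bilinearity of $\eps$. One then checks the Leibniz rule
\begin{equation*}
\dd(\omega\fois\eta)=\dd\omega\fois\eta+(-1)^p\,\omega\fois\dd\eta,
\end{equation*}
appropriate for a $\widetilde{\eps}$-derivation of degree $(1,0)$ since $\widetilde{\eps}((1,0),(p,|\omega|))=(-1)^p$. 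When $V^\bullet=\algrA$ and $\kg^\bullet=\algrA_{\Lie,\eps}$ acts by $\ad$, the representation is implemented by $[a,-]_\eps$, which is itself an $\eps$-derivation of $\algrA$, so this computation is strictly parallel to the one in Proposition~\ref{propproddiff} and requires no new identity beyond those already employed there.
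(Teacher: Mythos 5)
Your proposal is correct and follows the same route as the paper: the paper's own proof simply observes that \eqref{defprod2} and \eqref{defdiff2} are formally identical to \eqref{defprod} and \eqref{defdiff} and invokes the (unwritten) straightforward computation behind Proposition~\ref{propproddiff}, which is exactly the calculation you spell out, with the derivation action replaced by the representation action of $\kg^\bullet$ on $V^\bullet$. Your version is merely more explicit about the cancellation pattern ($\eps$-Jacobi for the double-bracket terms, the representation identity for the mixed terms) that the paper leaves implicit.
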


\begin{proof}
The product \eqref{defprod2} and the differential \eqref{defdiff2} are formally the same as the product \eqref{defprod} and the differential \eqref{defdiff}.
\end{proof}

This permits one to show the following Theorem, which gives the structure of the $\eps$-derivation-based differential calculus on a certain class of $\eps$-graded matrix algebras for elementary gradings.

\begin{theorem}[The $\eps$-derivation-based differential calculus]
\label{thm-calcdiff}
Let $\algrA=\Matr_D$ be an $\eps$-graded matrix algebra with elementary grading, and suppose that:
\begin{equation}
\tr_\eps(\gone)=\sum_{i=1}^D \eps(\varphi(i),\varphi(i)) \neq0,\label{condsimpl}
\end{equation}
Then:
\begin{itemize}
\item The adjoint representation $\ad:\ksl_\eps^\bullet(D)\to\Der_\eps^\bullet(\algA)$ is an isomorphism of $\eps$-Lie algebras.
\item Let $\ad_\ast:\Omega_\eps^{\bullet,\bullet}(\ksl_\eps(D),\algA)\to\Omr^{\bullet,\bullet}(\algA)$ be the push-forward of $\ad$ and let $(E_i)$ be a basis of $\algrA$. Then, for any $\omega\in\Omega_\eps^{k,\bullet}(\ksl_\eps(D),\algA)$, one has the relation:
\begin{equation}
\ad_\ast(\omega)(\ad_{E_{i_1}},\dots,\ad_{E_{i_k}})=\omega(E_{i_1},\dots,E_{i_k}).\label{exprbasead}
\end{equation}
\item $\ad_\ast$ is an isomorphism of $(\gZ\times\Gamma)$-graded differential algebras.
\end{itemize}
\end{theorem}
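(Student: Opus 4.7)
The strategy is to handle the three bullets in order, letting each build on the previous.

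\textbf{Step 1: $\ad$ is an $\eps$-Lie isomorphism.} I would combine Proposition~\ref{prop-derivmatrix} (surjectivity of $\ad:\algrA\to\Der_\eps^\bullet(\algA)$) with Proposition~\ref{prop-centermatrix} ($\ker\ad=\caZ_\eps^\bullet(\algA)=\gC\gone$). The $\eps$-trace is homogeneous of degree $0$, since only the diagonal matrix units $E_{ii}$ are of degree $0$ and only these contribute. The hypothesis $\tr_\eps(\gone)\neq 0$ then yields a $\Gamma$-graded splitting $\algrA=\gC\gone\oplus\ksl_\eps^\bullet(D)$ via the projector $a\mapsto a-\frac{\tr_\eps(a)}{\tr_\eps(\gone)}\gone$. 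Restricting $\ad$ to the second summand gives an injection (since $\ker\ad$ meets $\ksl_\eps^\bullet(D)$ only in zero) that remains surjective (because $\ad(\gone)=0$). Closure of $\ksl_\eps^\bullet(D)$ under the $\eps$-bracket follows from the $\eps$-trace property $\tr_\eps([a,b]_\eps)=0$, and preservation of brackets by $\ad$ is precisely the $\eps$-Jacobi identity~\eqref{epsjacobi} evaluated on an arbitrary third argument.

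\textbf{Step 2: $\ad_\ast$ takes values in $\Omr^{\bullet,\bullet}(\algA)$.} I would pick a $\Gamma$-homogeneous basis $(E_i)$ of $\ksl_\eps^\bullet(D)$, so that $(\ad_{E_i})$ is a $\Gamma$-homogeneous basis of $\Der_\eps^\bullet(\algA)$ by Step~1. Given $\omega\in\Omega_\eps^{k,\bullet}(\ksl_\eps(D),\algA)$, define $\ad_\ast(\omega)$ on basis tuples by the formula~\eqref{exprbasead} and extend by multilinearity. The $\eps$-antisymmetry axiom in~\eqref{defcalcdiff} is inherited from the $\eps$-exterior algebra structure on which $\omega$ lives, and the $\caZ_\eps^\bullet(\algA)$-multilinearity is automatic because $\caZ_\eps^\bullet(\algA)=\gC\gone$. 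Hence $\ad_\ast(\omega)\in\Omr^{k,\bullet}(\algA)$, and $\ad_\ast$ is a bijection in each bidegree, with inverse $\eta\mapsto\eta\circ(\ad,\dots,\ad)$.

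\textbf{Step 3: $\ad_\ast$ is compatible with products and differentials.} The product~\eqref{defprod2} and differential~\eqref{defdiff2} of Proposition~\ref{prop-gv} are formally identical to the product~\eqref{defprod} and differential~\eqref{defdiff} of Proposition~\ref{propproddiff}: under the identification $|\ad_X|=|X|$, the factors $g_3,g_1,g_2$ become $f_1,f_2,f_3$ respectively. Two further identifications close the argument: the action $X\fois a=[X,a]_\eps$ of $\ksl_\eps^\bullet(D)$ on its representation $\algrA$ appearing in~\eqref{defdiff2} coincides, via $\ad$, with the action of the derivation $\ad_X$ on $\algrA$ appearing in~\eqref{defdiff}; and Step~1 transports the $\eps$-bracket on $\ksl_\eps^\bullet(D)$ to the $\eps$-bracket on $\Der_\eps^\bullet(\algA)$. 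Evaluating both $\ad_\ast(\omega\fois\eta)$ and $\ad_\ast(\omega)\fois\ad_\ast(\eta)$ on basis tuples $(\ad_{E_{i_1}},\dots,\ad_{E_{i_{p+q}}})$ using \eqref{exprbasead} then produces identical sums, and similarly for $\ad_\ast(\dd\omega)$ versus $\dd(\ad_\ast(\omega))$.

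The main obstacle is essentially bookkeeping: tracking the $\eps$-signs and combinatorial factors through the two parallel formula systems and checking that they agree term by term. There is no deeper analytical difficulty, because the propositions on $\eps$-centers, $\eps$-traces and $\eps$-derivations of matrix algebras already do the structural heavy lifting; the theorem then reduces to the observation that the two differential calculi $\Omr^{\bullet,\bullet}(\algA)$ and $\Omega_\eps^{\bullet,\bullet}(\ksl_\eps(D),\algA)$ are defined by identical formulas on inputs that are identified by $\ad$.
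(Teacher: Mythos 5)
Your proposal is correct and follows essentially the same route as the paper: Propositions~\ref{prop-centermatrix} and \ref{prop-derivmatrix} plus the condition $\tr_\eps(\gone)\neq 0$ give the splitting $\algrA=\gC\gone\oplus\ksl_\eps^\bullet(D)$ and hence the $\eps$-Lie isomorphism, and the compatibility of $\ad_\ast$ with products and differentials is reduced, exactly as in the paper, to the observation that \eqref{defprod2}/\eqref{defdiff2} are formally identical to \eqref{defprod}/\eqref{defdiff} under the identification by $\ad$. The only cosmetic difference is that you establish bijectivity of $\ad_\ast$ via the explicit inverse $\eta\mapsto\eta\circ(\ad,\dots,\ad)$ on a homogeneous basis of $\ksl_\eps^\bullet(D)$, whereas the paper writes out coordinates with respect to a basis of $\algrA$ adapted to the decomposition; both arguments are equivalent.
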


\begin{proof}
Since $\ksl_\eps(D)\cap\caZ_\eps(\algA)=\algzero$, by Proposition~\ref{prop-centermatrix} and using the condition~\eqref{condsimpl}, the kernel of the surjective map $\ad:\ksl_\eps^\bullet(D)\to\Der_\eps^\bullet(\algA)=\Int_\eps^\bullet(\algA)$ is trivial, so that it is an isomorphism of vector spaces. Due to the $\eps$-Jacobi identity \eqref{epsjacobi}, one has $\forall a,b\in\algrA$, $\ad_{[a,b]_\eps}=[\ad_a,\ad_b]_\eps$, which proves that $\ad:\ksl_\eps^\bullet(D)\to\Der_\eps^\bullet(\algA)$ is an isomorphism of $\eps$-Lie algebras.

Equation~\eqref{exprbasead} is then a trivial consequence of this isomorphism.

Let us stress that $\ksl_\eps(D)$ is only a subalgebra of the associated $\eps$-Lie algebra of $\algrA$, but the result of Proposition~\ref{prop-gv} generalizes to this case, so that $\Omega_\eps^{\bullet,\bullet}(\ksl_\eps(D),\algA)$ is a $(\gZ\times\Gamma)$-graded differential algebra. It is straightforward to see that $\ad_\ast$ is a morphism of $(\gZ\times\Gamma)$-graded differential algebras. Indeed, the products and the differentials have the same formal definitions, respectively \eqref{defprod} and \eqref{defprod2}, \eqref{defdiff} and \eqref{defdiff2}, for the two complexes $\Omr^{\bullet,\bullet}(\algA)$ and $\Omega_\eps^{\bullet,\bullet}(\ksl_\eps(D),\algA)$. Equation~\eqref{exprbasead} shows that $\ad_\ast$ is injective. For $(E_i)$ a basis of $\algA$, adapted to the decomposition $\algrA=\gC\gone\oplus\ksl_\eps^\bullet(D)$ (with $E_0=\gone$), $(\theta^i)$ its dual basis, and $\eta\in\Omr^{k,\bullet}(\algA)$, we set $\eta(\ad_{E_{i_1}},\dots,\ad_{E_{i_k}})=\eta^j_{i_1,\dots,i_k}E_j\in\algrA$. Then $\eta=\ad_\ast(\eta^j_{i_1,\dots,i_k}E_j\theta^{i_1}\wedge\dots\wedge\theta^{i_k})$, with $\eta^j_{i_1,\dots,i_k}=0$ if one of $i_1,\dots,i_k$ is zero. Therefore, $\ad_\ast$ is surjective, and it is an isomorphism of bigraded differential algebras.
\end{proof}

Let us notice that there exist some examples of $\eps$-graded matrix algebras for which the condition \eqref{condsimpl} is not satisfied (see Example~\ref{ex-supermatrix} below with $m=n$).

\bigskip
Let us now describe explicitely the space of $\eps$-connections.
\begin{proposition}
The space of $\eps$-connections on $\algrA$, considered as a module over itself, is an affine space modeled on the vector space
\begin{equation}
\Omr^{1,0}(\algA)=\bigoplus _{\alpha\in\Gamma}
\algA^\alpha\otimes\ksl_\eps^\alpha(D)^\ast,
\end{equation}
and containing the (trivial) $\eps$-connection $\dd$.

Moreover, $\ad^{-1}$ can be seen as a $1$-form, and $\dd+\ad^{-1}$ defines a gauge invariant $\eps$-connection.
\end{proposition}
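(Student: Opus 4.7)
The affine structure is an immediate specialization of the general proposition on $\eps$-connections. For $\modrM=\algrA$ regarded as a right module over itself, a degree-zero right-$\algrA$-linear map $\Psi:\algrA\to\Omr^{1,\bullet}(\algA)$ is completely determined by $\Psi(\gone)\in\Omr^{1,0}(\algA)$, since $\Psi(a)=\Psi(\gone)\fois a$; the map $\Psi\mapsto\Psi(\gone)$ is then the desired isomorphism $\Hom^0_\algA(\algrA,\Omr^{1,\bullet}(\algA))\simeq\Omr^{1,0}(\algA)$. To get the explicit decomposition, I would invoke Theorem~\ref{thm-calcdiff}: $\ad_\ast$ identifies $\Omr^{1,0}(\algA)$ with $\Omega_\eps^{1,0}(\ksl_\eps(D),\algA)$, whose elements are $\gC$-linear (since $\caZ^\bullet_\eps(\algA)=\gC\gone$ by Proposition~\ref{prop-centermatrix}) $\Gamma$-degree-preserving maps $\ksl_\eps^\bullet(D)\to\algA$; splitting into homogeneous components gives precisely $\bigoplus_\alpha\algA^\alpha\otimes\ksl_\eps^\alpha(D)^\ast$.

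That $\dd$ is itself an $\eps$-connection is a one-line verification: by Proposition~\ref{propproddiff}, $\dd$ is an $\widetilde\eps$-derivation of bidegree $(1,0)$, so the $\Gamma$-part of its degree is $0$, no $\eps$-sign enters the Leibniz rule, and $\dd(ma)=\dd(m)\,a+m\,\dd(a)$, which is the defining relation for an $\eps$-connection.

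For the ``moreover'' part, I would first observe that $\ad^{-1}:\Der_\eps^\bullet(\algA)\to\ksl_\eps^\bullet(D)\subset\algrA$ is $\gC$-linear and degree-preserving (Theorem~\ref{thm-calcdiff} says $\ad$ is a degree-preserving isomorphism of $\eps$-Lie algebras), and therefore lies in $\Omr^{1,0}(\algA)$. Consequently $\dd+\ad^{-1}$ is an $\eps$-connection by the affine structure just established.

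The real work is the gauge invariance, which I expect to be the main obstacle. A degree-zero right-$\algrA$-module automorphism of $\algrA$ has the form $\Phi=L_g$ with $g\in\algA^0$ invertible, and unwinding $\nabla^\Phi=\Phi\circ\nabla\circ\Phi^{-1}$ with the Leibniz rule for $\dd$ produces the standard transformation law $\omega\mapsto g\omega g^{-1}+g\,\dd(g^{-1})$ for the connection $1$-form $\omega=\ad^{-1}$. Evaluating on $\kX=\ad_a\in\Der_\eps^\bullet(\algA)$, the key identity
\begin{equation*}
(\dd g^{-1})(\kX)=\kX(g^{-1})=[\ad^{-1}(\kX),g^{-1}]_\eps=a\,g^{-1}-g^{-1}\,a
\end{equation*}
(using $|g^{-1}|=0$ so that $\eps(0,|\kX|)=1$ and $\eps(|a|,0)=1$) makes the two contributions $g\omega(\kX)g^{-1}$ and $g(\dd g^{-1})(\kX)$ combine so as to reproduce $\omega(\kX)=a$. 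This proves $\nabla^\Phi=\nabla$, and gauge invariance of the curvature follows automatically from the formula $R^\Phi=\Phi\circ R\circ\Phi^{-1}$ proved earlier. The bookkeeping of the $\eps$-signs in the bi-graded product on $\Omr^{\bullet,\bullet}(\algA)$ is the only delicate point; everything else is direct.
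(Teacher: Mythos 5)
Your treatment of the affine structure, of $\dd$ being an $\eps$-connection, and of the identification $\Omr^{1,0}(\algA)\simeq\bigoplus_\alpha\algA^\alpha\otimes\ksl^\alpha_\eps(D)^\ast$ is correct and is essentially the paper's own argument: the paper simply sets $\omega(\kX)=\nabla(\gone)(\kX)$ and writes $\nabla(a)=\dd a+\omega\fois a$, which is your map $\Psi\mapsto\Psi(\gone)$ in disguise, and it likewise reads off the decomposition of $\Omr^{1,0}(\algA)$ from Theorem~\ref{thm-calcdiff} and Proposition~\ref{prop-centermatrix}.

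The problem sits exactly in the step you single out as the real work. With the conventions in force --- $\nabla(a)=\dd a+\omega\fois a$ with $(\omega\fois a)(\kX)=\eps(|a|,|\kX|)\,\omega(\kX)\fois a$, and $\Phi(a)=g\fois a$ --- one indeed gets $\omega^g(\kX)=g\,\omega(\kX)\,g^{-1}+g\,\kX(g^{-1})$. But now substitute $\omega=\ad^{-1}$ and $\kX=\ad_a$ and use your own key identity: $g\,\kX(g^{-1})=g(ag^{-1}-g^{-1}a)=gag^{-1}-a$, so the two contributions are $gag^{-1}$ and $gag^{-1}-a$, and they \emph{add} to $2gag^{-1}-a$, not to $a$. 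The claimed cancellation does not occur; this is not an $\eps$-sign subtlety (all the relevant $\eps$-factors are $1$ because $|g|=0$) but an ordinary relative sign. What is gauge invariant is the connection with form $-\ad^{-1}$: then $\omega^g(\kX)=-gag^{-1}+gag^{-1}-a=-a=\omega(\kX)$. Equivalently, $\nabla(a)(\kX)=\dd a(\kX)-\eps(|a|,|\kX|)\,\ad^{-1}(\kX)\fois a=-a\fois\ad^{-1}(\kX)$, a closed form which is manifestly unchanged under $a\mapsto g\fois a$ and gives the invariance in one line; this is the $\eps$-graded analogue of $\nabla=\dd-i\theta$ in the non-graded matrix calculus. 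The ``$+$'' in the statement must therefore be read as a convention on the sign of the canonical one-form (the paper's proof does not perform the computation); as literally written, your final verification is false, and you should either flip the sign of $\ad^{-1}$ or argue via the closed form $\nabla(a)(\kX)=-a\fois\ad^{-1}(\kX)$.
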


\begin{proof}
Let $\nabla$ be an $\eps$-connection. Define $\omega(\kX)=\nabla(\gone)(\kX)$, $\forall\kX\in\Der_\eps^\bullet(\algA)$. Then, $\omega\in\Omr^{1,0}(\algA)$ and, $\forall a\in\algrA$, $\nabla(a)=\dd a+\omega\fois a$.

On the other hand, $\ad^{-1}:\Der_\eps^\bullet(\algA)\to\ksl_\eps^\bullet(D)\subset\algrA$ is a $1$-form of degree $0$. Since all $\eps$-derivations of $\algrA$ are inner, one has, $\forall \kX\in\Der_\eps^\bullet(\algA)$, $\kX(a)=[\ad^{-1}(\kX),a]_\eps$. Then, it is straightforward to prove the gauge invariance of the connection $\dd+\ad^{-1}$.
\end{proof}

The noncommutative $1$-form $\ad^{-1}$ defined here is the exact analog of the noncommutative $1$-form $i\theta$ defined in the context of the derivation-based differential calculus on the matrix algebra, and it gives rise also to a gauge invariant connection (see \cite{DuboisViolette:1988cr,DuboisViolette:1988ir,DuboisViolette:1998su, Masson:1999ea} for details). This shows that some of the results obtained in the non-graded case remain valid in this new context. But one has to be careful that the condition~\eqref{condsimpl} has to be fulfilled.

\subsubsection{Concrete examples}
\label{subsub-concretelem}

In this part, we give concrete example of $\eps$-graded matrix
algebras in the case when the abelian group $\Gamma$ is freely
generated by a finite number of generators $\{e_r\}_{r\in I}$, and
when $\eps$ is a commutation factor on $\Gamma$ over $\gK=\gC$

We first recall the definitions given in \cite{Rittenberg:1978mr} of color algebras and superalgebras. If, $\forall r\in I$, $\eps(e_r,e_r)=1$, then the $\eps$-graded algebra $\algrA$ is called a color algebra. Otherwise, it is a color superalgebra.

In the following examples, we will consider three gradings: the trivial case $\Gamma=\algzero$, the usual case $\Gamma=\gZ_2$, and a third case $\Gamma=\gZ_2\times\gZ_2$. Proposition~\ref{prop-factcomm} determines the possible commutation factors for theses groups, and they have been given in Example~\ref{ex-factcomm} for the two latter groups.

\begin{example}[$\Gamma=\algzero$]
\label{ex-nogrmatrix}
We consider here the trivial grading $\Gamma=\algzero$ on the matrix algebra $\algA=\Matr(n)=\Matr_n$, so that the commutation factor is also trivial: $\eps(i,j)=1$. The $\eps$-commutator $[-,-]_\eps$ and the trace $\tr_\eps$ are the usual non-graded ones for matrices. The $\eps$-center and the $\eps$-derivations of this $\eps$-graded algebra are given by: $\caZ_\eps(\algA)=\gC\gone$ and $\Der_\eps(\algA)=\Int_\eps(\algA)=\ksl_n$, the usual Lie algebra of traceless matrices. The $\eps$-derivation-based differential calculus coincides with the (usual) derivation-based differential calculus studied in \cite{DuboisViolette:1988cr, DuboisViolette:1988ir, Masson:1995ph}:
\begin{equation}
\underline{\Omega}^\bullet_{\eps}(\Matr(n))=\underline{\Omega}^\bullet_{\Der}(\Matr(n))\approx\Matr(n)\otimes\Big(\exter^\bullet{\ksl_n}^\ast\Big)
\end{equation}
It is finite-dimensional and its cohomology is:
\begin{align}
H^\bullet(\Omr(\Matr(n)),\dd)=H^\bullet(\ksl_n)=\mathcal{I}(\exter^\bullet\ksl_n^\ast),
\end{align}
the algebra of invariant elements for the natural Lie derivative.
\end{example}

\begin{example}[$\Gamma=\gZ_2$]
\label{ex-supermatrix}
Consider now the case $\Gamma=\gZ_2$, with the usual commutation factor $\eps(i,j)=(-1)^{ij}$, and the matrix superalgebra $\algrA=\Matr(m,n)$. This superalgebra is represented by $(m+n)\times(m+n)$ matrices:
\begin{align}
M=\begin{pmatrix} M_{11}&M_{12} \\ M_{21}& M_{22} \end{pmatrix}\in\algrA,
\end{align}
where $M_{11}$, $M_{12}$, $M_{21}$ and $M_{22}$ are respectively $m\times m$, $m\times n$, $n\times m$ and $n\times n$ (complex) matrices. The $\gZ_2$-grading is defined such that $M_{11}$ and $M_{22}$ correspond to degree $0\in\gZ_2$, whereas $M_{12}$ and $M_{21}$ are in degree $1\in\gZ_2$, so that this grading is elementary. Using Proposition~\ref{prop-epstracematrixalgebra}, we find that
\begin{align}
\tr_\eps(M)=\tr(M_{11})-\tr(M_{22}),
\end{align}
is an $\eps$-trace, $\caZ_\eps^\bullet(\algA)=\gC\gone$ and $\Der_\eps^\bullet(\algA)=\Int_\eps^\bullet(\algA)$.

Notice that when $m=n$, one has $\tr_\eps(\gone)=0$, so that condition \eqref{condsimpl} is not satisfied.

If $m \neq n$, one can suppose, by convention, that $m>n$. In that case, $\tr_\eps(\gone)\neq0$, and using Theorem~\ref{thm-calcdiff}, one gets
\begin{equation*}
\Der_\eps^\bullet(\algA)=\ksl^\bullet_\eps(m,n)=\ksl^0_\eps(m,n)\oplus\ksl^1_\eps(m,n),
\end{equation*}
and the associated differential calculus based on these superderivations is given by:
\begin{equation}
\Omr^{\bullet,\bullet}(\Matr(m,n))\approx\Matr^{\bullet}(m,n)\otimes\Big(\exter^\bullet\ksl^0_\eps(m,n)^\ast\Big) \otimes \Big(\symes^\bullet\ksl^1_\eps(m,n)^\ast\Big).\label{calcdiffsupermatrix}
\end{equation}
In this decomposition, the second tensor product is the one of $\widetilde{\eps}$-graded algebras as in Proposition~\ref{prop-structureexterioralgebra}.
Note that this expression involves the symmetric algebra of the odd part of $\ksl^\bullet_\eps(m,n)^\ast$, which means that $\Omr^{\bullet,\bullet}(\Matr(m,n))$ is infinite dimensional as soon as $n>0$ (remember that $m>n$), even if $\ksl_\eps^\bullet(m,n)$ is finite dimensional. This is a key difference with the non-graded case (see Example~\ref{ex-nogrmatrix}). The cohomology of \eqref{calcdiffsupermatrix} has been computed in \cite{Grosse:1999} and is given by:
\begin{align}
H^{\bullet,\bullet}(\Omr(\Matr(m,n)),\dd)=H^{\bullet,0}(\Omr(\Matr(m,n)),\dd)=H^\bullet(\ksl_m),
\end{align}
This is exactly the cohomology of the (non-graded) Lie algebra $\ksl_m$.
\end{example}

\begin{example}[$\Gamma=\gZ_2\times\gZ_2$]
\label{ex-coloralgebra}
Let us now consider the case of $\Gamma=\gZ_2\times\gZ_2$ gradings, with the commutation factor $\eps(i,j)=(-1)^{i_1j_2+i_2j_1}$. Let $\algrA=\Matr(m,n,r,s)$ be the $\eps$-graded algebra of $(m+n+r+s)\times(m+n+r+s)$ matrices defined as follow: any element in $\algrA$ is written as:
\begin{equation}
M=\begin{pmatrix} M_{11}&M_{12}&M_{13}&M_{14} \\ M_{21}& M_{22}&M_{23}&M_{24} \\ M_{31}&M_{32}&M_{33}&M_{34} \\ M_{41}&M_{42}&M_{43}&M_{44} \end{pmatrix}\in\algrA,\label{colormatrix}
\end{equation}
where $M_{ij}$ are rectangular matrices. The grading is such that $M_{11}$, $M_{22}$, $M_{33}$ and $M_{44}$ correspond to degree $(0,0)\in\Gamma$; $M_{12}$, $M_{21}$, $M_{34}$ and $M_{43}$ correspond to degree $(1,0)$; $M_{13}$, $M_{24}$, $M_{31}$ and $M_{42}$ correspond to degree $(0,1)$; $M_{14}$, $M_{23}$, $M_{32}$ and $M_{41}$ correspond to degree $(1,1)$. This is an elementary grading, $\algrA$ is a color algebra in the sense of \cite{Rittenberg:1978mr}, and is therefore a less trivial example of $\eps$-graded algebra than the usual matrix algebra or the super matrix algebra described in Examples~\ref{ex-nogrmatrix} and \ref{ex-supermatrix}.

\begin{equation}
\tr_\eps(M)=\tr(M_{11})+\tr(M_{22})+\tr(M_{33})+\tr(M_{44})=\tr(M),
\end{equation}
is an $\eps$-trace, and $\caZ_\eps^\bullet(\algA)=\gC\gone$ and
$\Der_\eps^\bullet(\algA)=\Int_\eps^\bullet(\algA)=\ksl^\bullet_\eps(m,n,r,s)$.
Moreover, the differential calculus based on $\eps$-derivations
is:
\begin{equation}
\Omr^{\bullet,\bullet}(\Matr(m,n,r,s))=\Matr(m,n,r,s)\otimes
\Big(\exter_\eps^\bullet\ksl_\eps(m,n,r,s)^\ast\Big)\label{expcalcdiffelem}
\end{equation}
\end{example}

\begin{example}[$\Gamma=\gZ_2\times\gZ_2$]
Consider the same grading group $\Gamma$ on the same algebra $\algrA$ as in Example~\ref{ex-coloralgebra}, but with a different commutation factor: $\eps(i,j)=(-1)^{i_1j_1+i_2j_2}$. $\algrA$ is then a color superalgebra but not a color algebra (here $\eps((1,0),(1,0))=-1$). General results lead us to the $\eps$-trace,
\begin{equation}
\tr_\eps(M)=\tr(M_{11})-\tr(M_{22})-\tr(M_{33})+\tr(M_{44}),
\end{equation}
and one has $\caZ_\eps^\bullet(\algA)=\gC\gone$,
$\Der_\eps^\bullet(\algA)=\Int_\eps^\bullet(\algA)$. If $m+s\neq
n+r$, one gets
$\Der_\eps^\bullet(\algA)=\ksl^\bullet_\eps(m,n,r,s)$. The
differential calculus is also described by equation
\eqref{expcalcdiffelem}, but $\ksl^\bullet_\eps(m,n,r,s)$ is
different in this example from Example \ref{ex-coloralgebra}.
\end{example}

The explicit computation of the commutators $[-,-]_\eps$ for the $\eps$-Lie algebras of $\eps$-derivations are not given here because they give rise to cumbersome expressions. Nevertheless, let us mention that they are different for the four above cases, and therefore the $\eps$-Lie algebras of $\eps$-derivations, $\Der_\eps^\bullet(\algA)$, are different for these four examples.

\subsection{\texorpdfstring{$\eps$-graded matrix algebras with fine grading}{epsilon-graded matrix algebras with fine grading}}
\label{subsec-exfine}

In this subsection, we study the case of a fine-grading for the matrix algebra $\algrA$.

\begin{definition}[Fine grading]
Let $\algrA=\Matr_D$ be the complex matrix algebra, graded by an abelian group $\Gamma$.

The grading \eqref{gradingmatrix} of $\algrA$ is called fine if $\forall\alpha\in\Gamma$, $\dim_\gC(\algA^\alpha)\leq 1$. Then, we define the support of the grading:
\begin{equation*}
 \Supp(\algrA)=\{\alpha\in\Gamma,\, \algA^\alpha\neq\algzero\}.
\end{equation*}
\end{definition}

Let $\Gamma$ be an abelian group, and $\algA=\Matr_D$ the algebra of $D\times D$ complex matrices, such that $\algrA$ is a fine $\Gamma$-graded algebra. Let $(e_\alpha)_{\alpha\in\Supp(\algrA)}$ be a homogeneous basis of $\algrA$.

\begin{proposition}
With the above hypotheses on $\algrA$, $\algrA$ is a graded division algebra, namely all non-zero homogeneous elements of $\algrA$ are invertible. Moreover, $\Supp(\algrA)$ is a subgroup of $\Gamma$.

The fine grading of $\algrA$ is determined by the choice of the basis $(e_\alpha)$ and the factor set $\sigma:\Supp(\algrA)\times\Supp(\algrA)\to\gC^\ast$ defined by: $\forall\alpha,\beta\in\Supp(\algrA)$,
\begin{equation*}
e_\alpha\fois e_\beta=\sigma(\alpha,\beta)e_{\alpha+\beta}.
\end{equation*}
Furthermore, $\algrA$ is a fine $\eps_\sigma$-graded commutative algebra, where the commutation factor $\eps_\sigma$ is defined by \eqref{eq-multcomm}.
\end{proposition}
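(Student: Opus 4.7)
The plan is to proceed in four steps. First I would observe that since $\gone \in \algA^0$ and $\dim_\gC \algA^0 \leq 1$ by fineness, one immediately has $\algA^0 = \gC \gone$; this simple identification is the keystone for the rest of the argument.

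The main (and only nontrivial) step is to prove that every nonzero homogeneous element $a \in \algA^\alpha$ is invertible in $\algrA$. The idea is to exploit the simplicity of $\Matr_D$ as an ungraded associative algebra: the two-sided ideal generated by $a$ equals $\algrA$, so there exist $b_i, c_i \in \algrA$ with $\gone = \sum_i b_i a c_i$. After decomposing each $b_i$ and $c_i$ into homogeneous components and retaining only the contributions whose combined degree is $0$ (which must collectively reproduce $\gone$ thanks to $\algA^0 = \gC\gone$), one extracts an identity $u a v = \gone$ with $u \in \algA^\beta$ and $v \in \algA^\gamma$ homogeneous and nonzero, satisfying $\beta + \alpha + \gamma = 0$. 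Viewing $u$ and $av$ as linear maps on $\gC^D$, the relation $u(av) = \gone$ forces $u$ to be surjective and $av$ to be injective, hence both invertible in $\Matr_D$; then $v(av)^{-1}$ is a right inverse of $a$, so $a$ itself is invertible. This is the key obstacle; everything beyond is essentially formal.

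Given invertibility, the subgroup property of $\Supp(\algrA)$ follows readily: $0 \in \Supp(\algrA)$ since $\gone \in \algA^0$; for $\alpha \in \Supp(\algrA)$, writing $e_\alpha^{-1} = \sum_\beta x_\beta$ homogeneously and using that $e_\alpha$ is invertible, each $e_\alpha x_\beta \in \algA^{\alpha+\beta}$ must vanish for $\beta \neq -\alpha$, forcing $x_\beta = 0$, whence $e_\alpha^{-1} \in \algA^{-\alpha}$ and $-\alpha \in \Supp(\algrA)$; closure under addition follows since $e_\alpha e_\beta \neq 0$ (product of invertibles) lies in $\algA^{\alpha+\beta}$. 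Fixing the homogeneous basis $(e_\alpha)_{\alpha \in \Supp(\algrA)}$, the scalars $\sigma(\alpha, \beta)$ defined by $e_\alpha e_\beta = \sigma(\alpha,\beta) e_{\alpha+\beta}$ are automatically in $\gC^\ast$, and translating associativity $(e_\alpha e_\beta)e_\gamma = e_\alpha(e_\beta e_\gamma)$ into this basis yields precisely the cocycle identity \eqref{eq-eps-deffactorset}, so $\sigma$ is a factor set.

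Finally, to check $\eps_\sigma$-graded commutativity, by bilinearity of the bracket it suffices to evaluate it on the generators:
\begin{equation*}
[e_\alpha, e_\beta]_{\eps_\sigma} = e_\alpha e_\beta - \eps_\sigma(\alpha,\beta)\, e_\beta e_\alpha = \bigl(\sigma(\alpha,\beta) - \eps_\sigma(\alpha,\beta)\,\sigma(\beta,\alpha)\bigr)\, e_{\alpha+\beta},
\end{equation*}
which vanishes precisely when $\eps_\sigma(\alpha,\beta) = \sigma(\alpha,\beta)\sigma(\beta,\alpha)^{-1}$, matching the definition \eqref{eq-multcomm}. Thus $\algrA$ is $\eps_\sigma$-graded commutative, completing the proof.
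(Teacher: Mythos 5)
Your proposal is correct, and for the second half (the definition of $\sigma$, the cocycle identity from associativity, and the $\eps_\sigma$-commutativity computation from \eqref{eq-multcomm}) it coincides with the paper's argument. The difference is in the first half: the paper does not prove that nonzero homogeneous elements are invertible and that $\Supp(\algrA)$ is a subgroup — it cites \cite{Bahturin:2002}, remarking only that the argument rests on $\Matr_D$ having no proper two-sided ideal. You actually carry out that argument, and you do it along exactly the lines the citation hints at: simplicity gives $\gone=\sum_i b_i a c_i$, fineness gives $\algA^0=\gC\gone$ so that each degree-zero term $b_{i,\beta}\,a\,c_{i,\gamma}$ is individually a scalar multiple of $\gone$ (this, rather than the mere recovery of $\gone$ by degree comparison, is where $\algA^0=\gC\gone$ is really needed), hence some single term can be rescaled to $uav=\gone$, and finite-dimensionality of $\Matr_D$ upgrades the one-sided relations to invertibility of $a$. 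Your derivation that $e_\alpha^{-1}$ is homogeneous of degree $-\alpha$, giving closure of $\Supp(\algrA)$ under negation and addition, is also a detail the paper leaves implicit. So your proof is self-contained where the paper's is not; the only thing the citation buys the paper is brevity and the more general classification context of graded matrix algebras.
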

\begin{proof}
The proof of the first property is in \cite{Bahturin:2002}. It uses the fact that $\algrA$, as an associative algebra, does not contain any proper ideal.

$\forall\alpha,\beta\in\Supp(\algrA)$, $e_\alpha\fois e_\beta$ is proportional to $e_{\alpha+\beta}$ and is different from $0$ because of the above property. This defines $\sigma(\alpha,\beta)$. Then, $\sigma$ is a factor set (see Definition \ref{def-multiplier}) since $\algrA$ is associative. For $\alpha,\beta\in\Supp(\algrA)$, one has:
\begin{equation*}
e_\alpha\fois e_\beta=\sigma(\alpha,\beta)e_{\alpha+\beta}=\eps_\sigma(\alpha,\beta)e_\beta\fois e_\alpha,
\end{equation*}
with $\eps_\sigma(\alpha,\beta)=\sigma(\alpha,\beta)\sigma(\beta,\alpha)^{-1}$ (see \eqref{eq-multcomm}).
\end{proof}
Note that $e_0=\sigma(0,0)\gone$.

Let us give a typical example of fine-graded matrix algebra: Clifford algebra.
\begin{definition}
Let $\Gamma$ be an abelian group, and $\sigma$ a factor set of $\Gamma$. We define here $S=\gC\rtimes_\sigma\Gamma$, the {\it crossed-product} of $\gC$ by $\Gamma$ relatively to $\sigma$. $S$ is the algebra of functions $\Gamma\to\gC$ which vanishes outside of a finite number of elements of $\Gamma$, with product: $\forall f,g\in S$, $\forall k\in\Gamma$,
\begin{equation*}
(f\fois g)(k)=\sum_{i+j=k}\sigma(i,j)f(i)g(j).
\end{equation*}
Let $(e_k)_{k\in\Gamma}$ be its canonical basis, given by $\forall i,k\in\Gamma$, $e_k(i)=\delta_{ik}$. It satisfies $\forall i,j\in\Gamma$, $e_i\fois e_j=\sigma(i,j)e_{i+j}$. With the natural fine $\Gamma$-grading given by $S^k=\gC e_k$, $S^\bullet$ is an $\eps_\sigma$-graded commutative algebra.
\end{definition}

\begin{example}
Let $\Gamma=(\gZ_2)^n$, for $n\in\gN^\ast$, and $\sigma$ the factor set of $\Gamma$ defined by: $\forall i,j\in\Gamma$,
\begin{equation*}
\sigma(i,j)=(-1)^{\sum_{1\leq p<q\leq n}i_p j_q}.
\end{equation*}
Then, $\gC\rtimes_\sigma(\gZ_2)^n$ is isomorphic to the Clifford algebra $\mathcal{C} l(n,\gC)$. This is still true for any factor set equivalent to $\sigma$.
\end{example}

\subsubsection{Properties}
\label{subsub-propfine}

The algebra $\algrA=\Matr_D$ has therefore a natural commutation factor $\eps_\sigma$, and one can ask about the properties of $\algrA$ if it is endowed with another general commutation factor $\eps$ on $\Gamma$. We have to ``compare'' $\eps$ with $\eps_\sigma$. In this subsection, $\eps$ will denote a commutation factor on $\Gamma$ over $\gC$ (potentially different from $\eps_\sigma$) satisfying the hermitean condition \eqref{eq-herm-factcom}.

\begin{proposition}
Let $\eps_1$ and $\eps_2$ be two commutations factors on $\Gamma$ over $\gC$. We denote
\begin{equation*}
\Gamma_{\eps_1,\eps_2}=\{i\in\Gamma,\,\forall j\in\Gamma,\,\eps_1(i,j)=\eps_2(i,j)\}.
\end{equation*}
$\Gamma_{\eps_1,\eps_2}$ is a subgroup of $\Gamma$ compatible with the signature decomposition, that is: $\forall i\in\gZ_2$,
\begin{equation*}
\Gamma_{\eps_1,\eps_2}\cap\Gamma_{\eps_1}^i=\Gamma_{\eps_1,\eps_2}\cap\Gamma_{\eps_2}^i.
\end{equation*}
\end{proposition}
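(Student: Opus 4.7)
The plan is to verify the two assertions separately, and both follow directly from the multiplicativity and symmetry axioms \eqref{eq-defepsilon} of commutation factors.

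First, to show that $\Gamma_{\eps_1,\eps_2}$ is a subgroup of $\Gamma$, I would check the three usual conditions. The identity $0\in\Gamma_{\eps_1,\eps_2}$ is immediate since $\eps_\alpha(0,j)=1_\gC$ for $\alpha=1,2$ and all $j\in\Gamma$. For closure under addition, given $i,i'\in\Gamma_{\eps_1,\eps_2}$ and arbitrary $j\in\Gamma$, the bimultiplicativity axiom yields
\begin{equation*}
\eps_1(i+i',j)=\eps_1(i,j)\eps_1(i',j)=\eps_2(i,j)\eps_2(i',j)=\eps_2(i+i',j),
\end{equation*}
so $i+i'\in\Gamma_{\eps_1,\eps_2}$. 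Closure under inverses is similar, using $\eps_\alpha(-i,j)=\eps_\alpha(i,j)^{-1}$, which is a direct consequence of the axioms.

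For the second assertion, the key observation is that membership in $\Gamma_{\eps_1,\eps_2}$ imposes agreement of $\eps_1$ and $\eps_2$ on \emph{all} second arguments, in particular on the diagonal. If $i\in\Gamma_{\eps_1,\eps_2}$, then specializing $j=i$ gives $\eps_1(i,i)=\eps_2(i,i)$, hence the signature functions $\psi_{\eps_1}$ and $\psi_{\eps_2}$ coincide on $\Gamma_{\eps_1,\eps_2}$. Therefore $i\in\Gamma_{\eps_1}^0$ if and only if $i\in\Gamma_{\eps_2}^0$, and likewise for the odd part, giving the required equalities
\begin{equation*}
\Gamma_{\eps_1,\eps_2}\cap\Gamma_{\eps_1}^0=\Gamma_{\eps_1,\eps_2}\cap\Gamma_{\eps_2}^0,\qquad \Gamma_{\eps_1,\eps_2}\cap\Gamma_{\eps_1}^1=\Gamma_{\eps_1,\eps_2}\cap\Gamma_{\eps_2}^1.
\end{equation*}

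Since both parts reduce to essentially one-line manipulations of the axioms, there is no real obstacle here; the only point worth being careful about is making explicit that the defining condition for $\Gamma_{\eps_1,\eps_2}$ only requires agreement in the first slot against every second slot, but this suffices because the symmetry axiom $\eps_\alpha(i,j)\eps_\alpha(j,i)=1_\gC$ already ensures that agreement in one slot propagates. The whole proof is therefore short and mechanical.
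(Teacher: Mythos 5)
Your argument is correct, and the paper itself gives no proof of this proposition (it is left as routine), so there is nothing to compare against; your verification of the subgroup axioms via bimultiplicativity and the specialization $j=i$ to match the signature functions is exactly the intended one-line argument.
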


\begin{proposition}[$\eps$-center]
\label{prop-centfine}
The $\eps$-center of $\algrA$, whose fine grading is associated to the factor set $\sigma$, is given by:
\begin{equation*}
\caZ_\eps^\bullet(\algA)=\bigoplus_{\alpha\in\Gamma_{\eps,\eps_\sigma}}\algA^\alpha.
\end{equation*}
\end{proposition}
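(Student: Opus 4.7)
The plan is to reduce the problem to testing central-ness on the homogeneous basis $(e_\alpha)$ and then read the condition off from the factor set $\sigma$. I will proceed in three steps.

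First, I will show that $\caZ_\eps^\bullet(\algA)$ is itself a $\Gamma$-graded subspace of $\algrA$, so that it suffices to characterize its homogeneous elements. If $a=\sum_\alpha a_\alpha$ with $a_\alpha\in\algA^\alpha$ lies in the $\eps$-center and $b\in\algA^\beta$ is homogeneous, then $[a,b]_\eps=\sum_\alpha[a_\alpha,b]_\eps$, and the summands have pairwise distinct degrees $\alpha+\beta$. Vanishing of the total therefore forces each $[a_\alpha,b]_\eps=0$, and since $\algrA$ is spanned by its homogeneous elements, each $a_\alpha$ lies individually in $\caZ_\eps^\bullet(\algA)$.

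Second, fineness of the grading means that every non-zero homogeneous element of degree $\alpha\in\Supp(\algrA)$ is a scalar multiple of $e_\alpha$, so the question reduces to: for which $\alpha$ does $e_\alpha\in\caZ_\eps^\bullet(\algA)$? By the previous step and the fact that the $e_\beta$ span $\algrA$, this is equivalent to $[e_\alpha,e_\beta]_\eps=0$ for every $\beta\in\Supp(\algrA)$. A direct computation using $e_\alpha\fois e_\beta=\sigma(\alpha,\beta)e_{\alpha+\beta}$ gives
\begin{equation*}
[e_\alpha,e_\beta]_\eps=\bigl(\sigma(\alpha,\beta)-\eps(\alpha,\beta)\sigma(\beta,\alpha)\bigr)e_{\alpha+\beta}=\sigma(\beta,\alpha)\bigl(\eps_\sigma(\alpha,\beta)-\eps(\alpha,\beta)\bigr)e_{\alpha+\beta}.
\end{equation*}
Because $\Supp(\algrA)$ is a subgroup, $e_{\alpha+\beta}\neq 0$, and since $\sigma(\beta,\alpha)\in\gC^\ast$, this bracket vanishes if and only if $\eps_\sigma(\alpha,\beta)=\eps(\alpha,\beta)$.

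Finally, I will interpret the condition. Requiring $\eps(\alpha,\beta)=\eps_\sigma(\alpha,\beta)$ for every $\beta\in\Supp(\algrA)$ is precisely the defining condition for $\alpha$ to belong to $\Gamma_{\eps,\eps_\sigma}$, since degrees outside $\Supp(\algrA)$ correspond to the zero subspace and impose no further constraint on the direct sum $\bigoplus_{\alpha\in\Gamma_{\eps,\eps_\sigma}}\algA^\alpha$. Combining the three steps yields the stated formula. The only real obstacle is bookkeeping: one must first establish that the center is graded (to be allowed to test one degree at a time) and then keep track of the scalar factor produced by the $\eps$-commutator so that $\eps_\sigma-\eps$ appears cleanly; after that, the equivalence with membership in $\Gamma_{\eps,\eps_\sigma}$ is essentially by definition.
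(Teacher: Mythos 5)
Your proof is correct and follows essentially the same route as the paper: reduce to the homogeneous basis elements $e_\alpha$, compute $[e_\alpha,e_\beta]_\eps$ via the factor set $\sigma$, and observe that vanishing for all $\beta\in\Supp(\algrA)$ is exactly the condition $\alpha\in\Gamma_{\eps,\eps_\sigma}$. Your first step, establishing that the $\eps$-center is a graded subspace so that one may test degree by degree, is a point the paper's proof leaves implicit, and including it is a welcome bit of extra rigor rather than a different method.
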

\begin{proof}
Suppose that there exists $\alpha\in\Supp(\algrA)$ such that $e_\alpha\in\caZ^\bullet_\eps(\algA)$. Then, $\forall\beta\in\Supp(\algrA)$,
\begin{equation*}
[e_\alpha,e_\beta]_\eps=\sigma(\alpha,\beta)(1-(\eps\eps_\sigma^{-1})(\alpha,\beta))e_{\alpha+\beta}.
\end{equation*}
$\forall\beta$, $[e_\alpha,e_\beta]_\eps=0$ $\Leftrightarrow$ $\forall\beta$, $\eps(\alpha,\beta)=\eps_\sigma(\alpha,\beta)$, that is $\alpha\in\Gamma_{\eps,\eps_\sigma}$.
\end{proof}

Let us also define the set (potentially empty):
\begin{equation*}
R_{\eps_1,\eps_2}=\{i\in\Gamma,\,\forall j\in\Gamma,\, \eps_1(i-j,j)=\eps_2(i-j,j)\},
\end{equation*}
for two commutation factors $\eps_1$ and $\eps_2$ on $\Gamma$ over $\gC$.

\begin{proposition}
\label{prop-rset}
This set satisfies:
\begin{align*}
\forall i\in R_{\eps_1,\eps_2},&\quad -i\in R_{\eps_1,\eps_2}.\\
\forall i,j\in R_{\eps_1,\eps_2},&\quad i+j\in\Gamma_{\eps_1,\eps_2}.
\end{align*}
Moreover, if $\psi_{\eps_1}=\psi_{\eps_2}$, then $R_{\eps_1,\eps_2}=\Gamma_{\eps_1,\eps_2}$, else $R_{\eps_1,\eps_2}\cap \Gamma_{\eps_1,\eps_2}=\emptyset$, where $\psi_\eps$ is the signature function of $\eps$, defined in subsection \ref{subsec-commfact}.
\end{proposition}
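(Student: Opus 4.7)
My plan is to reformulate the defining condition for $R_{\eps_1,\eps_2}$ in terms of a ratio, then read off each claim as a simple algebraic manipulation.

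First I would rewrite the condition using bilinearity. For any commutation factor $\eps$, one has $\eps(i-j,j)=\eps(i,j)\eps(-j,j)=\eps(i,j)\psi_\eps(j)^{-1}$, and since $\psi_\eps(j)\in\{\pm 1_\gK\}$ we have $\psi_\eps(j)^{-1}=\psi_\eps(j)$. Thus $i\in R_{\eps_1,\eps_2}$ is equivalent to
\begin{equation*}
\forall j\in\Gamma,\qquad \frac{\eps_1(i,j)}{\eps_2(i,j)}=\frac{\psi_{\eps_2}(j)}{\psi_{\eps_1}(j)}.
\end{equation*}
This reformulation is the key step; everything else is a short consequence.

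For the first property, if $i\in R_{\eps_1,\eps_2}$ then taking inverses of both sides of the above identity and using $\psi_{\eps_\alpha}(j)^2=1$ gives the same identity with $i$ replaced by $-i$, so $-i\in R_{\eps_1,\eps_2}$. For the second property, given $i,i'\in R_{\eps_1,\eps_2}$, multiply the two corresponding identities and use bilinearity of $\eps_1,\eps_2$ in the first argument: the right-hand side becomes $(\psi_{\eps_2}(j)/\psi_{\eps_1}(j))^2=1$, hence $\eps_1(i+i',j)=\eps_2(i+i',j)$ for all $j$, i.e.\ $i+i'\in \Gamma_{\eps_1,\eps_2}$.

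For the third statement, I would directly compare the two defining conditions. If $\psi_{\eps_1}=\psi_{\eps_2}$ then the right-hand side of the reformulated condition equals $1$ for all $j$, so it reduces exactly to the defining condition of $\Gamma_{\eps_1,\eps_2}$, giving $R_{\eps_1,\eps_2}=\Gamma_{\eps_1,\eps_2}$. Conversely, if $i$ lies in both $R_{\eps_1,\eps_2}$ and $\Gamma_{\eps_1,\eps_2}$, then the left-hand side of the reformulated identity is identically $1$, forcing $\psi_{\eps_1}(j)=\psi_{\eps_2}(j)$ for all $j$; contrapositively, if $\psi_{\eps_1}\neq \psi_{\eps_2}$ then $R_{\eps_1,\eps_2}\cap\Gamma_{\eps_1,\eps_2}=\emptyset$. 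There is no real obstacle here; the only point that requires a moment of care is the identification $\eps(-j,j)=\psi_\eps(j)$, after which all three items are one-line computations.
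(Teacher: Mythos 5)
Your proof is correct and complete. The key reformulation $\eps(i-j,j)=\eps(i,j)\eps(-j,j)=\eps(i,j)\psi_\eps(j)$ (using $\eps(0,j)=1_\gK$ and $\psi_\eps(j)^2=1_\gK$) is valid, and all four claims do follow from it by the one-line manipulations you describe. Note that the paper states Proposition~\ref{prop-rset} without any proof, so there is no argument to compare against; your reformulation of the membership condition for $R_{\eps_1,\eps_2}$ as $\eps_1(i,j)/\eps_2(i,j)=\psi_{\eps_1}(j)/\psi_{\eps_2}(j)$ for all $j$ is the natural way to fill that gap, and it makes the dichotomy in the last assertion (equality of the two sets when the signatures agree, disjointness otherwise) transparent.
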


\begin{proposition}[$\eps$-traces]
\label{prop-tracefine}
The $\eps$-traces on $\algrA$ are the linear maps $\algA\to\gC$ vanishing outside of
\begin{equation*}
\bigoplus_{\alpha\in R_{\eps,\eps_\sigma}}\algA^\alpha.
\end{equation*}
\end{proposition}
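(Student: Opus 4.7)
The plan is to reduce the $\eps$-trace condition to a pointwise condition on the homogeneous basis $(e_\alpha)_{\alpha\in\Supp(\algrA)}$, using the multiplication rule $e_\alpha\fois e_\beta=\sigma(\alpha,\beta)\,e_{\alpha+\beta}$. First I would let $T:\algA\to\gC$ be linear; by linearity it suffices to work on basis elements. Evaluate the defining identity $T(e_\alpha\fois e_\beta)=\eps(\alpha,\beta)T(e_\beta\fois e_\alpha)$ to obtain
\begin{equation*}
\sigma(\alpha,\beta)\,T(e_{\alpha+\beta})=\eps(\alpha,\beta)\,\sigma(\beta,\alpha)\,T(e_{\alpha+\beta}),
\end{equation*}
and then use $\eps_\sigma(\alpha,\beta)=\sigma(\alpha,\beta)\sigma(\beta,\alpha)^{-1}$ (together with $\sigma(\beta,\alpha)\in\gC^\ast$) to rewrite this as
\begin{equation*}
\bigl(\eps_\sigma(\alpha,\beta)-\eps(\alpha,\beta)\bigr)\,T(e_{\alpha+\beta})=0.
\end{equation*}

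The second step is to reinterpret this equation. Fixing $\gamma\in\Supp(\algrA)$ and letting $\alpha=\gamma-\beta$ vary over $\Supp(\algrA)$, the condition ``$T(e_\gamma)\neq0$'' forces $\eps(\gamma-\beta,\beta)=\eps_\sigma(\gamma-\beta,\beta)$ for all $\beta\in\Supp(\algrA)$. Since $\Supp(\algrA)$ is a subgroup of $\Gamma$ and $\algA^\alpha=\algzero$ for $\alpha\notin\Supp(\algrA)$, the quantifier over $\Supp(\algrA)$ is equivalent (in the non-trivial range) to the quantifier $\forall j\in\Gamma$ appearing in the definition of $R_{\eps,\eps_\sigma}$: the vanishing of $\algA^{\gamma-j}$ makes the extra conditions automatic, while the fact that $\sigma$ (or the corresponding multiplier) is defined on all of $\Gamma$ lets us compare $\eps$ and $\eps_\sigma$ on the whole group. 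Hence the necessary condition for $T(e_\gamma)\neq 0$ is exactly $\gamma\in R_{\eps,\eps_\sigma}$, which gives the inclusion of the support of $T$ in $\bigoplus_{\alpha\in R_{\eps,\eps_\sigma}}\algA^\alpha$.

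For the converse, I would take $T$ any linear form vanishing outside $\bigoplus_{\alpha\in R_{\eps,\eps_\sigma}}\algA^\alpha$, and verify \eqref{deftrace} on pairs $(e_\alpha,e_\beta)$: either $\alpha+\beta\notin R_{\eps,\eps_\sigma}$, in which case both sides vanish, or $\alpha+\beta\in R_{\eps,\eps_\sigma}$, in which case setting $j=\beta$ in the definition of $R$ gives $\eps(\alpha,\beta)=\eps_\sigma(\alpha,\beta)$, and the displayed identity above becomes an equality. Bilinearity extends this to all of $\algA\times\algA$. The main obstacle I anticipate is not a calculation but a bookkeeping issue, namely making precise the passage between the quantifier ``$\forall\beta\in\Supp(\algrA)$'' that naturally appears on the matrix side and the quantifier ``$\forall j\in\Gamma$'' in the definition of $R_{\eps,\eps_\sigma}$; handling this cleanly will rely on the subgroup property of $\Supp(\algrA)$ and on extending $\sigma$ (equivalently its multiplier class) from $\Supp(\algrA)$ to $\Gamma$ as in Theorem~\ref{thm-multcomm}.
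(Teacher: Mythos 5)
Your proof is correct and follows essentially the same route as the paper: evaluate the $\eps$-trace identity on the homogeneous basis to obtain $\bigl(1-\eps\eps_\sigma^{-1}(\alpha,\beta)\bigr)T(e_{\alpha+\beta})=0$ and change variables $\alpha\to\alpha-\beta$ to read off membership in $R_{\eps,\eps_\sigma}$. The paper leaves the converse verification and the passage from the quantifier $\forall\beta\in\Supp(\algrA)$ to the $\forall j\in\Gamma$ of the definition of $R_{\eps,\eps_\sigma}$ implicit, both of which you address explicitly; these are welcome but minor additions to the same argument.
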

\begin{proof}
Let $T:\algrA\to\gC$ be an $\eps$-trace. $\forall\alpha,\beta\in\Supp(\algrA)$, we have $T([e_\alpha,e_\beta]_\eps)=0$. Then, from the proof of Proposition \ref{prop-centfine}, one obtains $(1-\eps\eps_\sigma^{-1}(\alpha,\beta))T(e_{\alpha+\beta})=0$. After a change of variables $\alpha\to\alpha-\beta$,
\begin{equation*}
\forall\alpha,\beta\in\Supp(\algrA),\quad (1-\eps\eps_\sigma^{-1}(\alpha-\beta,\beta))T(e_\alpha)=0.
\end{equation*}
And then, $\forall\alpha\in\Supp(\algrA)$, ($T(e_\alpha)=0$ or $\alpha\in R_{\eps,\eps_\sigma}$).
\end{proof}

Let us define the coordinates $(x_\alpha)_{\alpha\in\Supp(\algrA)}$ of a homogeneous $\eps$-derivation $\kX$ of $\algrA$ by $\forall\alpha\in\Supp(\algrA)$,
\begin{equation*}
\kX(e_\alpha)=\sigma(|\kX|,\alpha)\ x_\alpha\ e_{\alpha+|\kX|}.
\end{equation*}

\begin{theorem}[$\eps$-derivations]
The coordinates of a homogeneous $\eps$-derivation $\kX$ of $\algrA$ satisfy: $\forall\alpha,\beta\in\Supp(\algrA)$,
\begin{equation*}
x_{\alpha+\beta}=x_\alpha+(\eps\eps_\sigma^{-1})(|\kX|,\alpha)x_\beta.
\end{equation*}
Moreover, $\kX$ is inner $\Leftrightarrow$ $x_\alpha$ is proportional (independently of $\alpha$) to $1-(\eps\eps_\sigma^{-1})(|\kX|,\alpha)$.

The exact sequence of $\eps$-Lie algebras and $\caZ_\eps^\bullet(\algA)$-modules is canonically split:
\begin{gather*}
\xymatrix@1@C=25pt{{\algzero} \ar[r] & {\Int^\bullet_\eps(\algA)} \ar[r] & {\Der^\bullet_\eps(\algA)} \ar[r] & {\Out^\bullet_\eps(\algA)} \ar[r] & {\algzero}}
\end{gather*}
 and there are only two possibilities for $\kX$:
\begin{enumerate}
\item If $|\kX|\in\Gamma_{\eps,\eps_\sigma}$, $\kX$ is outer and given by a group morphism of $\Supp(\algrA)$ into $\gC$.
\item Otherwise, $\kX$ is inner.
\end{enumerate}
\end{theorem}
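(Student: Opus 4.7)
The main idea is to turn the $\eps$-Leibniz rule into a functional equation on the coordinates $(x_\alpha)$, then to recognize its solutions as either group morphisms or scalar multiples of the universal ``inner'' profile $\alpha\mapsto 1-(\eps\eps_\sigma^{-1})(|\kX|,\alpha)$.

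First, I would derive the functional equation. Writing the $\eps$-derivation property \eqref{defderiv} on the product $e_\alpha\fois e_\beta=\sigma(\alpha,\beta)e_{\alpha+\beta}$, I get on the left
\begin{equation*}
\sigma(\alpha,\beta)\sigma(|\kX|,\alpha+\beta)\,x_{\alpha+\beta}\,e_{\alpha+\beta+|\kX|},
\end{equation*}
and on the right
\begin{equation*}
\sigma(|\kX|,\alpha)\sigma(\alpha+|\kX|,\beta)x_\alpha\,e_{\alpha+\beta+|\kX|}+\eps(|\kX|,\alpha)\sigma(|\kX|,\beta)\sigma(\alpha,\beta+|\kX|)x_\beta\,e_{\alpha+\beta+|\kX|}.
\end{equation*}
Using the cocycle identity \eqref{eq-eps-deffactorset} twice (once with $(|\kX|,\alpha,\beta)$ and once with $(\alpha,|\kX|,\beta)$), both sides can be divided by the common factor $\sigma(|\kX|,\alpha)\sigma(\alpha+|\kX|,\beta)$; what survives on the right is $x_\alpha+\eps(|\kX|,\alpha)\sigma(\alpha,|\kX|)\sigma(|\kX|,\alpha)^{-1}x_\beta$, and by definition $\sigma(\alpha,|\kX|)\sigma(|\kX|,\alpha)^{-1}=\eps_\sigma(|\kX|,\alpha)^{-1}$, giving the announced relation.

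Second, I would characterize inner derivations. Since the grading is fine, any homogeneous element of degree $|\kX|$ is proportional to $e_{|\kX|}$, so a direct computation of $\ad_{c\,e_{|\kX|}}(e_\alpha)$ using the same arguments as above yields $x_\alpha=c\bigl(1-(\eps\eps_\sigma^{-1})(|\kX|,\alpha)\bigr)$, which is the second claim.

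Third, I would split into two cases according to whether $|\kX|\in\Gamma_{\eps,\eps_\sigma}$. In the \emph{first case}, $(\eps\eps_\sigma^{-1})(|\kX|,\alpha)=1$ for every $\alpha\in\Supp(\algrA)$; the functional equation becomes $x_{\alpha+\beta}=x_\alpha+x_\beta$, so $x$ is a group morphism $\Supp(\algrA)\to\gC$, and the inner profile degenerates to $0$, so any nonzero $\kX$ is outer. In the \emph{second case}, some $\alpha_0$ satisfies $f(\alpha_0):=(\eps\eps_\sigma^{-1})(|\kX|,\alpha_0)\neq 1$; swapping $\alpha,\beta$ in the functional equation yields
\begin{equation*}
x_\alpha\bigl(1-f(\beta)\bigr)=x_\beta\bigl(1-f(\alpha)\bigr),
\end{equation*}
and specializing $\beta=\alpha_0$ gives $x_\alpha=c\bigl(1-f(\alpha)\bigr)$ with $c=x_{\alpha_0}/(1-f(\alpha_0))$, so $\kX=\ad_{c\,e_{|\kX|}}$ is inner.

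Finally, the splitting: since $\Der^\bullet_\eps(\algA)$ decomposes as a direct sum of homogeneous components, the above dichotomy shows that $\Der^\bullet_\eps(\algA)=\Int^\bullet_\eps(\algA)\oplus\caI$, where $\caI$ is the subspace of homogeneous $\eps$-derivations indexed by group morphisms $\Supp(\algrA)\to\gC$ and supported in degrees lying in $\Gamma_{\eps,\eps_\sigma}$. The projection $\caI\hookrightarrow\Der^\bullet_\eps(\algA)\twoheadrightarrow\Out^\bullet_\eps(\algA)$ is an isomorphism (surjectivity by the case analysis, injectivity since in the first case the only inner element is $0$), and this gives the canonical splitting. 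The main technical point will simply be the cocycle bookkeeping in the first step; once the functional equation is in hand the rest follows mechanically.
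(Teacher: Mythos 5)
Your proposal is correct and follows essentially the same route as the paper: derive the coordinate functional equation from the $\eps$-Leibniz rule via the factor-set cocycle identity, identify the inner profile $\alpha\mapsto 1-(\eps\eps_\sigma^{-1})(|\kX|,\alpha)$, and split according to whether $|\kX|\in\Gamma_{\eps,\eps_\sigma}$, with the symmetrization $x_\alpha\bigl(1-f(\beta)\bigr)=x_\beta\bigl(1-f(\alpha)\bigr)$ being exactly the paper's argument in the inner case. If anything, you make the cocycle bookkeeping and the final description of the splitting more explicit than the paper does.
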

\begin{proof}
\begin{itemize}
\item Let $\kX$ be an homogeneous $\eps$-derivation of $\algrA$. The defining relation \eqref{defderiv} of $\eps$-derivations: $\forall\alpha,\beta\in\Supp(\algrA)$,
\begin{equation*}
\kX(e_\alpha\fois e_\beta)=\kX(e_\alpha)\fois e_\beta+\eps(|\kX|,\alpha)e_\alpha\fois\kX(e_\beta),
\end{equation*}
can be reexpressed in terms of the coordinates of $\kX$:
\begin{equation*}
\sigma(|\kX|,\alpha+\beta)x_{\alpha+\beta}\sigma(\alpha,\beta)=\sigma(|\kX|,\alpha)x_\alpha\sigma(\alpha+|\kX|,\beta)+\eps(|\kX|,\alpha)\sigma(|\kX|,\beta)x_\beta\sigma(\alpha,\beta+|\kX|).
\end{equation*}
Thanks to \eqref{eq-eps-deffactorset} and \eqref{eq-multcomm}, one obtains
\begin{equation*}
x_{\alpha+\beta}=x_\alpha+(\eps\eps_\sigma^{-1})(|\kX|,\alpha)x_\beta.
\end{equation*}
\item If $\kX$ is inner, then there exists $\lambda\in\gC$ such that $\forall\alpha\in\Supp(\algrA)$,
\begin{equation*}
\kX(e_\alpha)=\lambda[e_{|\kX|},e_\alpha]_\eps=\lambda\sigma(|\kX|,\alpha)(1-\eps\eps_\sigma^{-1}(|\kX|,\alpha))e_{|\kX|+\alpha}.
\end{equation*}
Due to the definition of the coordinates $(x_\alpha)$ of $\kX$, one has
\begin{equation*}
x_\alpha=\lambda(1-\eps\eps_\sigma^{-1}(|\kX|,\alpha)).
\end{equation*}
The converse is straightforward.
\item If $|\kX|\in\Gamma_{\eps,\eps_\sigma}$, $x_{\alpha+\beta}=x_\alpha+x_\beta$ and $x:\Supp(\algrA)\to\gC$ is a group morphism. Moreover, the only possible inner derivation is $0$. Consequently, $\kX$ can be seen as an outer $\eps$-derivation.
\item If $|\kX|\notin\Gamma_{\eps,\eps_\sigma}$, there exists $\alpha\in\Supp(\algrA)$ such that $\eps\eps_\sigma^{-1}(|\kX|,\alpha)\neq1$. Since $\forall\alpha,\beta\in\Supp(\algrA)$,
\begin{equation*}
x_\alpha+\eps\eps_\sigma^{-1}(|\kX|,\alpha)x_\beta=x_{\alpha+\beta}=x_\beta+\eps\eps_\sigma^{-1}(|\kX|,\beta)x_\alpha,
\end{equation*}
one concludes that $\forall\beta\in\Supp(\algrA)$,
\begin{equation*}
x_\beta=\frac{1-\eps\eps_\sigma^{-1}(|\kX|,\beta)}{1-\eps\eps_\sigma^{-1}(|\kX|,\alpha)}x_\alpha,
\end{equation*}
and, by the above property of inner $\eps$-derivations, $\kX$ is inner. The grading, and more precisely the belonging of the grading to $\Gamma_{\eps,\eps_\sigma}$ or not, provides a splitting of the $\eps$-derivations exact sequence.
\end{itemize}
\end{proof}

\begin{corollary}
We define $\ksl_\eps^\bullet(\algA)$ to be the space of traceless matrices (see Proposition \ref{prop-tracefine}) and the $\caZ_\eps^\bullet(\algA)$-module
\begin{equation*}
\mathcal{I}_\eps^\bullet(\algA)=\bigoplus_{\alpha\notin\Gamma_{\eps,\eps_\sigma}}\algA^\alpha.
\end{equation*}
Then,
\begin{itemize}
 \item $\algrA=\caZ_\eps^\bullet(\algA)\oplus \caI_\eps^\bullet(\algA)$
\item $\ad: \caI_\eps^\bullet(\algA)\to\Int_\eps^\bullet(\algA)$
is an isomorphism of $\caZ_\eps^\bullet(\algA)$-module. \item
Moreover, if $\eps$ is a proper commutation factor,
$\caI_\eps^\bullet(\algA)=\ksl_\eps^\bullet(\algA)$, and $\ad$
restricted to this $\eps$-Lie algebra is an isomorphism of
$\eps$-Lie algebra.
\end{itemize}
\end{corollary}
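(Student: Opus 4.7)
The first two bullets follow quickly from the preceding propositions and require only a small module-theoretic verification. First I would note that, since $\algrA=\bigoplus_{\alpha\in\Supp(\algrA)}\algA^\alpha$ and $\Supp(\algrA)$ is partitioned according to whether $\alpha\in\Gamma_{\eps,\eps_\sigma}$ or not, Proposition~\ref{prop-centfine} and the definition of $\caI_\eps^\bullet(\algA)$ give the direct sum decomposition $\algrA=\caZ_\eps^\bullet(\algA)\oplus\caI_\eps^\bullet(\algA)$ at once.

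For the second bullet, the map $\ad\colon\algrA\to\Int_\eps^\bullet(\algA)$ is by definition surjective with kernel $\caZ_\eps^\bullet(\algA)$, so its restriction to the complement $\caI_\eps^\bullet(\algA)$ is a bijection. To see it is $\caZ_\eps^\bullet(\algA)$-linear, for $z\in\caZ_\eps^\bullet(\algA)$ and $a,b\in\algrA$ homogeneous I would expand $\ad_{z\fois a}(b)=(z\fois a)\fois b-\eps(|z|+|a|,|b|)\,b\fois(z\fois a)$ and use the defining identity $b\fois z=\eps(|b|,|z|)\,z\fois b$ of the $\eps$-center to pull $z$ out on the left, obtaining $\ad_{z\fois a}(b)=z\fois\ad_a(b)=(z\fois\ad_a)(b)$ as required by \eqref{defmodder}.

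The third bullet is where the subtler input enters. When $\eps$ is proper, $\psi_\eps\equiv 1_\gK$; by Theorem~\ref{thm-multcomm}, the commutation factor $\eps_\sigma$ is always proper, so $\psi_{\eps_\sigma}\equiv 1_\gK$ as well. Proposition~\ref{prop-rset} then yields $R_{\eps,\eps_\sigma}=\Gamma_{\eps,\eps_\sigma}$, and Proposition~\ref{prop-tracefine} identifies the space of $\eps$-traces with the linear forms on $\algrA$ vanishing outside $\bigoplus_{\alpha\in\Gamma_{\eps,\eps_\sigma}}\algA^\alpha=\caZ_\eps^\bullet(\algA)$. The key observation is that in the fine-grading setting any linear functional on $\caZ_\eps^\bullet(\algA)$ extended by zero to $\caI_\eps^\bullet(\algA)$ is an $\eps$-trace (the trace identity for $e_\alpha\fois e_\beta$ with $\alpha+\beta\in R_{\eps,\eps_\sigma}$ is exactly the defining condition of $R$). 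Writing $a=a_Z+a_I$ according to the direct sum, this shows that $T(a)=T(a_Z)$ for every $\eps$-trace $T$ and that these $T$ separate the points of $\caZ_\eps^\bullet(\algA)$, so $a\in\ksl_\eps^\bullet(\algA)$ if and only if $a_Z=0$, that is, $\caI_\eps^\bullet(\algA)=\ksl_\eps^\bullet(\algA)$. Finally, $\ad$ is always an $\eps$-Lie algebra morphism by the $\eps$-Jacobi identity \eqref{epsjacobi}, so the bijection $\ad\colon\caI_\eps^\bullet(\algA)\to\Int_\eps^\bullet(\algA)$ already established upgrades to an isomorphism of $\eps$-Lie algebras.

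The main obstacle is the identification $\caI_\eps^\bullet(\algA)=\ksl_\eps^\bullet(\algA)$ in the proper case: it hinges on having enough $\eps$-traces to detect every homogeneous component in $\caZ_\eps^\bullet(\algA)$, which is exactly what the coincidence $R_{\eps,\eps_\sigma}=\Gamma_{\eps,\eps_\sigma}$ provided by properness guarantees.
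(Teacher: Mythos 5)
Your proof is correct and follows the same route as the paper's (very terse) argument: $\mathrm{Ker}(\ad)=\caZ^\bullet_\eps(\algA)$ for the first two bullets, and for the third the chain ``$\eps$ proper $\Rightarrow \psi_\eps=\psi_{\eps_\sigma}\Rightarrow R_{\eps,\eps_\sigma}=\Gamma_{\eps,\eps_\sigma}$ via Proposition~\ref{prop-rset}, then Proposition~\ref{prop-tracefine}''. You simply make explicit two points the paper leaves implicit — the $\caZ^\bullet_\eps(\algA)$-linearity of $\ad$ and the converse direction of the trace characterization needed to conclude $\ksl_\eps^\bullet(\algA)=\caI_\eps^\bullet(\algA)$ — both of which check out.
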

\begin{proof}
Indeed, $\text{Ker}(\ad)=\caZ_\eps^\bullet(\algA)$. If $\eps$ is proper, $R_{\eps,\eps_\sigma}=\Gamma_{\eps,\eps_\sigma}$ due to Proposition \ref{prop-rset}. Then, Proposition \ref{prop-tracefine} gives the result: $\mathcal{I}_\eps^\bullet(\algA)=\ksl_\eps^\bullet(\algA)$.
\end{proof}

\begin{corollary}
\label{cor-fineresults}
In the case $\eps=\eps_\sigma$, one has
\begin{align*}
\caZ_{\eps_\sigma}^\bullet(\algA)&=\algrA.\\
\Der_{\eps_\sigma}^\bullet(\algA)&=\Out_{\eps_\sigma}^\bullet(\algA).
\end{align*}
Moreover, the $\eps_\sigma$-traces of $\algrA$ are the linear maps $\algA\to\gC$.
\end{corollary}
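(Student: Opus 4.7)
The plan is to obtain each of the three statements as an immediate specialization of the results already established for a general commutation factor $\eps$, by setting $\eps = \eps_\sigma$ and observing what the relevant subgroups $\Gamma_{\eps,\eps_\sigma}$ and $R_{\eps,\eps_\sigma}$ become.

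First I would compute $\Gamma_{\eps_\sigma,\eps_\sigma}$. By definition, this is the set of $i\in\Gamma$ such that $\eps_\sigma(i,j)=\eps_\sigma(i,j)$ for all $j\in\Gamma$, which is trivially all of $\Gamma$. Proposition~\ref{prop-centfine} then gives
\begin{equation*}
\caZ_{\eps_\sigma}^\bullet(\algA) = \bigoplus_{\alpha\in\Gamma}\algA^\alpha = \algrA,
\end{equation*}
which is the first identity. This also makes $\algrA$ into an $\eps_\sigma$-graded commutative algebra (in accordance with the Proposition preceding it, which states that the fine grading with factor set $\sigma$ makes $\algrA$ $\eps_\sigma$-graded commutative).

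Next I would treat the $\eps_\sigma$-traces. The set $R_{\eps_\sigma,\eps_\sigma}$ consists of $i\in\Gamma$ such that $\eps_\sigma(i-j,j)=\eps_\sigma(i-j,j)$ for all $j$, which again is trivially all of $\Gamma$. Proposition~\ref{prop-tracefine} then tells us that the space of $\eps_\sigma$-traces on $\algrA$ is the space of linear maps $\algA\to\gC$ vanishing outside $\bigoplus_{\alpha\in\Gamma}\algA^\alpha = \algrA$, i.e.\ the full dual space of $\algrA$. This yields the third claim of the statement.

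Finally, for the derivations, I would invoke the Theorem above on the structure of $\eps$-derivations in the fine grading case. Since every $|\kX|\in\Gamma$ lies in $\Gamma_{\eps_\sigma,\eps_\sigma}=\Gamma$, the alternative of the Theorem forces $\kX$ to be outer, and given by a group morphism $\Supp(\algrA)\to\gC$; this means $\Int_{\eps_\sigma}^\bullet(\algA)=\algzero$, so the canonical split short exact sequence of $\eps$-Lie algebras and $\caZ^\bullet_\eps(\algA)$-modules collapses to $\Der_{\eps_\sigma}^\bullet(\algA)=\Out_{\eps_\sigma}^\bullet(\algA)$. Alternatively, one may derive this directly from the first part: since $\algrA=\caZ_{\eps_\sigma}^\bullet(\algA)$, the bracket $[-,-]_{\eps_\sigma}$ is identically zero on $\algrA$, hence $\ad_a=0$ for every $a\in\algrA$, so $\Int_{\eps_\sigma}^\bullet(\algA)=\algzero$ and the conclusion follows. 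There is no real obstacle here: the whole corollary is a direct unpacking of definitions once $\Gamma_{\eps_\sigma,\eps_\sigma}=R_{\eps_\sigma,\eps_\sigma}=\Gamma$ has been noted.
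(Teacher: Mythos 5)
Your proposal is correct and follows essentially the same route as the paper: the paper's proof consists precisely of noting $\Gamma_{\eps_\sigma,\eps_\sigma}=R_{\eps_\sigma,\eps_\sigma}=\Gamma$ and that $[e_\alpha,e_\beta]_{\eps_\sigma}=0$ for all $\alpha,\beta\in\Supp(\algrA)$ (your ``alternative'' argument that the vanishing of the bracket kills all inner derivations). Your write-up simply spells out the specializations of Proposition~\ref{prop-centfine}, Proposition~\ref{prop-tracefine} and the $\eps$-derivations Theorem in more detail than the paper does.
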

\begin{proof}
Indeed, $\Gamma_{\eps_\sigma,\eps_\sigma}=R_{\eps_\sigma,\eps_\sigma}=\Gamma$. Note that $\forall\alpha,\beta\in\Supp(\algrA)$, $[e_\alpha,e_\beta]_{\eps_\sigma}=0$.
\end{proof}

We notice that the results of Corollary \ref{cor-fineresults} are very different from the one obtained in subsection \ref{subsec-exelem}, namely Proposition \ref{prop-centermatrix} and Proposition \ref{prop-derivmatrix}, which are close to the results of the non-graded case on the matrices. This shows that the extension to the $\eps$-graded case of the framework of derivations is non-trivial and can provide interesting examples.

Note that the dimension has not been used for the moment in this subsection, so that the latter results remain true for any fine-graded division algebra. If we take into account that $\algrA$ is finite dimensional, one obtains:
\begin{corollary}
For $\algrA=\Matr_D$ with fine grading, and $D<\infty$,
\begin{equation*}
\Out_\eps^\bullet(\algA)=\algzero.
\end{equation*}
\end{corollary}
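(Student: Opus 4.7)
My plan is to appeal directly to the classification given in the preceding theorem, which distributes homogeneous $\eps$-derivations into two cases: those of degree in $\Gamma_{\eps,\eps_\sigma}$, which are outer and parametrized by group morphisms $x:\Supp(\algrA)\to\gC$, and those of degree not in $\Gamma_{\eps,\eps_\sigma}$, which are already inner. Hence to prove $\Out^\bullet_\eps(\algA)=\algzero$ it suffices to show that every such group morphism $x$ must be identically zero.

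The key new ingredient supplied by the hypothesis $D<\infty$ is that $\Supp(\algrA)$ is finite. Indeed, since the grading is fine each homogeneous component has dimension at most one, so $|\Supp(\algrA)|\leq\dim_\gC(\algrA)=D^2<\infty$. Because $\Supp(\algrA)$ is a subgroup of $\Gamma$ (as noted earlier in the section), it is a finite abelian group: every element $\alpha\in\Supp(\algrA)$ has some finite order $n_\alpha$ with $n_\alpha\cdot\alpha=0$.

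The final step is the observation that the additive group $(\gC,+)$ is torsion-free. Thus for any group morphism $x:\Supp(\algrA)\to\gC$ and any $\alpha\in\Supp(\algrA)$, the identity $n_\alpha\cdot x(\alpha)=x(n_\alpha\cdot\alpha)=x(0)=0$ forces $x(\alpha)=0$. Consequently the coordinates of any homogeneous $\eps$-derivation $\kX$ of degree in $\Gamma_{\eps,\eps_\sigma}$ all vanish, so $\kX=0$; combined with case~2 of the theorem, every homogeneous $\eps$-derivation is inner, and by linearity $\Der^\bullet_\eps(\algA)=\Int^\bullet_\eps(\algA)$, i.e.\ $\Out^\bullet_\eps(\algA)=\algzero$.

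There is really no main obstacle here: the theorem has already done the hard work of splitting the short exact sequence and identifying the potential outer part with group morphisms to $\gC$. The only content added by finite-dimensionality is the torsion argument, which is immediate from $\dim_\gC\algrA=D^2$ together with the fineness of the grading.
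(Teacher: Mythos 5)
Your proof is correct and follows exactly the paper's own argument: the paper's proof is the one-line observation that a finite group admits only the trivial morphism into $\gC$, which you have simply spelled out (finiteness of $\Supp(\algrA)$ from $D<\infty$ and fineness, plus torsion-freeness of $(\gC,+)$). No substantive difference.
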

\begin{proof}
If $\Supp(\algrA)$ is finite, the only group morphism $\Supp(\algrA)\to\gC$ is trivial.
\end{proof}

\subsubsection{\texorpdfstring{Differential calculus and $\eps$-connections}{Differential calculus and epsilon-connections}}
\label{subsub-diffcalcfine}

To describe the differential calculus of $\algrA=\Matr_D$ based on
$\eps$-derivations, we need to introduce the following notation:
\begin{equation*}
\Omega_{\eps,\caZ}^{\bullet,\bullet}(\caI_\eps(\algA),\algA)=\algrA\otimes_{\caZ_\eps(\algA)}\exter^\bullet_{\eps,\caZ_\eps(\algA)}(\caI_\eps(\algA)^\ast)^\bullet
\end{equation*}
inspired of the part \ref{subsub-diffcalcelem}, and where the
$\eps$-exterior algebra of $\caI^\bullet_\eps(\algA)$ is made of
tensor products on $\caZ_\eps^\bullet(\algA)$. As for Proposition
\ref{prop-gv}, it is easy to show that
$\Omega_{\eps,\caZ}^{\bullet,\bullet}(\caI_\eps(\algA),\algA)$ is
a $(\gZ\times\Gamma)$-graded algebra, and a graded differential
algebra if $\caI_\eps^\bullet(\algA)$ is an $\eps$-Lie algebra.

\begin{lemma}
One can define the coordinates of any element
$\omega\in\Omr^{n,|\omega|}(\algA)$ of the differential calculus:
\begin{multline*}
\omega(\ad_{e_{\alpha_1}},\dots,\ad_{e_{\alpha_n}})=\omega_{\alpha_1,\dots,\alpha_n}e_{|\omega|}\fois
e_{\alpha_1}\fois\dots\fois e_{\alpha_n}\\
=\sigma(|\omega|,\alpha_1+\dots+\alpha_n)\sigma(\alpha_1,\alpha_2+\dots+\alpha_n)\dots\sigma(\alpha_{n-1},\alpha_n)\,
\omega_{\alpha_1,\dots,\alpha_n}e_{|\omega|+\alpha_1+\dots+\alpha_n},
\end{multline*}
for $\alpha_i\in\Supp(\algrA)\backslash\Gamma_{\eps,\eps_\sigma}$.
Then, these coordinates satisfy:
\begin{align*}
\omega_{\alpha_1,\dots,\alpha_i,\alpha_{i+1},\dots,\alpha_n}
&=-(\eps\eps_\sigma^{-1})(\alpha_i,\alpha_{i+1})\,\omega_{\alpha_1,\dots,\alpha_{i+1},\alpha_i,\dots,\alpha_n}\\
\omega_{\alpha_1,\dots,\alpha_i+\beta,\dots,\alpha_n}
&=\omega_{\alpha_1,\dots,\alpha_i,\dots,\alpha_n}\quad\forall\beta\in\Gamma_{\eps,\eps_\sigma}\cap
\Supp(\algrA).
\end{align*}
\end{lemma}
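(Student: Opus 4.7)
The proposal is to exploit the fineness of the grading: since $\algA^{|\omega|+\alpha_1+\dots+\alpha_n}$ is at most one-dimensional, the value $\omega(\ad_{e_{\alpha_1}},\dots,\ad_{e_{\alpha_n}})$ is automatically a scalar multiple of the basis vector $e_{|\omega|+\alpha_1+\dots+\alpha_n}$. First I verify that the iterated product satisfies $e_{|\omega|}\fois e_{\alpha_1}\fois\cdots\fois e_{\alpha_n}=F\fois e_{|\omega|+\alpha_1+\dots+\alpha_n}$ with $F=\sigma(|\omega|,\alpha_1+\dots+\alpha_n)\sigma(\alpha_1,\alpha_2+\dots+\alpha_n)\cdots\sigma(\alpha_{n-1},\alpha_n)$: this is an immediate induction from $e_a\fois e_b=\sigma(a,b)e_{a+b}$, and the independence of $F$ on the bracketing used is precisely the factor-set condition \eqref{eq-eps-deffactorset}. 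This legitimizes the definition of $\omega_{\alpha_1,\dots,\alpha_n}$.

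For the first identity (antisymmetry), I apply the third axiom of \eqref{defcalcdiff} to the arguments $\ad_{e_{\alpha_i}}$ and $\ad_{e_{\alpha_{i+1}}}$, which introduces a factor $-\eps(\alpha_i,\alpha_{i+1})$ and swaps the two derivations. Both sides evaluate to multiples of the same basis vector $e_{|\omega|+\alpha_1+\dots+\alpha_n}$; comparing the sigma products $F$ and $F_{\text{swap}}$ (which differ only through one telescoping pair), the factor-set identity collapses the ratio $F_{\text{swap}}/F$ to $\sigma(\alpha_{i+1},\alpha_i)/\sigma(\alpha_i,\alpha_{i+1})=\eps_\sigma(\alpha_i,\alpha_{i+1})^{-1}$ via \eqref{eq-multcomm}. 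Multiplying by the $-\eps(\alpha_i,\alpha_{i+1})$ from the axiom yields the claimed $-(\eps\eps_\sigma^{-1})(\alpha_i,\alpha_{i+1})$.

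For the second identity, the hypothesis $\beta\in\Gamma_{\eps,\eps_\sigma}$ ensures $e_\beta\in\caZ^\bullet_\eps(\algA)$ by Proposition~\ref{prop-centfine}. A direct check based on \eqref{defmodder} shows $\ad_{e_{\alpha_i+\beta}}=\sigma(\alpha_i,\beta)^{-1}\ad_{e_{\alpha_i}}\fois e_\beta$ as elements of $\Der^\bullet_\eps(\algA)$. Now \eqref{defcalcdiff} only provides $\caZ$-linearity in the last slot, so I propagate it to slot $i$ by sliding $\ad_{e_{\alpha_i}}\fois e_\beta$ to position $n$ via axiom~(3), applying axiom~(2), and sliding $\ad_{e_{\alpha_i}}$ back; the resulting left-over factor is $\prod_{j=i+1}^{n}\eps(\beta,\alpha_j)$. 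Finally I compute the ratio $F'/F$ of the sigma products associated to the tuples $(\alpha_1,\dots,\alpha_i+\beta,\dots,\alpha_n)$ and $(\alpha_1,\dots,\alpha_n)$ by the same technique as above, commuting $e_\beta$ through $e_{\alpha_{i+1}},\dots,e_{\alpha_n}$: this gives exactly $F'/F=\sigma(\alpha_i,\beta)^{-1}\sigma(|\omega|+\alpha_1+\dots+\alpha_n,\beta)\prod_{j=i+1}^{n}\eps(\beta,\alpha_j)$, and the factors $\sigma(\alpha_i,\beta)^{-1}$, $\sigma(|\omega|+\Sigma,\beta)$, $\prod\eps(\beta,\alpha_j)$ cancel pairwise with those coming from the $\caZ$-linearity argument, leaving $\omega_{\dots,\alpha_i+\beta,\dots}=\omega_{\dots,\alpha_i,\dots}$.

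The only real obstacle is the bookkeeping of $\sigma$- and $\eps$-factors: every step produces one, and the lemma asserts that the nett result is trivial. This relies crucially on the three inputs (i) the cocycle identity for $\sigma$, (ii) the identity $\eps_\sigma(i,j)=\sigma(i,j)\sigma(j,i)^{-1}$, and (iii) the replacement $\eps(\beta,\cdot)=\eps_\sigma(\beta,\cdot)$ available only because $\beta\in\Gamma_{\eps,\eps_\sigma}$; it is this last point that ensures the exact cancellation and explains why the invariance holds only for such $\beta$.
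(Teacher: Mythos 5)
Your argument is correct and follows exactly the route the paper intends: the paper's proof consists of the single sentence that the lemma is ``straightforward by using the axioms of the differential calculus in equation \eqref{defcalcdiff}'', and your proposal is precisely a careful working-out of those axioms (degree, $\caZ_\eps$-linearity in the last slot propagated by graded antisymmetry, and the swap axiom) combined with the cocycle identity for $\sigma$, the relation $\eps_\sigma(i,j)=\sigma(i,j)\sigma(j,i)^{-1}$, and the substitution $\eps(\beta,\cdot)=\eps_\sigma(\beta,\cdot)$ for $\beta\in\Gamma_{\eps,\eps_\sigma}$. The bookkeeping of the $\sigma$- and $\eps$-factors checks out, so nothing further is needed.
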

\begin{proof}
The proof is straightforward by using the axioms of the
differential calculus in equation \eqref{defcalcdiff}.
\end{proof}

\begin{theorem}
Let $\algrA=\Matr_D$ be an $\eps$-graded matrix algebra with fine
grading ($D<\infty$). Then:
\begin{itemize}
\item
$\ad_\ast:\Omega_{\eps,\caZ}^{\bullet,\bullet}(\caI_\eps(\algA),\algA)\to\Omr^{\bullet,\bullet}(\algA)$
is an isomorphism of $(\gZ\times\Gamma)$-graded algebras. \item
The space of $\eps$-connections on $\algrA$ is an affine space
modeled on the vector space
\begin{equation*}
\Omr^{1,0}(\algA)\approx\bigoplus_{\alpha\notin\Gamma_{\eps,\eps_\sigma}}\algA^\alpha\otimes_{\caZ_\eps(\algA)}(\algA^\alpha)^\ast.
\end{equation*}
\item Moreover, if $\eps$ is proper, the above map $\ad_\ast$ is
an isomorphism of graded differential algebras.
\end{itemize}
\end{theorem}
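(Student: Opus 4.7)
The overall plan is to imitate the proof of Theorem~\ref{thm-calcdiff}, with the role of $\ksl_\eps^\bullet(D)$ now played by the $\caZ_\eps^\bullet(\algA)$-module $\caI_\eps^\bullet(\algA)$. Three ingredients already in place will be central: the decomposition $\algrA = \caZ_\eps^\bullet(\algA) \oplus \caI_\eps^\bullet(\algA)$, the isomorphism $\ad \colon \caI_\eps^\bullet(\algA) \to \Der_\eps^\bullet(\algA)$ of $\caZ_\eps^\bullet(\algA)$-modules (valid in the fine finite-dimensional case since $\Out_\eps^\bullet(\algA) = \algzero$), and the preceding lemma describing an element of $\Omr^{n,\bullet}(\algA)$ by its coordinates on the family $\{\ad_{e_\alpha}\}_{\alpha \notin \Gamma_{\eps,\eps_\sigma}}$.

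For the first assertion I define $\ad_\ast$ via the push-forward formula $\ad_\ast(\omega)(\ad_{e_{\alpha_1}},\dots,\ad_{e_{\alpha_n}}) = \omega(e_{\alpha_1},\dots,e_{\alpha_n})$ for $\alpha_i \in \Supp(\algrA)\setminus\Gamma_{\eps,\eps_\sigma}$, extended by $\caZ_\eps^\bullet(\algA)$-multilinearity. Well-definedness, i.e.\ compatibility with the antisymmetry and $\caZ_\eps^\bullet(\algA)$-multilinearity axioms of \eqref{defcalcdiff}, is a direct consequence of the two identities on the coordinates $\omega_{\alpha_1,\dots,\alpha_n}$ given in the preceding lemma. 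Injectivity is immediate, and surjectivity follows by constructing a preimage of $\eta \in \Omr^{n,\bullet}(\algA)$ out of its coordinates on $\{\ad_{e_\alpha}\}$ via the inverse of $\ad$. Compatibility of $\ad_\ast$ with the products comes from the fact that \eqref{defprod} and \eqref{defprod2} are given by the same formal expression, so the check of $\ad_\ast(\omega \fois \eta) = \ad_\ast(\omega) \fois \ad_\ast(\eta)$ reduces, as in Theorem~\ref{thm-calcdiff}, to a bookkeeping verification involving only the $\caZ_\eps^\bullet(\algA)$-module structure, the product of $\algrA$, and the commutation factor $\widetilde{\eps}$.

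The third assertion, concerning the differential in the proper case, is then almost free. When $\eps$ is proper, $\caI_\eps^\bullet(\algA) = \ksl_\eps^\bullet(\algA)$ is stable under $[-,-]_\eps$ (the kernel of any $\eps$-trace is, by \eqref{deftrace}), hence an $\eps$-Lie algebra. Proposition~\ref{prop-gv} then equips $\Omega_{\eps,\caZ}^{\bullet,\bullet}(\caI_\eps(\algA),\algA)$ with a differential through \eqref{defdiff2}, whose formal expression coincides with \eqref{defdiff}, so the identity $\ad_\ast \circ \dd = \dd \circ \ad_\ast$ reduces to the fact that $\ad$ is a morphism of $\eps$-Lie algebras.

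The affine-space description of $\eps$-connections comes from the fact that $\algrA$, viewed as a right module over itself, has rank one: \eqref{defconn} forces $\nabla(a) = \nabla(\gone) a + \dd a$, so the assignment $\nabla \mapsto \nabla(\gone) \in \Omr^{1,0}(\algA)$ is an affine bijection containing the trivial connection $\dd$ (corresponding to $\omega = 0$). Using the isomorphism $\ad_\ast$ established above, an element of $\Omr^{1,0}(\algA)$ is equivalently a $\caZ_\eps^\bullet(\algA)$-linear map of bidegree $(1,0)$ from $\caI_\eps^\bullet(\algA)$ to $\algrA$; decomposing according to the $\Gamma$-grading of $\caI_\eps^\bullet(\algA) = \bigoplus_{\alpha\notin\Gamma_{\eps,\eps_\sigma}} \algA^\alpha$ yields the stated isomorphism with $\bigoplus_{\alpha\notin\Gamma_{\eps,\eps_\sigma}} \algA^\alpha \otimes_{\caZ_\eps(\algA)} (\algA^\alpha)^\ast$. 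The main obstacle I anticipate is combinatorial rather than conceptual: carefully verifying well-definedness and algebra-compatibility of $\ad_\ast$ in the \emph{non-proper} case, where $\caI_\eps^\bullet(\algA)$ has no $\eps$-Lie bracket to guide the identification; here the preceding lemma on coordinates of $\omega$ is the crucial input that makes the ``algebraic'' layer of the complex well-posed independently of the bracket structure.
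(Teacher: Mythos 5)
Your proposal is correct and follows essentially the same route as the paper, whose proof consists of the single remark that, all $\eps$-derivations being inner, one adapts the proof of Theorem~\ref{thm-calcdiff}; your elaboration (push-forward defined via the coordinate lemma, the splitting $\algrA=\caZ_\eps^\bullet(\algA)\oplus\caI_\eps^\bullet(\algA)$, the observation that $\caI_\eps^\bullet(\algA)=\ksl_\eps^\bullet(\algA)$ is bracket-closed precisely in the proper case, and $\nabla(a)=\nabla(\gone)a+\dd a$ for the affine structure) is exactly the intended adaptation.
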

\begin{proof}
Since all $\eps$-derivations are inner, one can adapt the proof of
Theorem \ref{thm-calcdiff}.
\end{proof}

\subsubsection{Concrete examples}
\label{subsub-concretfine}

Let us apply the latter results on simple examples of fine-graded
matrix algebras.
\begin{example}
\begin{itemize}
\item For $\Gamma=\algzero$, the only possibility of fine-graded complex matrix algebra is $\algrA=\gC$.
\item For $\Gamma=\gZ_2$, there is no fine-graded matrix algebra $\algrA$ such that $\Supp(\algrA)=\gZ_2$.
\end{itemize}
\end{example}

\begin{example}
For $\Gamma=\gZ_2\times\gZ_2$, the only possibility of fine-graded complex matrix algebra $\algrA$ such that $\Supp(\algrA)=\Gamma$, is $\algrA=\Matr_2$, the two by two complex matrix algebra.
\begin{itemize}
\item $\algA^{(0,0)}=\gC\gone$, and, up to a permutation, $\algA^{(1,0)}=\gC\tau_1$, $\algA^{(0,1)}=\gC\tau_2$ and $\algA^{(1,1)}=\gC\tau_3$, where the $\tau_i$ are the Pauli matrices:
\begin{equation*}
\tau_1=\begin{pmatrix} 0 & 1  \\ 1 & 0 \end{pmatrix},\qquad \tau_2=\begin{pmatrix} 0 & -i  \\ i & 0 \end{pmatrix},\qquad \tau_3=\begin{pmatrix} 1 & 0  \\ 0 & -1 \end{pmatrix}.
\end{equation*}
\item The factor set $\sigma$ associated to the algebra $\algrA$ is given by:
\begin{multline*}
\sigma((1,0),(0,1))=-\sigma((0,1),(1,0))=-\sigma((1,0),(1,1))\\
=\sigma((1,1),(1,0)) =\sigma((0,1),(1,1))=-\sigma((1,1),(0,1))=i,
\end{multline*}
for the non-trivial terms. The associated proper commutation factor is then: $\forall j,k\in\Gamma$,
\begin{equation*}
\eps_\sigma(j,k)=(-1)^{j_1k_2+j_2k_1}.
\end{equation*}
\item Since $\Gamma$ is a product of cyclic groups, and $\gC$ is a field of characteristic zero, there is no non-zero group morphism $\Gamma\to\gC$.
\item If $\algrA$ is endowed with its associated commutation factor $\eps_\sigma$, it is a color algebra. Then, by Corollary \ref{cor-fineresults}, one obtains:
\begin{equation*}
\caZ^\bullet_{\eps_\sigma}(\Matr_2)=\Matr_2,\qquad \Der^\bullet_{\eps_\sigma}(\Matr_2)=\Out^\bullet_{\eps_\sigma}(\Matr_2)=\algzero,
\end{equation*}
and the $\eps_\sigma$-traces are the linear maps $\Matr_2\to\gC$. Moreover, the differential calculus is trivial:
\begin{equation*}
\underline\Omega^{\bullet,\bullet}_{\eps_\sigma}(\Matr_2)=\underline\Omega^{0,\bullet}_{\eps_\sigma}(\Matr_2)=\Matr_2.
\end{equation*}
\item If $\algrA$ is endowed with the commutation factor $\eps(j,k)=(-1)^{j_1k_1+j_2k_2}$, it is a color superalgebra. Since $\Gamma_{\eps,\eps_\sigma}=\{(0,0),(1,1)\}$ and $R_{\eps,\eps_\sigma}=\{(1,0),(0,1)\}$, one has:
\begin{equation*}
\caZ_{\eps}^\bullet(\Matr_2)=\gC\gone\oplus\gC\tau_3,\qquad \Der_{\eps}^\bullet(\Matr_2)=\Int_\eps^\bullet(\Matr_2) =\gC\,\ad_{\tau_1}\oplus\gC\,\ad_{\tau_2},
\end{equation*}
and the $\eps$-traces are the linear maps $\caZ_{\eps}(\Matr_2)\to\gC$. One can also notice that $\gC\tau_1\oplus\gC\tau_2=\ksl^1(1,1)$ in the notations of Example \ref{ex-supermatrix}, so that the differential calculus writes:
\begin{equation*}
\underline\Omega^{\bullet,\bullet}_{\eps}(\Matr_2)=\Matr_2\otimes_{\ksl^0(1,1)}\Big(\symes^\bullet\ksl^1(1,1)^\ast\Big).
\end{equation*}
\end{itemize}
\end{example}

Note that it has been proved in \cite{Bahturin:2002} that every graded matrix algebra can be decomposed as the tensor product of an elementary graded matrix algebra and a fine graded matrix algebra, which are the two typical cases of graded matrix algebras (studied in subsection\ref{subsec-exelem} and \ref{subsec-exfine}). In this subsection, we have seen that fine graded matrix algebras are naturally related to commutation factors and that the theory of comparison of commutation factors permits to characterize properties of such algebras.

\subsection{Application to the Moyal algebra}
\label{subsec-moyal}

The Moyal algebra $\caM$ is a deformation quantization of $\gR^D$. Expanded on the matrix basis, the Moyal algebra can be seen as a generalization of finite-dimensional matrix algebras. Without considering all the possible gradings and commutation factors, we will construct a superalgebra $\algrA_\theta$ from $\caM$, study its properties, and apply it to explain mathematically a noncommutative gauge theory used in mathematical physics \cite{deGoursac:2007gq}.

\subsubsection{General properties of the Moyal algebra}

Here, we collect the main properties of the Moyal algebra which will be used in the subsequent discussion. For more details, see e.g \cite{GraciaBondia:1987kw,Varilly:1988jk}. Let $\caS\equiv\caS(\gR^D)$ with even dimension $D$ and $\caS^\prime$ be respectively the space of complex-valued Schwartz functions on $\gR^D$ and its topological dual, the space of tempered distributions on $\gR^D$. Consider a symplectic structure on $\gR^D$, represented by an invertible skew-symmetric $D\times D$ matrix $\Sigma$, and $\Theta=\theta \Sigma$, where the positive parameter $\theta$ has mass dimension $-2$. The Moyal product associated to $\Theta$ can be conveniently defined on $\caS\times\caS$ by: $\forall a,b\in\caS$,
\begin{equation}
(a\star b)(x)=\frac{1}{(\pi\theta)^D|\text{det}\Sigma|}\int d^Dyd^Dz\ a(x+y)b(x+z)e^{-2iy\Theta^{-1}z} \label{eq-moyal}
\end{equation}
such that $(a\star b)\in\caS$, where\footnote{We use in this subsection the implicit summation convention.} $\Theta.k\equiv\Theta_{\mu\nu}k_\nu$. The $\star$ product \eqref{eq-moyal} can be further extended to $\caS^\prime\times\caS$ upon using duality of linear spaces: $\langle T\star a,b \rangle = \langle T,a\star b\rangle$, $\forall T\in\caS^\prime$, $\forall a,b\in\caS$. In a similar way, \eqref{eq-moyal} can be extended to $\caS \times \caS^\prime$. Owing to the smoothening properties of \eqref{eq-moyal} together with the tracial identity $\int d^Dx\ (a\star b)(x)=\int d^Dx\ a(x).b(x)$, where the symbol ``.'' denotes the (commutative) usual pointwise product, it can be shown that $T\star a$ and $a \star T$ are smooth functions \cite{GraciaBondia:1987kw,Varilly:1988jk}. The Moyal algebra $\caM$ is then defined as
\begin{equation*}
\caM=\mathcal L\cap \mathcal R,
\end{equation*}
where $\mathcal L$ (resp. $\mathcal R$) is the subspace of $\caS^\prime$ whose multiplication from right (resp.\ left) by any Schwartz function is a subspace of $\caS$.

By denoting $\dag$ the complex conjugation, $(\caM,\star,{}^\dag)$ is a unital involutive topological algebra which involves in particular the ``coordinate'' functions $x_\mu$ satisfying the relation defined on $\caM$:
\begin{equation*}
[x_\mu,x_\nu]_\star =i\Theta_{\mu\nu},
\end{equation*}
where we set $[a,b]_\star=a\star b-b\star a$ and $\{a,b\}_\star=a\star b+b\star a$. Defining $\wx_\mu=2\Theta^{-1}_{\mu\nu}x_\nu$, other relevant properties of the $\star$-product that hold on $\caM$ and that will be used in the following are: $\forall a,b\in\caM$,
\begin{subequations}
\label{eq-relat-moyal}
\begin{align}
\partial_\mu(a\star b)&=\partial_\mu a\star b+a\star\partial_\mu b, \\
[\wx_\mu,a]_\star&=2i\partial_\mu a,\\
\{\wx_\mu,a\}_\star&=2\wx_\mu a,\\
[\wx_\mu\wx_\nu,a]_\star&=\frac i2\wx_\mu\partial_\nu a+\frac i2\wx_\nu\partial_\mu a.
\end{align}
\end{subequations}

\medskip

By defining the straightforward generalization of the $\phi^4$ theory on the euclidean Moyal space:
\begin{equation*}
S(\phi)=\int \dd^Dx\Big(\frac 12(\partial_\mu\phi)^2 +\frac{m^2}{2}\phi^2 +\lambda\phi\star\phi\star\phi\star\phi\Big),
\end{equation*}
one finds a new divergence, called Ultraviolet-Infrared (UV/IR) mixing \cite{Minwalla:1999px}, which spoils the renormalizability of the theory. A solution to this problem has been carried out by Grosse and Wulkenhaar by adding a harmonic term in the action:
\begin{equation}
S=\int \dd^Dx\Big(\frac 12(\partial_\mu\phi)^2+\frac{\Omega^2}{2}(\wx_\mu\phi)^2+\frac{m^2}{2}\phi^2+\lambda \phi^{\star 4}\Big),\label{actscalmoy}
\end{equation}
so that the theory becomes renormalizable to all orders in perturbation \cite{Grosse:2004yu}. A new gauge theory on the Moyal space has been exhibited in \cite{deGoursac:2007gq,Grosse:2007dm} from a one-loop effective action of the model \eqref{actscalmoy}:
\begin{equation}
S=\frac 14\int d^Dx\Big(F_{\mu\nu}\star F_{\mu\nu}+\Omega'^2\{\caA_\mu,\caA_\nu\}_\star^2+\kappa\caA_\mu\star\caA_\mu\Big),\label{actgaugemoy}
\end{equation}
where $F_{\mu\nu}=\partial_\mu A_\nu-\partial_\nu A_\mu-i[A_\mu,A_\nu]_\star$ and $\caA_\mu=A_\mu+\frac 12\wx_\mu$. This gauge action is indeed a good candidate for a renormalizable gauge theory on the Moyal space.

\subsubsection{\texorpdfstring{An $\eps$-graded algebra constructed from Moyal space}{An epsilon-graded algebra constructed from Moyal space}}
\label{subsub-moyalalg}

We construct here an $\eps$-graded algebra $\algrA_\theta$ (here a superalgebra) from the Moyal algebra $\caM$. We also study its $\eps$-center and exhibit a particular $\eps$-Lie subalgebra $\kg^\bullet$ of its inner derivations. From these considerations, we find as a classical scalar action the quadratic part of the renormalizable Grosse-Wulkenhaar model \eqref{actscalmoy}.

\begin{definition}
Let $\algrA_\theta$ be the $\gZ_2$-graded complex vector space defined by $\algA^0_\theta=\caM$ and $\algA^1_\theta=\caM$. Let us also introduce the following product on $\algrA_\theta=\algA^0_\theta\oplus\algA^1_\theta$: $\forall a,b,c,d\in\caM$,
\begin{equation}
(a,b)\fois (c,d)=(a\star c+\alpha\ b\star d,a\star d+b\star c),\label{defprodmoy}
\end{equation}
where $\alpha$ is a real parameter. Finally, for $i,j\in\gZ_2$, define the usual commutation factor: $\eps(i,j)=(-1)^{ij}$.
\end{definition}
\begin{proposition}
\label{prop-moy1}
The vector space $\algrA_\theta$, endowed with the product \eqref{defprodmoy} and the commutation factor $\eps$, is an $\eps$-graded algebra. The bracket of its associated $\eps$-Lie algebra (which is a superalgebra) has the following expression: for $\phi=(\phi_0,\phi_1)\in\algrA_\theta$ and $\psi=(\psi_0,\psi_1)\in\algrA_\theta$,
\begin{equation*}
[\phi,\psi]_\eps=([\phi_0,\psi_0]_\star+\alpha\{\phi_1,\psi_1\}_\star,[\phi_0,\psi_1]_\star+[\phi_1,\psi_0]_\star).
\end{equation*}
Moreover, $(\phi_0,\phi_1)^\ast=(\phi_0^\dag,\phi_1^\dag)$ is an involution for $\algrA_\theta$ ($\alpha\in\gR$).
\end{proposition}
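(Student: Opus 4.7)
I would organise the verification in three direct steps relying only on the associativity and involutivity of $(\caM,\star,\dag)$ together with the identities $\eps(0,0)=\eps(0,1)=\eps(1,0)=1$ and $\eps(1,1)=-1$.

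First, $(\algrA_\theta,\fois)$ is an associative unital $\gZ_2$-graded algebra. Decomposing a general element as $(a,b)=(a,0)+(0,b)$ with $(a,0)\in\algA^0_\theta$ and $(0,b)\in\algA^1_\theta$, \eqref{defprodmoy} reduces to the four homogeneous cases
\begin{equation*}
(a,0)\fois(c,0)=(a\star c,0),\qquad (a,0)\fois(0,d)=(0,a\star d),
\end{equation*}
\begin{equation*}
(0,b)\fois(c,0)=(0,b\star c),\qquad (0,b)\fois(0,d)=(\alpha\,b\star d,0),
\end{equation*}
from which $\algA^i_\theta\fois\algA^j_\theta\subset\algA^{i+j}_\theta$ (the last product lies in degree $0$, consistently with $1+1=0$ in $\gZ_2$). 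The unit is $\gone=(1,0)$. Associativity is a bookkeeping check: expanding both $((a,b)\fois(c,d))\fois(e,f)$ and $(a,b)\fois((c,d)\fois(e,f))$ via \eqref{defprodmoy} and invoking the associativity of $\star$ on $\caM$ yields the same expression in each slot; e.g.\ the degree-$0$ component of both sides equals $a\star c\star e + \alpha\,(a\star d\star f + b\star c\star f + b\star d\star e)$, and similarly for the degree-$1$ component.

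Next, I would compute $[\phi,\psi]_\eps=\phi\fois\psi-\eps(|\phi|,|\psi|)\,\psi\fois\phi$ on homogeneous elements: the even-even sector produces $([\phi_0,\psi_0]_\star,0)$; the mixed sectors (with $\eps=1$) yield $(0,[\phi_0,\psi_1]_\star)$ and $(0,[\phi_1,\psi_0]_\star)$; and in the odd-odd sector the sign $\eps(1,1)=-1$ turns the would-be commutator into an anticommutator, giving $(\alpha\,\{\phi_1,\psi_1\}_\star,0)$. Bilinear extension assembles these four contributions into precisely the formula stated in the proposition.

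Finally, the involution $\ast$ sends $\algA^i_\theta$ into $\algA^{-i}_\theta=\algA^i_\theta$ and is involutive because $\dag$ is on $\caM$. For antimultiplicativity, using $(u\star v)^\dag=v^\dag\star u^\dag$, one expands
\begin{equation*}
((a,b)\fois(c,d))^\ast=\bigl(c^\dag\star a^\dag+\bar\alpha\,d^\dag\star b^\dag,\ d^\dag\star a^\dag+c^\dag\star b^\dag\bigr),
\end{equation*}
\begin{equation*}
(c,d)^\ast\fois(a,b)^\ast=\bigl(c^\dag\star a^\dag+\alpha\,d^\dag\star b^\dag,\ c^\dag\star b^\dag+d^\dag\star a^\dag\bigr),
\end{equation*}
so the two coincide iff $\bar\alpha=\alpha$, i.e.\ $\alpha\in\gR$. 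There is no genuine obstacle: every step is a direct computation, and the only subtle point is tracking the sign $\eps(1,1)=-1$, which is exactly what forces the anticommutator --- and hence the parameter $\alpha$ --- to appear in the even component of the $\eps$-bracket.
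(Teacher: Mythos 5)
Your proof is correct and follows essentially the same route as the paper: the associativity check via the explicit expansion of $((a,b)\fois(c,d))\fois(e,f)$ is exactly the computation the paper gives, and the rest (grading, bracket, involution) is what the paper dismisses as "the other axioms are verified." You are in fact more complete than the paper, since you spell out the sector-by-sector bracket computation and the antimultiplicativity check that forces $\alpha=\bar\alpha$, i.e.\ $\alpha\in\gR$.
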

\begin{proof}
For example, let us check the associativity of the product: $\forall a,b,c,d,e,f\in\caM$,
\begin{align*}
&((a,b)\fois(c,d))\fois(e,f)\\
&=(a\star c\star e+\alpha(b\star d\star e+a\star d\star f+b\star c\star f), a\star c\star f+a\star d\star e+b\star c\star e+\alpha\ b\star d\star f)\\
&=(a,b)\fois((c,d)\fois(e,f)).
\end{align*}
In the same way, distributivity and the other axioms are verified. Notice that $\gone=(1,0)$ is the unit element of this algebra.
\end{proof}

Since the center of the Moyal algebra is trivial ($\caZ(\caM)=\gC$), we find the $\eps$-center of $\algrA_\theta$:
\begin{equation*}
\caZ_\eps^\bullet(\algA_\theta)=\gC\gone=\gC\oplus\algzero.
\end{equation*}
Furthermore, a (non-graded) trace of a subclass of this algebra is given by: $\forall\phi=(\phi_0,\phi_1)\in\algrA_\theta$,
\begin{equation*}
\tr(\phi)=\int d^Dx\ \phi_0(x).
\end{equation*}

Let us now define the following objects $\gamma=1$, $\xi_\mu=-\frac 12\wx_\mu$ and $\eta_{\mu\nu}=\frac12\wx_\mu\wx_\nu=2\xi_\mu\xi_\nu$, and give some calculation rules, deduced from \eqref{eq-relat-moyal}: $\forall\phi\in\caM$,
\begin{align}
[i\gamma,\phi]_\star=0,\qquad  \{i\gamma,\phi\}_\star=2i\gamma\phi,\nonumber\\
[i\xi_\mu,\phi]_\star=\partial_\mu\phi, \qquad \{i\xi_\mu,\phi\}_\star=2i\xi_\mu.\phi,\qquad [i\eta_{\mu\nu},\phi]_\star=\frac 12\xi_\mu\partial_\nu\phi+\frac 12\xi_\nu\partial_\mu\phi.\label{calcrules}
\end{align}
Consequently, the usual derivations of the Moyal algebra $\partial_\mu$ are inner and can be expressed in terms of $\xi_\mu$.
\begin{proposition}
\label{prop-moy2}
Then, $\ad_{(0,i\gamma)}$, $\ad_{(i\xi_\mu,0)}$, $\ad_{(0,i\xi_\mu)}$ and $\ad_{(i\eta_{\mu\nu},0)}$ are real $\eps$-derivations of $\algrA_\theta$ of respective degrees 1, 0, 1 and 0. Moreover, the vector space $\kg^\bullet$ generated by these $\eps$-derivations is an $\eps$-Lie subalgebra of $\Der_\eps^\bullet(\algA_\theta)$ and a right $\caZ_\eps^\bullet(\algA_\theta)$-module.
\end{proposition}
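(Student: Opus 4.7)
The plan is to deal with the three distinct assertions—being an $\eps$-derivation of the stated degree, reality, and closure of $\kg^\bullet$ both as an $\eps$-Lie subalgebra and as a $\caZ^\bullet_\eps(\algA_\theta)$-module—in this order, leaning as much as possible on the general machinery of Section~\ref{subsec-epsalg} to avoid explicit coordinate computations.

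First, since each of the four elements $(0,i\gamma),(i\xi_\mu,0),(0,i\xi_\mu),(i\eta_{\mu\nu},0)$ is manifestly homogeneous in $\algrA_\theta$ of the stated degrees $1,0,1,0$, the associated inner $\eps$-derivations $\ad_a$ are automatically $\eps$-derivations of the same degree by the general theory of inner $\eps$-derivations. Nothing has to be checked here.

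Second, for reality I will use the following general observation, which I would establish as a short preliminary computation in $\gZ_2$ with $\eps(i,j)=(-1)^{ij}$: for any homogeneous $a\in\algrA_\theta$, the inner derivation $\ad_a$ is real if and only if $a+a^\ast\in\caZ^\bullet_\eps(\algA_\theta)$. (Indeed, expanding $(\ad_a(b))^\ast$ with $(x\fois y)^\ast=y^\ast\fois x^\ast$ and using the hermitean identity $\overline{\eps(i,j)}=\eps(j,i)$, one gets $(\ad_a(b))^\ast=-\eps(|a|,|b|)\ad_{a^\ast}(b^\ast)$, which compared with $\eps(|b|,|a|)\ad_a(b^\ast)$ gives the claim.) Applied to the four generators and using $(\phi_0,\phi_1)^\ast=(\phi_0^\dag,\phi_1^\dag)$, the three cases $(0,i\gamma)$, $(i\xi_\mu,0)$, $(0,i\xi_\mu)$ satisfy $a^\ast=-a$ trivially, and for $(i\eta_{\mu\nu},0)$ one has $a^\ast=(-i\eta_{\nu\mu},0)$, so
\begin{equation*}
a+a^\ast=(i(\eta_{\mu\nu}-\eta_{\nu\mu}),0)=\bigl(\tfrac{i}{2}[\wx_\mu,\wx_\nu]_\star,0\bigr)\in\gC\gone=\caZ^\bullet_\eps(\algA_\theta),
\end{equation*}
the last equality coming from the easy commutator $[\wx_\mu,\wx_\nu]_\star=-4i\Theta^{-1}_{\mu\nu}$.

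Third, to show that $\kg^\bullet$ is an $\eps$-Lie subalgebra I use the identity $[\ad_a,\ad_b]_\eps=\ad_{[a,b]_\eps}$ (an immediate consequence of the $\eps$-Jacobi identity) to reduce the question to computing the bracket of Proposition~\ref{prop-moy1} on pairs of generators and checking, in each case, that the result is a $\gC$-linear combination of the generating elements, possibly modulo $\caZ^\bullet_\eps(\algA_\theta)=\gC\gone$ (where the inner $\eps$-derivation vanishes). Systematically working through the relevant pairs and applying the calculation rules \eqref{calcrules} together with $[\wx_\mu,\wx_\nu]_\star=-4i\Theta^{-1}_{\mu\nu}$, one finds for instance $[(0,i\gamma),(0,i\xi_\mu)]_\eps=(-2\alpha\xi_\mu,0)$ which reproduces $\ad_{(i\xi_\mu,0)}$ up to a scalar, $[(i\xi_\mu,0),(0,i\xi_\nu)]_\eps=(0,i\Theta^{-1}_{\mu\nu})$ yielding a multiple of $\ad_{(0,i\gamma)}$, and $[(0,i\xi_\mu),(0,i\xi_\nu)]_\eps$ proportional to $(i(\eta_{\mu\nu}+\eta_{\nu\mu}),0)$. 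The potentially most delicate brackets are those involving $(i\eta_{\mu\nu},0)$ with $(i\xi_\rho,0)$ or $(i\eta_{\rho\sigma},0)$; these reduce, via the commutation rules of \eqref{calcrules}, to expressions of the form $\xi_\bullet$ or $\eta_{\bullet\bullet}$, which again lie in the span of the generators. This verification is the main obstacle, being computational in nature but presenting no conceptual difficulty.

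Finally, since $\caZ^\bullet_\eps(\algA_\theta)=\gC\gone$, the $\caZ^\bullet_\eps(\algA_\theta)$-module structure on $\Der^\bullet_\eps(\algA_\theta)$ reduces to the obvious $\gC$-linear structure, for which the $\gC$-span $\kg^\bullet$ is trivially stable. This completes the outline of the proof.
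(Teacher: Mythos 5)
Your proposal is correct and follows essentially the same route as the paper: the heart of the argument in both cases is the identity $[\ad_a,\ad_b]_\eps=\ad_{[a,b]_\eps}$ together with an explicit check that all brackets of the four generators land back in the $\gC$-span of the generators modulo $\caZ^\bullet_\eps(\algA_\theta)=\gC\gone$ (the paper simply tabulates all ten brackets, and your reality criterion $a+a^\ast\in\caZ^\bullet_\eps(\algA_\theta)$ is a clean way to package what the paper leaves implicit). One harmless slip: $\eta_{\mu\nu}=\tfrac12\wx_\mu\wx_\nu$ is the pointwise (hence symmetric, real) product, so for $a=(i\eta_{\mu\nu},0)$ one simply has $a^\ast=-a$ rather than $a+a^\ast=(\tfrac{i}{2}[\wx_\mu,\wx_\nu]_\star,0)$; either way the sum lies in $\gC\gone$, so the conclusion is unaffected.
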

\begin{proof}
The following relations, computed from \eqref{calcrules},
\begin{subequations}
\begin{align*}
[(0,i\gamma),(0,i\gamma)]_\eps&=(-2\alpha,0)\\
[(i\xi_\mu,0),(0,i\gamma)]_\eps&=(0,0)\\
[(0,i\xi_\mu),(0,i\gamma)]_\eps&=(-2\alpha\xi_\mu,0)\\
[(i\eta_{\mu\nu}),(0,i\gamma)]_\eps&=(0,0)\\
[(i\xi_\mu,0),(i\xi_\nu,0)]_\eps&=(i\Theta^{-1}_{\mu\nu},0)\\
[(i\xi_\mu,0),(0,i\xi_\nu)]_\eps&=(0,i\Theta^{-1}_{\mu\nu}\gamma)\\
[(0,i\xi_\mu),(0,i\xi_\nu)]_\eps&=(-\alpha\eta_{\mu\nu},0)\\
[(i\eta_{\mu\nu},0),(i\xi_\rho,0)]_\eps&=(\frac i2\xi_\mu\Theta^{-1}_{\nu\rho}+\frac i2\xi_\nu\Theta^{-1}_{\mu\rho},0)\\
[(i\eta_{\mu\nu},0),(0,i\xi_\rho)]_\eps&=(0,\frac i2\xi_\mu\Theta^{-1}_{\nu\rho}+\frac i2\xi_\nu\Theta^{-1}_{\mu\rho})\\
[(i\eta_{\mu\nu},0),(i\eta_{\rho\sigma},0)]_\eps&=(\frac i2\eta_{\mu\rho}\Theta^{-1}_{\nu\sigma} +\frac i2\eta_{\mu\sigma}\Theta^{-1}_{\nu\rho} +\frac i2\eta_{\nu\rho}\Theta^{-1}_{\mu\sigma} +\frac i2\eta_{\nu\sigma}\Theta^{-1}_{\mu\rho},0),
\end{align*}
\end{subequations}
combined with $\forall a,b\in\algrA_\theta$, $[\ad_a,\ad_b]_\eps=\ad_{[a,b]_\eps}$, show that $\kg^\bullet$ is an $\eps$-Lie algebra.
\end{proof}
Notice that the vector space generated only by $\ad_{(i\xi_\mu,0)}$ and $\ad_{(0,i\xi_\mu)}$ is not an $\eps$-Lie subalgebra. Indeed, $\kg^\bullet$ is the smallest subalgebra of $\Der_\eps^\bullet(\algA_\theta)$ involving $\ad_{(i\xi_\mu,0)}$ and $\ad_{(0,i\xi_\mu)}$.

\medskip
We now make use of the previous construction of the $\eps$-graded algebra $\algrA_\theta$ to give a mathematical interpretation to the theory \eqref{actscalmoy}. In the notations of Propositions \ref{prop-moy1} and \ref{prop-moy2}, one can compute: $\forall\phi\in\caM$ real,
\begin{multline*}
\tr\Big(|[(i\xi_\mu,0),(\phi,\phi)]_\eps|^2  +|[(0,i\xi_\mu),(\phi,\phi)]_\eps|^2 +\frac{1}{\theta}|[(0,i\gamma),(\phi,\phi)]_\eps|^2\Big)\\
=\int d^Dx\Big((1+2\alpha)(\partial_\mu\phi)^2+\alpha^2(\wx_\mu\phi)^2+\frac{4\alpha^2}{\theta}\phi^2\Big),
\end{multline*}
where $|a|^2=a^\ast\fois a\in\algrA_\theta$ and we impose $\phi_0=\phi_1=\phi$. Setting $\Omega^2=\frac{\alpha^2}{1+2\alpha}$ and $m^2=\frac{4\alpha^2}{\theta(1+2\alpha)}$, we find that the latter expression is the quadratic part of the Grosse-Wulkenhaar action with harmonic term \eqref{actscalmoy}, which therefore seems to be related to $\Omr^{\bullet,\bullet}(\algA_\theta|\kg)$.

\begin{remark}
This stems to the fact that the graduation of the ``graded Moyal algebra'' $\algrA_\theta$ mimics the Langmann-Szabo duality \cite{Langmann:2002cc} which seems to play an important role in the renormalizability of this model. Indeed, the Langmann-Szabo duality ($\partial_\mu\rightleftarrows\wx_\mu$) can be related to ($i[\xi_\mu,.]_\star\rightleftarrows\{\xi_\mu,.\}_\star$) or to ($(i\xi_\mu,0)\rightleftarrows(0,i\xi_\mu)$) thanks to the following relations
\begin{align*}
[(i\xi_\mu,0),(\phi_0,\phi_1)]_\eps&=([i\xi_\mu,\phi_0]_\star,[i\xi_\mu,\phi_1]_\star),\\
[(0,i\xi_\mu),(\phi_0,\phi_1)]_\eps&=(i\alpha\{\xi_\mu,\phi_1\}_\star,[i\xi_\mu,\phi_0]_\star).
\end{align*}
and further assuming $\phi_0=\phi_1=\phi$. In fact, an exact correspondence between Langmann-Szabo duality and this grading exchange has been proven in \cite{deGoursac:2010zb}. Moreover, the previous formalism, contrary to the usual ($\partial_\mu\rightleftarrows\wx_\mu$) symmetry, can also be applied to gauge theory, as we will see in the next part.
\end{remark}

\subsubsection{\texorpdfstring{$\eps$-connections for this $\eps$-graded algebra}{epsilon-connections for this epsilon-graded algebra}}
\label{subsub-moyalconn}

Let us consider the differential calculus $\Omr^{\bullet,\bullet}(\algA_\theta|\kg)$ and apply the considerations of subsection~\ref{subsec-epsconn} for the module $\modrM=\algrA_\theta$, with the hermitian structure $\langle a,b\rangle=a^\ast\fois b$, $\forall a,b\in\algrA_\theta$. We describe here the $\eps$-connections and their curvatures in terms of the gauge potentials and the covariant coordinates we introduce. We also look at the action of gauge transformations on such objects. Finally, we recover the recently constructed candidate \eqref{actgaugemoy} for a renormalizable noncommutative gauge theory as an action built from the curvature presented in this subsection.

\begin{proposition}
Let $\nabla$ be a hermitian $\eps$-connection on $\algrA_\theta$. For $\kX\in\Der_\eps^\bullet(\algA_\theta)$, define the gauge potential associated to $\nabla$: $-iA_\kX=\nabla(\gone)(\kX)$. $A:\kX\mapsto A_\kX$ is in $\Omr^{1,0}(\algA_\theta|\kg)$ and if $\kX$ is real, $A_\kX$ is real too.

Then, the $\eps$-connection $\nabla$ takes the form: $\forall\kX\in\Der_\eps^\bullet(\algA_\theta)$, $\forall a\in\algrA_\theta$,
\begin{equation*}
\nabla_\kX a=\kX(a)-iA_\kX\fois a,
\end{equation*}
where $\nabla_\kX a=\eps(|\kX|,|a|)\nabla(a)(\kX)$. Defining $F_{\kX,\kY}\fois a=i\eps(|\kX|+|\kY|,|a|)R(a)(\kX,\kY)$, we obtain the curvature: $\forall\kX,\kY\in\Der_\eps^\bullet(\algA_\theta)$,
\begin{equation*}
F_{\kX,\kY}=\kX(A_\kY)-\eps(|\kX|,|\kY|)\kY(A_\kX)-i[A_\kX,A_\kY]_\eps-A_{[\kX,\kY]_\eps}.
\end{equation*}
\end{proposition}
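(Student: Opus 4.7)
The plan has four parts matching the four claims.

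First, I would verify that $A = i\,\nabla(\gone) \in \Omr^{1,0}(\algA_\theta|\kg)$: since $\nabla$ is homogeneous of degree $0$ and $|\gone|=0$, the map $\nabla(\gone)$ lies in $\Omr^{1,0}(\algA_\theta|\kg)$, and so does $A$ by $\gC$-linearity. For reality when $\kX$ is real, I would evaluate the hermiticity condition with $m=n=\gone$: using $\dd\gone = 0$ (which follows from $\kX(\gone)=0$ for any $\eps$-derivation), $\langle\gone,\gone\rangle = \gone$, and $\eps(0,k)=1$, the condition collapses to $A_\kX = \sum_k A_{\kX_k}^\ast = \bigl(\sum_k A_{\kX_k}\bigr)^\ast = A_\kX^\ast$.

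Second, for the formula $\nabla_\kX a = \kX(a) - iA_\kX \fois a$, I would apply the Leibniz rule \eqref{defconn} to $a = \gone \fois a$, giving $\nabla(a) = \nabla(\gone) \fois a + \dd a$. Evaluation on $\kX$ requires the specialization of \eqref{defprod} to $p=1,q=0$, namely $(\omega \fois a)(\kX) = \eps(|a|,|\kX|)\,\omega(\kX)\fois a$, together with the low-degree expression $\dd a(\kX) = \eps(|a|,|\kX|)\kX(a)$. These combine into $\nabla(a)(\kX) = \eps(|a|,|\kX|)(\kX(a) - iA_\kX \fois a)$, and the stated formula follows from the definition $\nabla_\kX a = \eps(|\kX|,|a|)\nabla(a)(\kX)$ together with $\eps(|\kX|,|a|)\eps(|a|,|\kX|)=1$.

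Third, for the curvature, I would iterate the $\nabla_\kX$-formula twice and expand $\kX(A_\kY \fois a)$ via the $\eps$-Leibniz rule $\kX(A_\kY \fois a) = \kX(A_\kY)\fois a + \eps(|\kX|,|\kY|) A_\kY \fois \kX(a)$ (symmetrically for $\kY$). Forming the combination $\nabla_\kX\nabla_\kY a - \eps(|\kX|,|\kY|)\nabla_\kY\nabla_\kX a - \nabla_{[\kX,\kY]_\eps} a$, the $\kX\kY$-type terms assemble into $[\kX,\kY]_\eps(a)$ and cancel the corresponding piece of $\nabla_{[\kX,\kY]_\eps}a$; the $A_\kX \fois \kY(a)$ and $A_\kY \fois \kX(a)$ contributions cancel pairwise thanks to $\eps(|\kX|,|\kY|)\eps(|\kY|,|\kX|) = 1$; and the remainder reorganizes into $-i\bigl(\kX(A_\kY) - \eps(|\kX|,|\kY|)\kY(A_\kX) - i[A_\kX, A_\kY]_\eps - A_{[\kX,\kY]_\eps}\bigr) \fois a$. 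To identify this with $-iF_{\kX,\kY}\fois a$ in the normalization of the proposition, I would independently expand $R(a)(\kX,\kY)$ from \eqref{defcurv} using $\nabla(a)(\kX) = \eps(|\kX|,|a|)\nabla_\kX a$ on each entry, collect the $\eps$-factors, and obtain $\eps(|\kX|+|\kY|,|a|)\,R(a)(\kX,\kY) = \nabla_\kX\nabla_\kY a - \eps(|\kX|,|\kY|)\nabla_\kY\nabla_\kX a - \nabla_{[\kX,\kY]_\eps} a$, which is compatible with the definition $F_{\kX,\kY}\fois a = i\eps(|\kX|+|\kY|,|a|)R(a)(\kX,\kY)$.

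The principal technical obstacle is the bookkeeping of commutation factors in this last identification. The iterated-Leibniz form of the curvature produces an $\eps$-commutator $\nabla_\kX\nabla_\kY - \eps(|\kX|,|\kY|)\nabla_\kY\nabla_\kX$ indexed only by derivation degrees, whereas the covariant definition $R = \nabla^2$ carries an additional $\eps$-shift by $|a|$, because the extension of $\nabla$ to $\Omr^{\bullet,\bullet}$ accesses $\nabla_\kX a$ through $\nabla(a)(\kX) = \eps(|\kX|,|a|)\nabla_\kX a$. In the $\gZ_2$-graded Moyal setup at hand the identity $\eps(i,j)^2 = 1$ makes the two viewpoints agree exactly; in a more general $\eps$-graded framework a correction factor in the definition of $F$ would be required.
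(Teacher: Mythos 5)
Your proof is correct and follows essentially the same route as the paper: the Leibniz rule applied to $a=\gone\fois a$ together with the module action and the low-degree formula for $\dd$ gives the connection formula, and iterating it recovers the curvature (the paper organizes the curvature computation around \eqref{defcurv} and $\dd^2=0$ rather than around the operators $\nabla_\kX$, but the bookkeeping is the same); you also supply the verification of the first claims, which the paper's proof omits.

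One correction to your closing caveat: no extra factor would be needed in a general $\eps$-graded framework. For any commutation factor one has
\begin{equation*}
\eps(|\kX|+|\kY|,|a|)\,R(a)(\kX,\kY)=\nabla_\kX\nabla_\kY a-\eps(|\kX|,|\kY|)\nabla_\kY\nabla_\kX a-\nabla_{[\kX,\kY]_\eps}a,
\end{equation*}
because the $|a|$-shifts produced by $\nabla(\omega)(\kX)=\eps(|\kX|,|\omega|)^{-1}\nabla_\kX\omega$ combine, via the axiom $\eps(i,j)\eps(j,i)=1_\gK$ alone, into exactly the prefactor $\eps(|a|,|\kX|+|\kY|)=\eps(|\kX|+|\kY|,|a|)^{-1}$ already present in the definition of $F_{\kX,\kY}$; the identity $\eps(i,j)^2=1$ special to the $\gZ_2$ case is never used.
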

\begin{proof}
Indeed, for $\kX\in\Der_\eps^\bullet(\algA_\theta)$ and $a\in\algrA_\theta$,
\begin{equation*}
\nabla(a)(\kX)=\nabla(\gone\fois a)(\kX)=\nabla(\gone)(\kX)+\gone\fois \dd a(\kX).
\end{equation*}
The curvature simplifies as:
\begin{align*}
R(a)(\kX,\kY)=&\eps(|\kX|,|\kY|)\nabla(\nabla(a)(\kY))(\kX)-\nabla(\nabla(a)(\kX))(\kY)-\nabla(a)([\kX,\kY]_\eps)\\
=&\eps(|\kX|,|\kY|)\dd(\dd a(\kY))(\kX)-\dd(\dd a(\kX))(\kY)-\dd a([\kX,\kY]_\eps)\\
&-i\eps(|\kX|+|a|,|\kY|)\dd(A_\kY\fois a)(\kX)-i\eps(|a|,|\kX|)A_\kX\fois\dd a(\kY)-\eps(|a|,|\kX|+|\kY|)A_\kX\fois A_\kY\fois a\\
&+i\eps(|a|,|\kX|)\dd(A_\kX\fois a)(\kY)+i\eps(|\kX|+|a|,|\kY|)A_\kY\fois\dd a(\kX)\\
&+\eps(|\kX|,|\kY|)\eps(|a|,|\kX|+|\kY|)A_\kY\fois A_\kX\fois a+i\eps(|a|,|\kX|+|\kY|)A_{[\kX,\kY]_\eps}\fois a.
\end{align*}
By using
\begin{equation}
0=\dd^2a(\kX,\kY)=\eps(|\kX|,|\kY|)\dd(\dd a(\kY))(\kX)-\dd(\dd a(\kX))(\kY)-\dd a([\kX,\kY]_\eps),\label{d2ident}
\end{equation}
we find:
\begin{equation*}
R(a)(\kX,\kY)=\eps(|a|,|\kX|+|\kY|)(-i\kX(A_\kY)+i\eps(|\kX|,|\kY|)\kY(A_\kX)-[A_\kX,A_\kY]_\eps+iA_{[\kX,\kY]_\eps})\fois a.
\end{equation*}
\end{proof}

Therefore, we set:
\begin{subequations}
\label{potmoy}
\begin{align}
\nabla(\gone)(\ad_{(i\xi_\mu,0)})&=(-iA_\mu^0,0),&
\nabla(\gone)(\ad_{(0,i\xi_\mu)})&=(0,-iA_\mu^1),\\
\nabla(\gone)(\ad_{(0,i\gamma)})&=(0,-i\varphi),&
\nabla(\gone)(\ad_{(i\eta_{\mu\nu},0)})&=(-iG_{\mu\nu},0).
\end{align}
\end{subequations}
The associated curvature can then be expressed as:
\begin{subequations}
\label{subeq-moy1}
\begin{align}
F_{(0,i\gamma),(0,i\gamma)}&=(2i\alpha\varphi-2i\alpha\varphi\star\varphi,0)\\
F_{(i\xi_\mu,0),(0,i\gamma)}&=(0,\partial_\mu\varphi-i[A_\mu^0,\varphi]_\star)\\
F_{(0,i\xi_\mu),(0,i\gamma)}&=(-i\alpha\wx_\mu\varphi-i\alpha\{A_\mu^1,\varphi\}_\star+2i\alpha(A_\mu^1-A_\mu^0),0)\\
F_{(i\eta_{\mu\nu},0),(0,i\gamma)}&=(0,-\frac 14\wx_\mu\partial_\nu\varphi-\frac 14\wx_\nu\partial_\mu\varphi-i[G_{\mu\nu},\varphi]_\star)\\
F_{(i\xi_\mu,0),(i\xi_\nu,0)}&=(\partial_\mu A_\nu^0-\partial_\nu A_\mu^0-i[A_\mu^0,A_\nu^0]_\star,0)\\
F_{(i\xi_\mu,0),(0,i\xi_\nu)}&=(0,\partial_\mu A_\nu^1-\partial_\nu A_\mu^0-i[A_\mu^0,A_\nu^1]_\star-\Theta_{\mu\nu}^{-1}\varphi)\\
F_{(0,i\xi_\mu),(0,i\xi_\nu)}&=(-i\alpha\wx_\mu A_\nu^1-i\alpha\wx_\nu A_\mu^1-i\alpha\{A_\mu^1,A_\nu^1\}_\star-i\alpha G_{\mu\nu},0)\\
F_{(i\xi_\mu,0),(i\eta_{\nu\rho},0)}&=(\partial_\mu G_{\nu\rho}+\wx_\nu\partial_\rho A_\mu^0+\wx_\rho\partial_\nu A_\mu^0-i[A_\mu^0,G_{\nu\rho}]_\star+2\Theta^{-1}_{\nu\mu}A_\rho^0+2\Theta^{-1}_{\rho\mu}A_\nu^0,0)\\
F_{(0,i\xi_\mu),(i\eta_{\nu\rho},0)}&=(0,\partial_\mu G_{\nu\rho}+\wx_\nu\partial_\rho A_\mu^1+\wx_\rho\partial_\nu A_\mu^1-i[A_\mu^1,G_{\nu\rho}]_\star+2\Theta^{-1}_{\nu\mu}A_\rho^1+2\Theta^{-1}_{\rho\mu}A_\nu^1)\\
F_{(i\eta_{\mu\nu},0),(i\eta_{\rho\sigma},0)}&=(-\wx_\mu\partial_\nu G_{\rho\sigma}-\wx_\nu\partial_\mu G_{\rho\sigma}+\wx_\rho\partial_\sigma G_{\mu\nu}+\wx_\sigma\partial_\rho G_{\mu\nu}-i[G_{\mu\nu},G_{\rho\sigma}]_\star\nonumber\\
&-2\Theta^{-1}_{\mu\rho}G_{\nu\sigma}-2\Theta^{-1}_{\nu\rho}G_{\mu\sigma}-2\Theta^{-1}_{\mu\sigma}G_{\nu\rho} -2\Theta^{-1}_{\nu\sigma}G_{\mu\rho},0),
\end{align}
\end{subequations}
where we have suppressed the $ad$'s in the indices of $F$ to simplify the notations.

\begin{proposition}
The unitary gauge transformations $\Phi$ of $\algrA_\theta$ are completely determined by $g=\Phi(\gone)$: $\forall a\in\algrA_\theta$, $\Phi(a)=g\fois a$, and $g$ is a unitary element of $\algrA_\theta$ of degree 0: $g^\ast\fois g=\gone$. Then, this gauge transformation acts on the gauge potential and the curvature as: $\forall\kX,\kY\in\Der_\eps^\bullet(\algA_\theta)$,
\begin{align}
A_\kX^g &=g\fois A_\kX\fois g^\ast+ig\fois \kX(g^\ast),\label{gtpot}\\
F_{\kX,\kY}^g &=g\fois F_{\kX,\kY}\fois g^\ast.\label{gtcurv}
\end{align}
\end{proposition}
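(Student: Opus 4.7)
The plan is as follows. First I would unfold the definition of a gauge transformation: $\Phi \in \Aut^0_\algA(\modM,\modM)$ is a degree-$0$ right $\algrA_\theta$-module automorphism, so by right linearity $\Phi(a) = \Phi(\gone \fois a) = \Phi(\gone)\fois a$, giving $\Phi(a) = g \fois a$ with $g := \Phi(\gone)$. Bijectivity forces $g$ to be invertible in $\algrA_\theta$, and $|g|=0$ because $\Phi$ has degree $0$. The unitarity condition $\langle\Phi(a),\Phi(b)\rangle = \langle a,b\rangle$ evaluated on the hermitean pairing $\langle a, b\rangle = a^\ast \fois b$ then gives $a^\ast \fois g^\ast \fois g \fois b = a^\ast \fois b$ for all $a,b \in \algrA_\theta$, whence $g^\ast \fois g = \gone$; in particular $g^{-1} = g^\ast$.

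Next, for the gauge potential, I would evaluate $\nabla^\Phi = \Phi \circ \nabla \circ \Phi^{-1}$ at $\gone$. Since $\Phi^{-1}(\gone) = g^\ast$, the Leibniz rule \eqref{defconn} applied to $g^\ast = \gone \fois g^\ast$ gives $\nabla(g^\ast) = \nabla(\gone) \fois g^\ast + \dd g^\ast$. Evaluating at a homogeneous $\kX \in \Der^\bullet_\eps(\algA_\theta)$, the product formula for a $1$-form with the degree-$0$ element $g^\ast$ reduces to ordinary multiplication on the right, and $\dd g^\ast(\kX) = \kX(g^\ast)$ since $|g^\ast|=0$; one obtains $\nabla(g^\ast)(\kX) = -iA_\kX \fois g^\ast + \kX(g^\ast)$. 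Applying $\Phi$ (i.e. left-multiplying by $g$) and identifying the result with $-iA^g_\kX = \nabla^\Phi(\gone)(\kX)$ then yields formula \eqref{gtpot}.

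For the curvature, I would exploit the fact, proved in subsection~\ref{subsec-epsconn}, that $R$ is a homomorphism of right $\algrA_\theta$-modules. Writing $R^\Phi(\gone) = \Phi(R(\Phi^{-1}(\gone))) = g \fois R(g^\ast) = g \fois R(\gone) \fois g^\ast$, evaluating at $(\kX, \kY)$ (all $\eps$-factors again trivial because $|g^\ast|=0$), and substituting $R(\gone)(\kX,\kY) = -iF_{\kX,\kY}$ (which follows from $F_{\kX,\kY} \fois a = i\eps(|\kX|+|\kY|,|a|)R(a)(\kX,\kY)$ at $a=\gone$) immediately gives \eqref{gtcurv}. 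The only subtle point in the whole argument will be keeping track of the commutation factors, but this is rendered painless by the observation that $g$, $g^\ast$, and $\gone$ all sit in degree $0$, so every relevant $\eps$-factor collapses to $1$ and the computation essentially reduces to the classical one.
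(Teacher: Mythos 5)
Your proposal is correct and follows essentially the same route as the paper: the paper likewise computes $\nabla^g_\kX a = g\fois(\kX(g^\ast\fois a)-iA_\kX\fois g^\ast\fois a) = \kX(a)+g\fois\kX(g^\ast)\fois a-ig\fois A_\kX\fois g^\ast\fois a$ and reads off \eqref{gtpot} by comparison with $\nabla^g_\kX a=\kX(a)-iA^g_\kX\fois a$, all $\eps$-factors being trivial because $|g|=|g^\ast|=0$. The only (minor) divergence is for \eqref{gtcurv}, where the paper just asserts a ``similar'' direct computation while you invoke the already-established fact that $R$ is a right $\algrA_\theta$-module homomorphism to write $R^\Phi(\gone)=g\fois R(\gone)\fois g^\ast$ — a slightly cleaner shortcut, but not a different argument in substance.
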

\begin{proof}
$\forall \kX\in\Der_\eps^\bullet(\algA_\theta)$ and $\forall a\in\algrA_\theta$,
\begin{align*}
\nabla_\kX^g a=&g\fois (\kX(g^\ast\fois a)-iA_\kX\fois g^\ast\fois a)\\
=& \kX(a)+g\fois\kX(g^\ast)\fois a-ig\fois A_\kX\fois g^\ast\fois a.
\end{align*}
Since $\nabla^g_\kX a=\kX(a)-iA_\kX^g\fois a$, we obtain the result \eqref{gtpot}. The proof is similar for \eqref{gtcurv}.
\end{proof}

In the following, we will denote the gauge transformations (of degree 0) by $(g,0)$, where $g\in\caM$ and $g^\dag\star g=1$. Then, the gauge potentials \eqref{potmoy} transform as:
\begin{subequations}
\label{subeq-gtpot}
\begin{align}
(A_\mu^0,0)^g&=(g\star A_\mu^0\star g^\dag+ig\star\partial_\mu g^\dag,0),\label{gtpot1}\\
(0,A_\mu^1)^g&=(0,g\star A_\mu^1\star g^\dag+ig\star\partial_\mu g^\dag),\label{gtpot2}\\
(0,\varphi)^g&=(0,g\star\varphi\star g^\dag),\label{gtpot0}\\
(G_{\mu\nu},0)^g&=(g\star G_{\mu\nu}\star g^\dag-\frac i4g\star(\wx_\mu\partial_\nu g^\dag)-\frac i4g\star(\wx_\nu\partial_\mu g^\dag),0).\label{gtpot3}
\end{align}
\end{subequations}

We introduce the following covariant coordinates:
\begin{subequations}
\begin{align*}
\caA_\mu^0&=A_\mu^0+\frac 12\wx_\mu,&
\caA_\mu^1&=A_\mu^1+\frac 12\wx_\mu,\\
\Phi&=\varphi-1,&
\caG_{\mu\nu}&=G_{\mu\nu}-\frac 12\wx_\mu\wx_\nu,
\end{align*}
\end{subequations}
so that the curvature takes the new form:
\begin{subequations}
\label{subeq-moy2}
\begin{align}
F_{(0,i\gamma),(0,i\gamma)}&=(2i\alpha-2i\alpha\Phi\star\Phi,0)\\
F_{(i\xi_\mu,0),(0,i\gamma)}&=(0,-i[\caA_\mu^0,\Phi]_\star)\\
F_{(0,i\xi_\mu),(0,i\gamma)}&=(-i\alpha\{\caA_\mu^1,\Phi\}_\star-2i\alpha\caA_\mu^0,0)\\
F_{(i\eta_{\mu\nu},0),(0,i\gamma)}&=(0,-i[\caG_{\mu\nu},\Phi]_\star)\\
F_{(i\xi_\mu,0),(i\xi_\nu,0)}&=(\Theta^{-1}_{\mu\nu}-i[\caA_\mu^0,\caA_\nu^0]_\star,0)\\
F_{(i\xi_\mu,0),(0,i\xi_\nu)}&=(0,-i[\caA_\mu^0,\caA_\nu^1]_\star-\Theta^{-1}_{\mu\nu}\Phi)\\
F_{(0,i\xi_\mu),(0,i\xi_\nu)}&=(-i\alpha\{\caA_\mu^1,\caA_\nu^1\}_\star-i\alpha\caG_{\mu\nu},0)\\
F_{(i\xi_\mu,0),(i\eta_{\nu\rho},0)}&=(-i[\caA_\mu^0,\caG_{\nu\rho}]_\star+2\Theta^{-1}_{\nu\mu}\caA_\rho^0 +2\Theta^{-1}_{\rho\mu}\caA_\nu^0,0)\\
F_{(0,i\xi_\mu),(i\eta_{\nu\rho},0)}&=(0,-i[\caA_\mu^1,\caG_{\nu\rho}]_\star+2\Theta^{-1}_{\nu\mu}\caA_\rho^1 +2\Theta^{-1}_{\rho\mu}\caA_\nu^1)\\
F_{(i\eta_{\mu\nu},0),(i\eta_{\rho\sigma},0)}&=(-i[\caG_{\mu\nu},\caG_{\rho\sigma}]_\star-2\Theta^{-1}_{\mu\rho}\caG_{\nu\sigma} -2\Theta^{-1}_{\nu\rho}\caG_{\mu\sigma}-2\Theta^{-1}_{\mu\sigma}\caG_{\nu\rho}-2\Theta^{-1}_{\nu\sigma}\caG_{\mu\rho},0).
\end{align}
\end{subequations}

\medskip

In the notations of \eqref{subeq-moy1} and \eqref{subeq-moy2}, we will construct a gauge invariant action by taking the trace of the square of a (graded) curvature. Note that due to \eqref{subeq-gtpot}, the field $\varphi$ can be interpreted as a scalar field (in the adjoint representation), while $A_\mu^0$ and $A_\mu^1$ are usual gauge potentials on the Moyal space. Let us identify $A_\mu^0=A_\mu^1=A_\mu$ for the computation of the action. Note also that for dimensional consistency, one has to consider $i\sqrt{\theta}\eta_{\mu\nu}$ rather than $i\eta_{\mu\nu}$ in the action, $\frac{i}{\sqrt{\theta}}\gamma$ rather than $i\gamma$, and to rescale conveniently the fields $\varphi$ and $G_{\mu\nu}$. Taking for the moment $\caG_{\mu\nu}=0$ and $\Phi=0$, which can be done because these fields transform covariantly, we obtain the following action, up to constant terms:
\begin{multline}
\tr\Big(|F_{\ad_a,\ad_b}|^2\Big)= \\
\int d^Dx\Big((1+2\alpha)F_{\mu\nu}\star F_{\mu\nu}+\alpha^2\{\caA_\mu,\caA_\nu\}_\star^2+\frac{8}{\theta}(2(D+1)(1+\alpha)+\alpha^2)\caA_\mu\star\caA_\mu\Big),\label{actgaugeresult}
\end{multline}
where there is an implicit summation on $a,b\in\{(0,\frac{i}{\sqrt{\theta}}\gamma),(i\xi_\mu,0),(0,i\xi_\mu),(i\sqrt{\theta}\eta_{\mu\nu},0)\}$. We easily recognize the action \eqref{actgaugemoy}. When taking\footnote{We recall that $\Phi=\varphi-1$.} $\Phi\neq0$ and $\caG_{\mu\nu}=0$, we obtain an additional real scalar field action coupled to gauge fields:
\begin{multline}
S(\varphi)=\int d^Dx\Big(2\alpha|\partial_\mu\varphi-i[A_\mu,\varphi]_\star|^2+2\alpha^2|\wx_\mu\varphi+\{A_\mu,\varphi\}_\star|^2-4\alpha\sqrt{\theta}\varphi\Theta_{\mu\nu}^{-1}F_{\mu\nu}\\
+\frac{2\alpha(D+2\alpha)}{\theta}\varphi^2-\frac{8\alpha^2}{\sqrt{\theta}}\varphi\star\varphi\star\varphi+4\alpha^2\varphi\star\varphi\star\varphi\star\varphi \Big). \label{acthiggs}
\end{multline}
A part of this action can be interpreted as a Higgs action coupled to gauge fields, with harmonic term \eqref{actscalmoy} and a more general positive potential term built from the $\star$-polynomial part involving $\varphi$. Moreover, this action is not Langmann-Szabo covariant (cubic term), but related to the generalization of this duality discussed below.

Notice that the gauge fields $A_\mu$ are already massive (see the last term in \eqref{actgaugeresult}). Furthermore, the action \eqref{acthiggs} involves a BF-like \cite{Wallet:1989wr,Birmingham:1991} term $\int \varphi\Theta^{-1}_{\mu\nu}F_{\mu\nu}$ similar to the one introduced by Slavnov \cite{Slavnov:2003ae} in the simplest noncommutative extension of the Yang-Mills theory on Moyal space.

\subsubsection{Discussion on the gauge theory}

We have therefore interpreted both actions \eqref{actscalmoy} and \eqref{actgaugemoy} within the formalism introduced in this subsection for the superalgebra $\algrA_\theta$. Some comments concerning the classical action \eqref{actgaugemoy} are now in order.

First, it is not surprising that we obtain a Higgs field in this theory as a part of a connection. This was already the case e.g. for spectral triples approach of the Standard Model \cite{Chamseddine:2006ep} and for the models stemming from the derivation-based differential calculus of \cite{DuboisViolette:1989vq}. Notice that a somewhat similar interpretation of covariant coordinates as Higgs fields in the context of gauge theory models on Moyal algebras has been also given in \cite{Cagnache:2008tz}. Moreover, an additional BF-like term in \eqref{acthiggs} appears in the present situation. Such a Slavnov term has been shown \cite{Slavnov:2003ae}  to improve the IR (and UV) behavior of the following action
\begin{equation}
\int d^Dx\Big((1+2\alpha)F_{\mu\nu}\star F_{\mu\nu}-4\alpha\sqrt{\theta}\varphi\Theta_{\mu\nu}^{-1}F_{\mu\nu}\Big), \label{actBF}
\end{equation}
when the field $\varphi$ is not dynamical \cite{Slavnov:2003ae}. The corresponding impact on the UV/IR mixing of the full action in the present situation remains to be analyzed.

In view of the discussion given in part \ref{subsub-moyalalg}, the grading of the $\eps$-associative algebra $\algrA$ mimics the Langmann-Szabo duality in the scalar case. As far as the gauge theory built from the square of the curvature is concerned, this is reflected in particular into the action \eqref{actgaugeresult} (The $\caG_{\mu\nu}=\Phi=0$ part), which has been found as an effective action 
\cite{deGoursac:2007gq, Grosse:2007dm}. As a consequence, one can view this grading as a generalization of the Langmann-Szabo duality to the scalar and gauge case. Observe that in \eqref{actgaugeresult}, the part $(\{\caA_\mu,\caA_\nu\}_\star-\frac14\{\wx_\mu,\wx_\nu\}_\star)$ might be viewed as the symmetric counterpart of the usual antisymmetric curvature $F_{\mu\nu}=\frac i4[\wx_\mu,\wx_\nu]_\star-i[\caA_\mu,\caA_\nu]_\star$. Then, the general action $\tr\Big(|F_{\ad_a,\ad_b}|^2\Big)$ involves the terms
\begin{equation}
\int d^Dx\Big(\alpha^2(\{\caA_\mu,\caA_\nu\}_\star-\frac 12\wx_\mu\wx_\nu)^2+ 2\alpha^2G_{\mu\nu}(\{\caA_\mu,\caA_\nu\}_\star-\frac 12\wx_\mu\wx_\nu)\Big),%\label{actBFsym}
\end{equation}
which can also be seen as the symmetric counterpart of the BF action \eqref{actBF}, if we interpret $G_{\mu\nu}$ as the ``symmetric'' analog of the BF multiplier $\Theta_{\mu\nu}^{-1}\varphi$.

One can observe further that the gauge action built from the square of the curvature has a trivial vacuum (and in particular $A_\mu=0$) which avoid therefore the difficult problem \cite{deGoursac:2007qi} to deal with the non trivial vacuum configurations of \eqref{actgaugeresult} that have been determined in \cite{deGoursac:2008rb}.

Thanks to the formalism of derivation-based differential calculus and $\eps$-connections developed in section \ref{sec-ncg}, we have thus given a mathematical interpretation of the theory \eqref{actgaugemoy} as constructed from a graded curvature.

\vskip 1 true cm

\bibliographystyle{utcaps}
\bibliography{biblio-these,biblio-perso,biblio-recents}

\end{document}